\documentclass[
    a4paper, % Page size
    fontsize=10pt, % Base font size
    twoside=true, % Use different layouts for even and odd pages (in particular, if twoside=true, the margin column will be always on the outside)
	%open=any, % If twoside=true, uncomment this to force new chapters to start on any page, not only on right (odd) pages
	%chapterentrydots=true, % Uncomment to output dots from the chapter name to the page number in the table of contents
	numbers=noenddot, % Comment to output dots after chapter numbers; the most common values for this option are: enddot, noenddot and auto (see the KOMAScript documentation for an in-depth explanation)
]{kaobook}

\usepackage{tikz}

%----------------------------------------------------------------------------------------
%	PACKAGES AND OTHER DOCUMENT CONFIGURATIONS
%----------------------------------------------------------------------------------------

%%%%%%%%%%%%%%%%%% MY OWN PACKAGES START %%%%%%%%%%%%%%%%%%
\usepackage{blindtext} % Lorem Ipsum

\definecolor{codebg}{HTML}{E9E9E9}
\usepackage{siunitx} % Units
\sisetup{group-digits=integer}
\usepackage{braket} % Braket math notation

\usetikzlibrary{arrows}
\usetikzlibrary{fit}
\usetikzlibrary{positioning}

\usepackage{tikzit}
% TiKZ style file generated by TikZiT. You may edit this file manually,
% but some things (e.g. comments) may be overwritten. To be readable in
% TikZiT, the only non-comment lines must be of the form:
% \tikzstyle{NAME}=[PROPERTY LIST]

% Node styles
\tikzstyle{Z dot}=[inner sep=0mm, minimum size=2mm, shape=circle, draw=black, fill={rgb,255: red,221; green,255; blue,221}, tikzit category=zx]
\tikzstyle{Z phase dot}=[minimum size=1.2em, font={\footnotesize\boldmath}, shape=rectangle, rounded corners=0.5em, inner sep=0.2em, outer sep=-0.2em, scale=0.8, draw=black, fill={rgb,255: red,221; green,255; blue,221}, tikzit shape=circle, tikzit draw=blue, tikzit category=zx]
\tikzstyle{Z box}=[Z phase dot, rounded corners=0, fill={rgb,255: red,221; green,255; blue,221}, tikzit shape=rectangle, tikzit draw=blue, tikzit category=zx]
\tikzstyle{X dot}=[Z dot, shape=circle, draw=black, fill={rgb,255: red,232; green,165; blue,165}, tikzit category=zx]
\tikzstyle{X phase dot}=[Z phase dot, tikzit shape=circle, tikzit draw=blue, fill={rgb,255: red,232; green,165; blue,165}, font={\footnotesize\boldmath}, tikzit category=zx]
\tikzstyle{X box}=[Z phase dot, rounded corners=0, fill={rgb,255: red,232; green,165; blue,165}, tikzit shape=rectangle, tikzit draw=blue, tikzit category=zx]
\tikzstyle{H}=[fill=yellow, draw=black, shape=rectangle, inner sep=0.6mm, minimum height=1.5mm, minimum width=1.5mm]
\tikzstyle{hadamard}=[fill=yellow, draw=black, shape=rectangle, inner sep=0.6mm, minimum height=1.5mm, minimum width=1.5mm, tikzit category=zx]
\tikzstyle{u_triang}=[fill=yellow, draw=black, shape=isosceles triangle, shape border rotate=90, isosceles triangle stretches=true, inner sep=0.8pt, minimum width=0.25cm, minimum height=2mm]
\tikzstyle{d_triang}=[fill=yellow, draw=black, shape=isosceles triangle, shape border rotate=-90, isosceles triangle stretches=true, inner sep=0.8pt, minimum width=0.25cm, minimum height=2mm]
\tikzstyle{r_triang}=[fill=yellow, draw=black, shape=isosceles triangle, shape border rotate=0, isosceles triangle stretches=true, inner sep=0.8pt, minimum width=0.25cm, minimum height=2mm]
\tikzstyle{l_triang}=[fill=yellow, draw=black, shape=isosceles triangle, shape border rotate=180, isosceles triangle stretches=true, inner sep=0.8pt, minimum width=0.25cm, minimum height=2mm]
\tikzstyle{h_star}=[fill=yellow, draw=black, shape=star, star points=6, star point ratio=1.74, inner sep=0.8pt, minimum height=2.75mm, rotate=90]
\tikzstyle{v_star}=[fill=yellow, draw=black, shape=star, star points=6, star point ratio=1.74, inner sep=0.8pt, minimum height=2.75mm]
\tikzstyle{u_black_triang}=[fill=black, draw=black, scale=1, shape=isosceles triangle, shape border rotate=90, isosceles triangle stretches=true, inner sep=1pt, minimum width=0.4cm, minimum height=3mm]
\tikzstyle{d_black_triang}=[fill=black, draw=black, scale=1, shape=isosceles triangle, shape border rotate=-90, isosceles triangle stretches=true, inner sep=1pt, minimum width=0.4cm, minimum height=3mm]
\tikzstyle{r_black_triang}=[fill=black, draw=black, scale=1, shape=isosceles triangle, shape border rotate=0, isosceles triangle stretches=true, inner sep=1pt, minimum width=0.4cm, minimum height=3mm]
\tikzstyle{l_black_triang}=[fill=black, draw=black, scale=1, shape=isosceles triangle, shape border rotate=180, isosceles triangle stretches=true, inner sep=1pt, minimum width=0.4cm, minimum height=3mm]
\tikzstyle{paulibox}=[fill={rgb,255: red,221; green,221; blue,255}, draw=black, shape=rectangle, inner sep=0.6mm, minimum height=5mm, minimum width=5mm, font={\footnotesize}, text height=1.5ex, text depth=0.25ex, tikzit category=zx]
\tikzstyle{gate}=[shape=rectangle, fill=white, draw=black, minimum height=4mm, minimum width=4mm, text centered, inner sep=0mm, line width=0.64pt, font={\small}, tikzit category=circuit]
\tikzstyle{black dot}=[fill=black, draw=black, shape=circle, inner sep=0pt, minimum width=1.2mm, tikzit category=circuit]
\tikzstyle{cnot targ}=[fill=white, draw=white, shape=circle, tikzit category=circuit, label={center:$\oplus$}, inner sep=0pt, minimum width=2.1mm, tikzit fill={rgb,255: red,102; green,204; blue,255}, tikzit draw=black]
\tikzstyle{cross}=[path picture={ 
    \draw[black] (path picture bounding box.south west) -- (path picture bounding box.north east);
    \draw[black] (path picture bounding box.south east) -- (path picture bounding box.north west);
}, shape=circle, minimum size=2mm, inner sep=0pt, outer sep=0pt, tikzit category=circuit]
\tikzstyle{q_measurement}=[draw, rectangle, shape=rectangle, fill=white, inner sep=0.8mm, minimum width=6.4mm, minimum height=4.8mm, line width=0.64pt, scale=0.8, path picture={ 
        % Semi-circle path moved just a little down
        \draw[black] 
            ([shift={(.1,.12)}]path picture bounding box.south west) % Adjusted for scaling
            to[bend left=50] 
            ([shift={(-.1,.12)}]path picture bounding box.south east); % Adjusted for scaling
        % Arrow (needle) pointing slightly to the right and counterclockwise
        \draw[black,-latex] 
            ([shift={(0,.08)}]path picture bounding box.south) % Adjusted for scaling
            -- ([shift={(.24,-.072)}]path picture bounding box.north); % Adjusted for scaling
    }, tikzit category=circuit]
\tikzstyle{green_dot}=[fill={rgb,255: red,61; green,161; blue,0}, draw={rgb,255: red,61; green,161; blue,0}, shape=circle, tikzit fill={rgb,255: red,61; green,161; blue,0}, tikzit draw={rgb,255: red,61; green,161; blue,0}, inner sep=0.5mm]

% Edge styles
\tikzstyle{hadamard edge}=[-, dashed, dash pattern=on 2pt off 0.5pt, thick, draw={rgb,255: red,68; green,136; blue,255}]
\tikzstyle{star edge}=[-, dashed, dash pattern=on 2pt off 0.5pt, thick, draw={rgb,255: red,255; green,136; blue,68}]
\tikzstyle{box edge}=[-, dashed, dash pattern=on 2pt off 0.5pt, thick, draw={rgb,255: red,203; green,192; blue,225}]
\tikzstyle{brace edge}=[-, tikzit draw=blue, decorate, decoration={brace,amplitude=1mm,raise=-1mm}]
\tikzstyle{diredge}=[->]
\tikzstyle{double edge}=[-, double, shorten <=-1mm, shorten >=-1mm, double distance=2pt]
\tikzstyle{boldedge}=[-, line width=1.6pt, shorten <=-0.17mm, shorten >=-0.17mm]
\tikzstyle{pink_highlight}=[-, fill={rgb,255: red,243; green,234; blue,255}, draw={rgb,255: red,188; green,181; blue,197}, tikzit fill={rgb,255: red,243; green,234; blue,255}, tikzit draw={rgb,255: red,188; green,181; blue,197}]
\tikzstyle{green_highlight}=[-, fill={rgb,255: red,228; green,255; blue,225}, draw={rgb,255: red,159; green,178; blue,157}, tikzit fill={rgb,255: red,228; green,255; blue,225}, tikzit draw={rgb,255: red,159; green,178; blue,157}]
\tikzstyle{blue_highlight}=[-, fill={rgb,255: red,225; green,250; blue,255}, tikzit fill={rgb,255: red,225; green,250; blue,255}, draw={rgb,255: red,195; green,217; blue,221}, tikzit draw={rgb,255: red,195; green,217; blue,221}]
\tikzstyle{yellow_highlight}=[-, fill={rgb,255: red,253; green,255; blue,228}, tikzit fill={rgb,255: red,253; green,255; blue,228}, draw={rgb,255: red,220; green,221; blue,198}, tikzit draw={rgb,255: red,220; green,221; blue,198}]
\tikzstyle{red_highlight}=[-, fill={rgb,255: red,255; green,211; blue,211}, tikzit fill={rgb,255: red,255; green,211; blue,211}, draw={rgb,255: red,179; green,3; blue,6}, tikzit draw={rgb,255: red,179; green,3; blue,6}]
\tikzstyle{blue_line}=[-, fill=none, draw={rgb,255: red,17; green,0; blue,255}, tikzit draw={rgb,255: red,17; green,0; blue,255}]
\tikzstyle{red_line}=[-, draw={rgb,255: red,255; green,0; blue,12}, tikzit draw={rgb,255: red,255; green,0; blue,12}]

\input{my_tikzit_style.tikzdefs}

\usepackage{import}
\usepackage{xifthen}
\usepackage{pdfpages}
\usepackage{transparent}

\newcommand{%
    \def\svgwidth{\textwidth}
    \import{./figures/}{.pdf_tex}
}[1]{%
    \def\svgwidth{\textwidth}
    \import{./figures/}{#1.pdf_tex}
}

%%%%%%%%%%%%%%%%%% MY OWN PACKAGES END %%%%%%%%%%%%%%%%%%

\raggedbottom

% Choose the language
\ifxetexorluatex
	\usepackage{polyglossia}
	\setmainlanguage{english}
\else
	\usepackage[english]{babel} % Load characters and hyphenation
\fi
\usepackage[english=british]{csquotes}	% English quotes

% Load packages for testing
\usepackage{blindtext}
%\usepackage{showframe} % Uncomment to show boxes around the text area, margin, header and footer
%\usepackage{showlabels} % Uncomment to output the content of \label commands to the document where they are used

% Load the bibliography package
\usepackage{kaobiblio}
\addbibresource{main.bib} % Bibliography file

% Load mathematical packages for theorems and related environments
\usepackage[framed=true]{kaotheorems}

% Load the package for hyperreferences
\usepackage{kaorefs}

\graphicspath{{examples/documentation/images/}{images/}} % Paths in which to look for images

\usepackage{orcidlink}

\begin{document}

%----------------------------------------------------------------------------------------
%	BOOK INFORMATION
%----------------------------------------------------------------------------------------

\pagelayout{wide} % No margins
\begin{titlepage}
	\begin{center}
		\vspace*{1cm}
 
		\fontsize{20pt}{20pt}\selectfont
		\textbf{Graphical Stabilizer Decompositions for Multi-Control Toffoli Gate Dense Quantum Circuits}
 
		\vspace{0.5cm}
		\LARGE
		Maturitätsarbeit
			 
		\vspace{1.5cm}
 
		\textbf{Yves Vollmeier \orcidlink{0009-0003-5164-6848}}

		\vspace{4.75cm}
	  
		\begin{figure}[ht]
			\centering
    \def\svgwidth{\textwidth}
    \import{./figures/}{Title_Page/title_page.pdf_tex}

		\end{figure}
		
		\vspace{4.75cm}

		\large
		Supervised by Dr. Riccardo Ferrario\\
		Mathematisch Naturwissenschaftliches Gymnasium Rämibühl\\
		January 6, 2025
			 
	\end{center}
\end{titlepage}

%----------------------------------------------------------------------------------------
%	ABSTRACT
%----------------------------------------------------------------------------------------

\chapter*{Abstract}
\addcontentsline{toc}{chapter}{Abstract} % Add the abstract to the table of contents as a chapter

In this thesis, we study concepts in quantum computing using graphical languages, specifically using the ZX-calculus. The core of the research revolves around
(graphical) stabilizer decompositions. The first major focus is on the decomposition of non-stabilizer states created from star edges. We discuss previous results
and then present novel decompositions that yield a theoretical improvement. The second major focus is on weighting algorithms, applied to the special class of
multi-control Toffoli gate dense quantum circuits. The representation of the corresponding gates is based on star edges. The applicability of known methods, such
as CNOT-grouping, traditionally used for other classes, is examined in the context of this specific class. We then present a novel weighting algorithm that
attempts to determine the best vertex to decompose. A refined version is implemented to simulate a known class of quantum querying algorithms, which is used to
search for causal configurations of multiloop Feynman diagrams. For this case, as well as for a generalized benchmark consisting of randomly generated quantum circuits,
we demonstrate occasional improvements in the final number of terms against traditional methods. These results are discussed by considering different simplification
strategies. This thesis also provides a brief but broad outline of the important preliminaries.

\begin{flushright}
	\textit{Yves Vollmeier}
\end{flushright}

\index{abstract}

%----------------------------------------------------------------------------------------
%	TABLE OF CONTENTS & LIST OF FIGURES/TABLES
%----------------------------------------------------------------------------------------

\begingroup % Local scope for the following commands

% Define the style for the TOC, LOF, and LOT
\setstretch{1.0} % Uncomment to modify line spacing in the ToC
\setlength{\textheight}{230\hscale} % Manually adjust the height of the ToC pages

% Turn on compatibility mode for the etoc package
\etocstandarddisplaystyle % "toc display" as if etoc was not loaded
\etocstandardlines % "toc lines as if etoc was not loaded

\tableofcontents % Output the table of contents

\endgroup

%----------------------------------------------------------------------------------------
%	MAIN BODY
%----------------------------------------------------------------------------------------

\mainmatter % Denotes the start of the main document content, resets page numbering and uses arabic numbers
\setchapterstyle{kao} % Choose the default chapter heading style

\setchapterpreamble[u]{\margintoc}
\chapter{Introduction}
\labch{intro}

\begin{quote}
    \small \textit{Yeah, but those things don’t mean anything to me. [...] and I see in the Physics Review these idiot diagrams I cooked up [...]} \\
    \small --- \textup{Richard Feynman}
\end{quote}

\section{Motivation}
\labsec{motivation}

In 1966, Richard Feynman was asked about the awards he had received for his work. This provoked his annoyance, as he is often referred to as a humble
physicist. We can only suspect that this annoyance was the reason for why he referred to \textit{Feynman diagrams} as "idiot diagrams"
\cite{physicsRichardFeynmanSession2015}. The reality stands in stark contrast: Feynman diagrams are said to have revolutionized nearly every
aspect of theoretical physics \cite{kaiserPhysicsFeynmansDiagrams2005}. It was therefore a pleasant surprise when we found a graphical language that was
reminiscent of these diagrams, and did not seem to require advanced knowledge about quantum field theory. We found out about this graphical language, which
is referred to as the \textit{ZX-calculus}, while learning about quantum mechanics and quantum computation and looking through research papers.

The initial goal set out for this \textbf{Maturitätsarbeit} was not only to learn about the ZX-calculus as a tool, but also to gain insight into the research
field as a whole, as it was only introduced in 2008 \cite{coeckeInteractingQuantumObservables2008}. This learning should
be accompanied by active experimentation with the newly learned material.

The journey proved to be very fruitful and diverse, learning about many aspects of the ZX-calculus and related topics. We also got to experience scientific
research in general. In order to create a common theme throughout this thesis and in order to remain brief, it was decided to only focus on
so-called \textit{stabilizer decompositions}. This was the topic we found particularly interesting and where we were able to find novel insights.

Stabilizer decompositions have first been introduced in 2012 \cite{garciaEfficientInnerproductAlgorithm2013} and have later become an important research direction
within the ZX-calculus \cite{kissingerClassicalSimulationQuantum2022}.

\begin{figure}[!ht]
    \def\svgwidth{\textwidth}
    \import{./figures/}{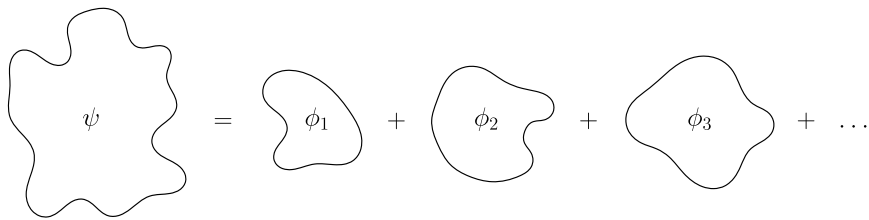}

	\caption[Simplistic description of stabilizer decompositions]{Simplified illustration of stabilizer decompositions.}
	\labfig{simplistic_description_stab_decomps}
\end{figure}

The basic idea of stabilizer decompositions is illustrated in \reffig{simplistic_description_stab_decomps}. We have a mathematical object $\psi$ that
corresponds to a structure used in quantum computation, represented using the ZX-calculus. More precisely, it is a so-called \textit{non-stabilizer state}, and
has certain properties that make it difficult to simulate. The idea is now to \textit{decompose} it into a sum 
of \textit{stabilizer states} $\phi_1, \phi_2, \phi_3, \dots$, whose properties make their simulation easier.

\section{Structure of This Thesis}

This thesis consists of three parts:
\begin{itemize}
	\item \myrefpart{part_i}
	\item \myrefpart{part_ii}
	\item \myrefpart{part_iii}
\end{itemize}

The first part aims to provide important preliminaries for the rest of this thesis, although notably, these will not be enough to get an in-depth understanding
of all the concepts used throughout this thesis. Interested reader are therefore encouraged to consult the referenced sources for further study. Conversely,
experienced readers may skip the first part and go directly to the second and third part.

Part I consists of \refch{quantum_computation}, which introduces the basic ideas of \textit{quantum mechanics}. This knowledge is then linked to
\textit{quantum computation}. Lastly, this chapter contains a brief overview of \textit{computational complexity theory}. In \refch{graphical_languages},
we give a general motivation for the use of \textit{graphical languages}, as well as a basic introduction to the \textit{ZX-calculus} and related calculi. Readers specifically interested in the applications of the ZX-calculus can find an extensive list of previous and current research topics compiled in \refsec{applications}.

Part II focuses on \textit{state decompositions}. We start by reviewing previous work from the literature. Before showing our novel results, we also outline a few of our
previous attempts.

Part III starts with \refch{dynamic_decompositions}, where we very briefly describe decompositions of \textit{matrices} instead of \textit{vectors}, that are
furthermore \textit{dynamic} in their size. Hence, we will refer to them as \textit{dynamic decompositions}. These are then also compared to state decompositions
in \refsec{comparison_with_state_decomps}.

In \refch{weighting_algorithms}, we build on these ideas, together with ideas from the literature, in order to create a \textit{weighting algorithm} meant to
decompose multi-control Toffoli gate dense quantum circuits with certain constraints as efficiently as possible. After very briefly explaining the
\textit{loop-tree duality formalism}, we demonstrate how our algorithm can be used to simulate a \textit{quantum search algorithm}, which is used to find
\textit{causal configurations} of \textit{multiloop Feynman diagrams}. Finally, the results are compared with the literature and discussed using further benchmarks.

The conclusion of this work features a summary of the research results and an outlook on future work.

\pagelayout{wide} % No margins
\addpart{Part I: Preliminaries}\labpart{part_i}
\pagelayout{margin} % Restore margins

\setchapterpreamble[u]{\margintoc}
\chapter{Quantum Computation}
\labch{quantum_computation}

\begin{quote}
    \small \textit{God does not play dice} \\
    \small --- \textup{Albert Einstein}
\end{quote}

\begin{quote}
    \small \textit{Einstein, stop telling God what do!} \\
    \small --- \textup{Niels Bohr}
\end{quote}

\begin{marginfigure}[*2]
    \includegraphics{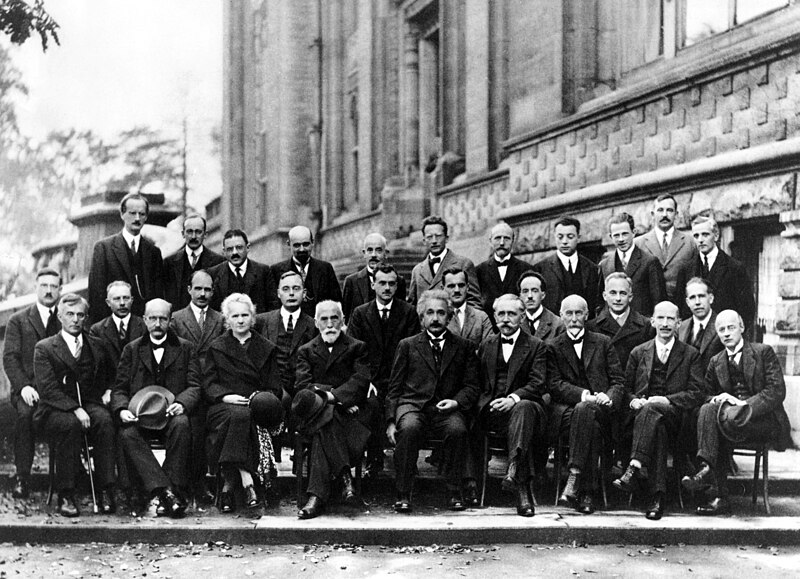}
    \caption[The Solvay Conference]{The Solvay Conference \cite{Solvay_conference_1927}.}
    \labfig{margin_solvay}
\end{marginfigure}

In 1927, perhaps one of the most iconic pictures in physics was taken (\cf \reffig{margin_solvay}). Seventeen of the 29 attendees
of the \textit{fifth Solvay Conference on Physics} went down in history by winning a Nobel Prize. But not only that, many of them were
integral to the development of \textbf{Quantum Mechanics} \cite{SolvayConference2024}. It is arguably one of the most successful, yet
also one of the most mysterious scientific theories humanity has ever come up with \cite{nielsenQuantumComputationQuantum2012}.

In 1900, Max Planck postulated that electromagnetic energy is quantized, i.e. comes in discrete bundles of
energy, which he called \textit{quanta}\sidenote[][*3]{Hence the name \textit{Quantum Mechanics}.}.
The energy of a \textit{single} quanta could now be described by the equation
\begin{equation*}
    E=h\nu
\end{equation*}
where $h=\SI{6.62610d-34}{\joule\second}$ is a fit parameter called \textit{Planck's constant}, and $\nu$ is the frequency of the electromagnetic wave.
In 1905, based on Planck's work, Albert Einstein proposed the idea that light itself is made up of discrete quanta of energy\sidenote[][*-1]{This ultimately led to the notion of \textit{photons}.}.
It is worth noting that these efforts originated from the \textit{ultraviolet catastrophe}, an inconsistency between experimental data and \textit{classical} theoretical predictions,
which could be resolved using Planck's new ideas \cite{UltravioletCatastrophe2024}.

During the next two decades, there was an active back and forth between new discoveries\sidenote[][*-1]{Discovery of the atomic nucleus, compton scattering, electron diffraction, \dots} and new theories\sidenote[][*2]{Postulation of wave-particle duality, electron spin, description of wave mechanics, matrix mechanics, \dots}. This culmination
of work ultimately led to what we now see as the origins of quantum mechanics \cite{182BriefSummary2017}.

Although this "first" period between 1900 and the 1920s provided profound insights into the natural world, it was only in the 1970s and 1980s that a shift
in perspective allowed a greater understanding of quantum mechanics itself. This shift in perspective was led by pioneers such as Paul Benioff,
Yuri Manin and Richard Feynman, who were now starting to think about \textit{designed} systems, as opposed to \textit{natural} systems. They realized
that classical means of computation may not be optimal to study quantum mechanical systems and were thus asking questions like "What would a better
suited machine look like?", "What is the space-time complexity of a certain quantum operation?", "How else can we exploit these properties?" and
many more. This was the emergence of a new, highly interdisciplinary field of research, combining questions from physics, computer science,
information theory and even engineering. It is what we now call \textit{Quantum Information Theory} and \textit{Quantum Computation} \cite{nielsenQuantumComputationQuantum2012}.

With these foundations in mind, \refsec{postulates_of_qm} will focus on the concepts from the "first period", whereas \refsec{qubits_qgates_and_qcircs} and \refsec{complexity_and_simulation}
will focus on the concepts from the "second period". These three sections outline the information necessary for the rest of the thesis. They also aim to make the reader familiar with
some key concepts from this fascinating field of research.

\section{Postulates of Quantum Mechanics}
\labsec{postulates_of_qm}

Postulates are of great importance in physics\sidenote{A famous example is that the speed of light $c$ is constant. It is a key requirement to make
special relativity work, whose predictions have been confirmed many times.}. They are requirements to make a certain theory work, even though
they are not provable. Still, one may consider the validity of the resulting predictions \cite{Postulate2024}.

Before starting with the \textit{postulates of quantum mechanics}, it is important to realize that the mathematical tools to describe
different quantum systems might vary. For example, in quantum computing, it suffices to use \textit{finite-dimensional Hilbert spaces}, whereas
the description of a hydrogen atom requires \textit{infinite-dimensional Hilbert spaces}. For the purpose of this discussion,
we will refrain from exploring the details of the latter\sidenote{The definition of infinite-dimensional Hilbert spaces requires additional
statements about limits and convergence.}, only presenting a few basic use cases \cite{KissingerWetering2024Book}.
\begin{definition}
    \labdef{hilbert_space}
    A \textit{finite-dimensional Hilbert space} $\mathcal{H}$ is a vector space over the complex numbers together with an \textit{inner product}, that is, a mapping
    \begin{equation*}
        \braket{\cdot|\cdot}:\mathcal{H}\times\mathcal{H}\to\mathbb{C}
    \end{equation*}
    such that the following conditions are satisfied:
    \begin{enumerate}
        \item \textit{Linearity in the second argument}: $\forall \psi,\phi_1,\phi_2\in\mathcal{H}$, $\forall \lambda_1,\lambda_2\in\mathbb{C}:$
              \begin{equation*}
                  \braket{\psi|\lambda_1\phi_1+\lambda_2\phi_2}=\lambda_1\braket{\psi|\phi_1}+\lambda_2\braket{\psi|\phi_2}.
              \end{equation*}
        \item \textit{Conjugate-symmetry}: $\forall \psi,\phi\in\mathcal{H}: \overline{\braket{\phi|\psi}}=\braket{\psi|\phi}$.
        \item \textit{Positive-definiteness}: $\forall \phi\in\mathcal{H}: \braket{\psi|\psi}\in\mathbb{R}, \ \braket{\psi|\psi}>0$.
    \end{enumerate}
\end{definition}
\begin{remark}
    By combining the first two conditions from \refdef{hilbert_space}, we find that the inner product is \textit{conjugate-linear in the
    first argument}, that means, $\forall \psi,\phi_1,\phi_2\in\mathcal{H}$, $\forall \lambda_1,\lambda_2\in\mathbb{C}:$
    \begin{equation*}
        \braket{\lambda_1\psi_1+\lambda_2\psi_2|\phi}=\overline{\lambda_1}\braket{\psi_1|\phi}+\overline{\lambda_2}\braket{\psi_2|\phi}.
    \end{equation*}
\end{remark}
The notation used in \refdef{hilbert_space} is linked to the \textit{Dirac notation}. Since it will commonly be used throughout this thesis,
the following table should serve as a cheat sheet:
\begin{kaobox}[frametitle={Dirac Notation Cheat Sheet \cite{nielsenQuantumComputationQuantum2012}}]
    \begin{equation*}
        \begin{array}{c|l}
        \hline z^* & \text { Complex conjugate of the complex number } z . \\
        \ket{\psi} & \text { Vector. Also known as a ket. } \\
        \bra{\psi} & \text { Vector dual to }\ket{\psi} \text {. Also known as a bra. } \\
        \braket{\varphi|\psi} & \text { Inner product between the vectors }\ket{\varphi} \text { and }\ket{\psi} . \\
        \ket{\varphi} \otimes\ket{\psi} & \text { Tensor product \sidenote{The tensor product \textit{in our context} is sometimes called \textit{Kronecker product}.}
        of }\ket{\varphi} \text { and }\ket{\psi} . \\
        \ket{\varphi}\ket{\psi} & \text { Abbreviated notation for tensor product of }\ket{\varphi} \text { and }\ket{\psi} . \\
        A^* & \text { Complex conjugate of the matrix } A \text{.} \\
        A^T & \text { Transpose of the matrix } A \text{.} \\
        A^{\dagger} & \text { Hermitian conjugate or adjoint of the matrix } A \text {, that}\\ & \text{ is, } A^{\dagger}=\left(A^T\right)^* . \\
        \braket{\varphi|A|\psi} & \text { Inner product between }\ket{\varphi} \text { and } A\ket{\psi} . \\
        \hline \hline
        \end{array}
    \end{equation*}
\end{kaobox}

\newpage

\begin{definition}
    A matrix $U$ is said to be \textit{unitary} if $U^\dag U=I$, where $I$ is the identity matrix.
\end{definition}

The distinction between finite and infinite Hilbert spaces is typically accompanied by the distinction between \textit{state vectors} and \textit{wave functions}.
In particular, given an $n$-dimensional Hilbert space $\mathcal{H}$ equipped with an orthonormal\sidenote{\textit{Orthonormal} means that the vectors are
mutually orthogonal and are normalized.} basis of kets $\{ \ket{\psi_i} \}$, we can express any element $\ket{\psi}$ 
of this Hilbert space as a linear combination (superposition) of the basis elements
\begin{equation}\labeq{hilb_lin_comb}
    \ket{\psi}=\sum_{i=1}^n c_i \ket{\psi_i}
\end{equation} 
with $c_i\in\mathbb{C}$. One refers to $\ket{\psi}$ as \textit{state vector}. In the case of an infinite-dimensional Hilbert space, however, this sum becomes an integral
\begin{equation}
    \ket{\psi}=\int dx \ \Psi(x)\ket{x}
\end{equation}
where $\Psi(x)$ is referred to as \textit{wave function}, essentially playing the same role as $c_i$ in \refeq{hilb_lin_comb}. Note that here we are using
the position basis $\{ \ket{x} \}$, hence the more precise term \textit{position-space wave function}\sidenote{We could also use the momentum basis, which would give
us the \textit{momentum-space wave function} \cite{WaveFunction2024b}.} \cite{rojoQuantumMechanics2LectureNotes2021}. To be precise, $\ket{x}$ is
an \textit{improper vector}, whereas the $\ket{\psi_i}$ are \textit{proper vectors}. The latter can be normalized to unity. The former, however, can only be normalized
to the \textit{Dirac delta function} \cite{WaveFunction2024b}. \textit{Heuristically}\sidenote{In fact, there does not exist any function with these properties, which
is why it can be seen as a "trick" used by physicists. However, there are rigorous definitions involving measure theory \cite{DiracDeltaFunction2024}.},
it is defined as
\begin{equation*}
    \delta(x)= \begin{cases}0 & \text{if } x \neq 0 \\ \infty & \text{if } x=0\end{cases}
\end{equation*}
such that
\begin{equation*}
    \int_{-\infty}^{\infty} \delta(x)=1 .
\end{equation*}
As already mentioned, we have
\begin{equation*}
    \braket{x^\prime|x}=\delta(x^\prime-x)
\end{equation*}
and thus\sidenote{Again, even though this is the conventional way (\cite{WaveFunction2024b}) to do it, these steps are not perfectly rigorous.}
\begin{equation*}
    \braket{x^\prime|\psi}=\int dx \ \Psi(x)\braket{x^\prime|x}=\Psi(x^\prime).
\end{equation*}
This is usually written more concisely as
\begin{equation}\labeq{wave_function_as_inner_product_eq}
    \braket{x|\psi}=\Psi(x).
\end{equation}

What will follow now are the four postulates of quantum mechanics as described in \cite{nielsenQuantumComputationQuantum2012}. Since there are many different
ways to formulate them, this resource was chosen as it focuses on quantum computation and thus uses the \textit{state vector} representation.

\newpage

\begin{kaobox}[frametitle={Postulate 1}]
    Any isolated physical system is described by a \textit{state space}, mathematically speaking a \textit{Hilbert Space} $\mathcal{H}$, and the state of the system is
    completely described by its \textit{state vector} $\ket{\psi}$, which we require to be of unit-length and therefore satisfy $\braket{\psi|\psi}=1$. 
\end{kaobox}

\begin{kaobox}[frametitle={Postulate 2}]
    The \textit{evolution} of the state in a closed quantum system is described by a \textit{unitary transformation}. More precisely, the
    state $\ket{\psi_1}$ at time $t_1$ is related to the state $\ket{\psi_2}$ at time $t_2$ by a unitary operator $U$ which depends only on the
    times $t_1$ and $t_2$:
    \begin{equation}\labeq{unitary_time_ev_eq}
        \ket{\psi_2}=U\ket{\psi_1}.
    \end{equation}

    The case where only two specific times are considered can be recovered from the differential equation
    \begin{equation}\labeq{simple_schroedinger_eq}
        i\hbar \frac{d\ket{\psi(t)}}{dt}=\hat{H}(t)\ket{\psi(t)}
    \end{equation}
    where $\hbar=\frac{h}{2\pi}$ is the reduced Planck's constant and $\hat{H}$ is the \textit{Hamiltonian operator} of the system\sidenote[][*-2.5]{In essence, operators are
    functions over a space of physical states onto another state of states. One of them is the Hamiltonian operator, corresponding to the total energy of the
    system \cite{OperatorPhysics2024}.}.
    This equation is called the \textit{time-dependent Schrödinger equation}.
\end{kaobox}

\begin{kaobox}[frametitle={Postulate 3}]
    A quantum measurement is described by a collection ${M_m}$ of \textit{measurement operators} acting on the state space that is being measured,
    where $m$ is a potential measurement outcome of the experiment\sidenote{Note that there are different types of measurements, such as general measurements, projective measurements
    or POVMs \cite{nielsenQuantumComputationQuantum2012}.}. The state right after a quantum measurement is given by
    \begin{equation*}
        \ket{\psi}\xrightarrow[]{\text{measurement}} \ket{\psi^\prime}=\frac{M_m\ket{\psi}}{\sqrt{p(m)}} = \frac{M_m\ket{\psi}}{\sqrt{\braket{\psi|M_m^\dag M_m|\psi}}}
    \end{equation*}
    where $p(m)$ is the probability of getting the measurement outcome $m$. The measurement operators satisfy the \textit{completeness equation}
    \begin{equation*}
        \sum_m M_m^\dag M_m = I
    \end{equation*}
    from which one can see that the total probability is indeed one $\sum_m p(m)=1$.
\end{kaobox}

\begin{kaobox}[frametitle={Postulate 4}]
    The state space of a composite physical system is formed by taking the tensor product of the state spaces from the individual subsystems. The state for the
    composite system is $\ket{\psi_1}\otimes\ket{\psi_2}\otimes\ket{\psi_3}\otimes \cdots \otimes\ket{\psi_n}$ if we number the systems $1$ through $n$
    and $\ket{\psi_i}$ is the state for system $i$.
\end{kaobox}

\newpage

As already mentioned, most calculations, apart from those in quantum computation, involve the wave function representation. Using \refeq{wave_function_as_inner_product_eq},
we can rewrite \refeq{simple_schroedinger_eq} to get the \textit{position-space Schrödinger equation}
\begin{equation}
    i\hbar \frac{\partial \Psi(x,t)}{\partial t}=H(t)\Psi(x,t).
\end{equation}
By making the Hamiltonian time-independent, $H(t)\leadsto E$, we get
\begin{equation}\labeq{const_ham_schroedinger_eq}
    i\hbar \frac{\partial \Psi(x,t)}{\partial t}=E\Psi(x,t).
\end{equation}
We can now solve for $\Psi(x,t)$:
\begin{align*}
    \frac{1}{\Psi(x, t)}\frac{\partial \Psi(x,t)}{\partial t}&=-\frac{iE}{\hbar} \\
    \int \frac{1}{\Psi(x, t)}\partial \Psi(x,t)&=\int -\frac{iE}{\hbar}\partial t \\
    \ln{\Psi(x,t)}&=-\frac{iEt}{\hbar}+\tilde{c} \\
    \Psi(x,t)&=c e^{-iEt / \hbar}.
\end{align*}
Finally, since $\Psi(x, 0)=c$, we get
\begin{equation}\labeq{const_hamiltonian_wave_function_sol_eq}
    \Psi(x, t)=e^{-iEt / \hbar}\Psi(x, 0)
\end{equation}
or, rewritten using \refeq{wave_function_as_inner_product_eq},
\begin{equation}
    \ket{\psi(t)}=e^{-iEt / \hbar}\ket{\psi(0)} .
\end{equation}
This is the solution\sidenote{Notice that this resembles \refeq{unitary_time_ev_eq}.} for the \textit{time-dependent Schrödinger equation with constant Hamiltonian}.
Plugging \refeq{const_hamiltonian_wave_function_sol_eq} back into \refeq{const_ham_schroedinger_eq}, we get
\begin{align*}
    i\hbar\frac{\partial}{\partial t}(e^{-iEt / \hbar}\Psi(x,0))&=H e^{-iEt / \hbar}\Psi(x,0) \\
    i\hbar e^{-iEt / \hbar}\left(-\frac{iE}{\hbar}\right)\Psi(x,0)&=H e^{-iEt / \hbar}\Psi(x,0) \\
    E\Psi(x)&=H\Psi(x) \stepcounter{equation}\tag{\theequation}\labeq{soft_const_hamiltonian_wave_function_sol_eq}
\end{align*}
The Hamiltonian $H$ of such systems can be written as
\begin{equation}\labeq{hamiltonian_laplacian_eq}
    H=-\frac{\hbar^2}{2m}\nabla^2+V(x,y,z),
\end{equation}
where $\nabla^2=\frac{\partial^2}{\partial x^2}+\frac{\partial^2}{\partial y^2}+\frac{\partial^2}{\partial z^2}$ is the Laplacian in Cartesian coordinates.
Using \refeq{wave_function_as_inner_product_eq}, \refeq{soft_const_hamiltonian_wave_function_sol_eq} becomes the \textit{time-independent Schrödinger equation}
\begin{equation}\labeq{time_independent_eq}
    \hat{H}\ket{\psi}=E\ket{\psi}
\end{equation}
and we can see that we have set up an \textit{eigenvalue problem}.

We can now use this theory to study our first and last actual physical system. This is usually referred to as the "particle in a box" problem \cite{SchrodingerEquation2024}.

\newpage

\begin{example}
    \labexample{particle_in_box_walk_through}
    \begin{marginfigure}[*1]
        \includegraphics{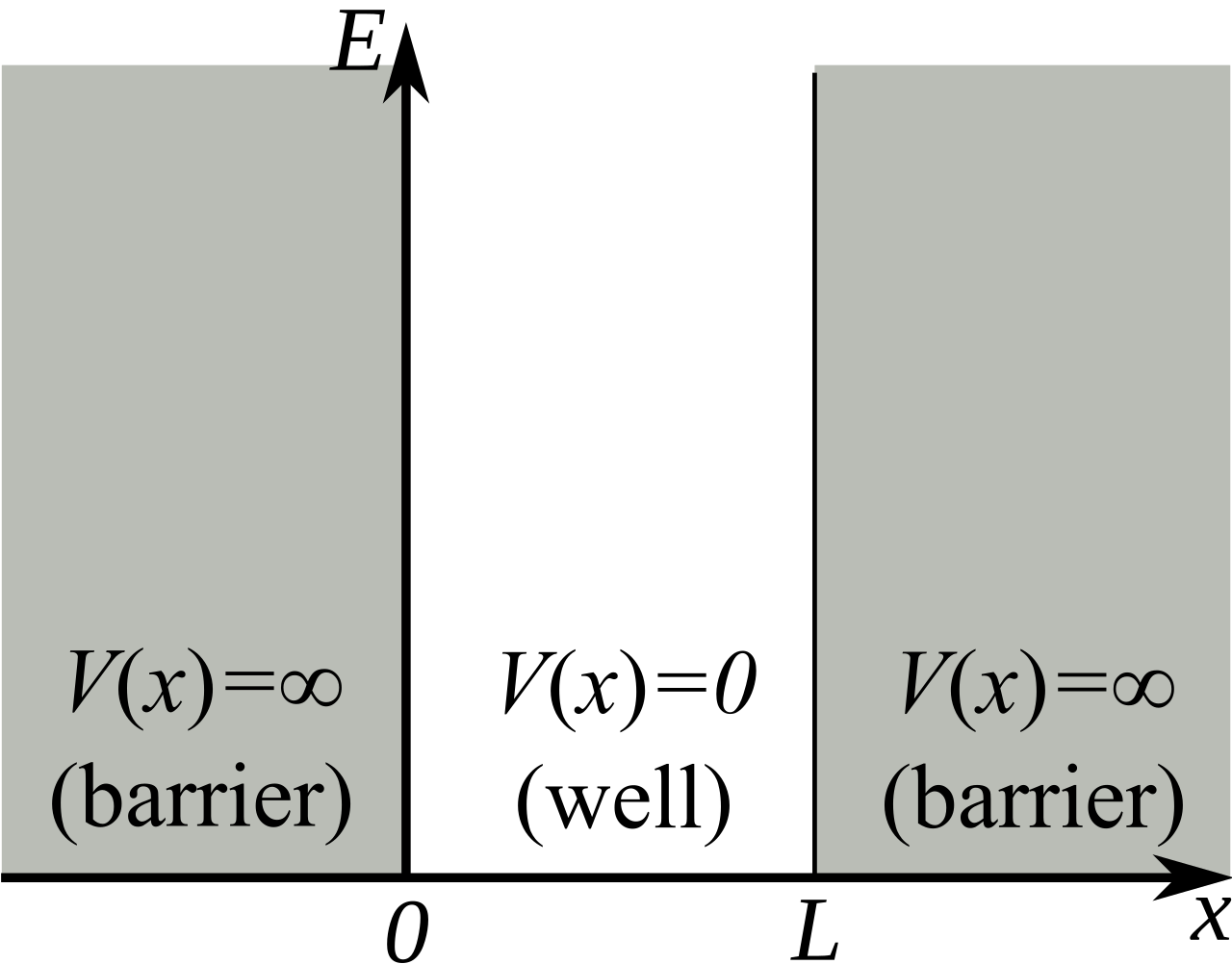}
        \caption[Particle in a box]{Visual representation of the "Particle in a box" problem \cite{ParticleBox2024}.}
        \labfig{margin_particle_in_box}
    \end{marginfigure}
    Consider a particle in the system represented in \reffig{margin_particle_in_box}. The system is one-dimensional, and the following
    applies to the potential of the particle $V(x)$:
    \begin{align}\labeq{potential_energy_particle_in_box_eq}
        V(x)=
        \begin{cases}
            0 & \text{if } 0<x<L \\
            \infty & \text{if } x\leq 0, x\geq L .
        \end{cases}
    \end{align}
    Since a particle cannot have infinite potential energy, it must stay in the well given by the box. This knowledge allows us to
    combine \refeq{soft_const_hamiltonian_wave_function_sol_eq}, \refeq{hamiltonian_laplacian_eq} and \refeq{potential_energy_particle_in_box_eq} into

    \begin{equation*}
        -\frac{\hbar^2}{2m}\frac{d^2\Psi(x))}{dx^2}=E\Psi(x)).
    \end{equation*}
    We can rearrange this to get
    \begin{equation}\labeq{schroedinger_evaluated}
        \frac{d^2\Psi(x)}{dx^2}=-\frac{2mE}{\hbar^2}\Psi(x)).
    \end{equation}
    The roots of the characteristic polynomial $\lambda^2+\frac{2mE}{\hbar^2}$ are $\lambda_{1,2}=\pm ik$, with $k\coloneqq\sqrt{\frac{2mE}{\hbar^2}}$. Thus, the general solution of \refeq{schroedinger_evaluated} is
    \begin{equation}\labeq{particle_in_box_ansatz}
        \Psi(x)=c_1\cos(kx)+c_2\sin(kx).
    \end{equation}
    We can now consider our boundary conditions
    \begin{align*}
        \Psi(x\leq 0)=0 &\implies \Psi(0)=0 \\
        \Psi(x\geq L)=0 &\implies \Psi(L)=0.
    \end{align*}
    By plugging these values into \refeq{particle_in_box_ansatz}, we obtain
    \begin{equation*}
        c_1=0
    \end{equation*}
    and
    \begin{equation*}
        c_2\sin(kL)=0\Leftrightarrow kL=n\pi, \ n\in\mathbb{Z}.
    \end{equation*}
    With this knowledge, we can find $c_2$ and $E$. To find $c_2$, we need to use
    the \textit{normalization condition}

    \begin{equation}
        \int_{-\infty}^{\infty} |\Psi(x)|^2 dx=1,
    \end{equation}
    which in our case becomes
    \begin{equation*}
        \int_0^L (c_2)^2 \sin^2(kx)dx=1.
    \end{equation*}
    Evaluating the definite integral yields
    \begin{equation*}
        -\frac{(c_2)^2 \sin(2kL)}{4k}+\frac{(c_2)^2 L}{2}=1 \Leftrightarrow c_2=\sqrt{\frac{2}{L}}.
    \end{equation*}
    Thus, we get our final equation
    \begin{equation}
        \Psi_n(x)=\sqrt{\frac{2}{L}}\sin\left(\frac{n\pi}{L}x\right).
    \end{equation}
    To find the energy eigenvalues, we simply need to compare $k=\sqrt{\frac{2mE}{\hbar^2}}$ with $kL=n\pi$ to get
    \begin{equation}
        E_n=\frac{n^2\pi^2\hbar^2}{2mL^2}=\frac{h^2 n^2}{8mL^2},
    \end{equation}
    where $n=1,2,3,\dots$ . This is because $\Psi_0(x)=0$ cannot be normalized and furthermore because any $n<0$ yields the same quantum state
    as its positive equivalent. These results are visualized in \reffig{margin_particle_in_a_box_waves}.
    \begin{marginfigure}[*-10]
        \includegraphics{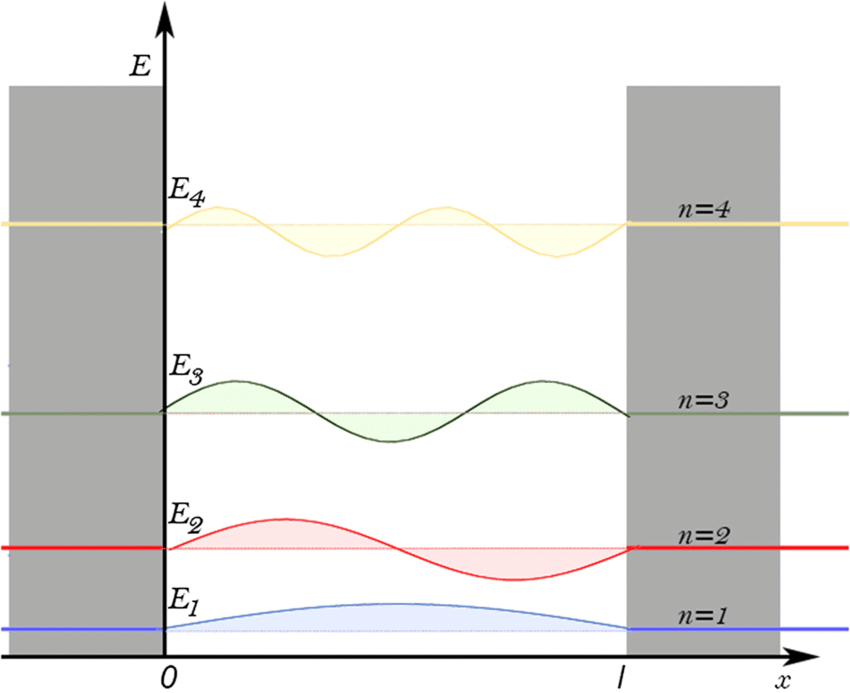}
        \caption[Particle in a box waves]{Visual representation of the different standing waves and corresponding quantized energy levels \cite{GraphicalRepresentationParticle}.}
        \labfig{margin_particle_in_a_box_waves}
    \end{marginfigure}
\end{example}

\section{Qubits, Quantum Gates and Quantum Circuits}
\labsec{qubits_qgates_and_qcircs}

In 1982, the renowned physicist Richard Feynman published a paper about \textit{quantum computers}.

\begin{quote}
    \small \textit{Can physics be simulated by a universal computer? [...] the physical world is quantum mechanical, and therefore the proper problem is the simulation of quantum physics [...] Can you do it with a new kind of computer--a quantum computer?} \\
    \small --- \textup{Richard Feynman \cite{feynmanSimulatingPhysicsComputers1982}}
\end{quote}

The problem mentioned here is quite simple to understand. Imagine you want to study the molecular structure of drugs. For that purpose, one may consider
the electron spin\sidenote{In reality, of course, the task of studying such systems is much more intricate.}. One electron may just be spin-up or spin-down, so there are two states to be considered.
Looking at two electrons, we have four possible states. With ten electrons, there are already 1024 states. Fifty electrons are enough to have more
than 1 quadrillion\sidenote{A one followed by fifteen zeros!}. However, this is only the case when working with a \textit{classical computer} that utilizes
\textit{bits} of information. We will later see that \textit{qubits} --- something inherently quantum mechanical --- could resolve this problem.

To explain the basic ideas behind quantum computers, we will consider an analogy to classical computers. The reader is therefore assumed to have sufficient
knowledge of classical computers.

\begin{figure}[!ht]
    \def\svgwidth{\textwidth}
    \import{./figures/}{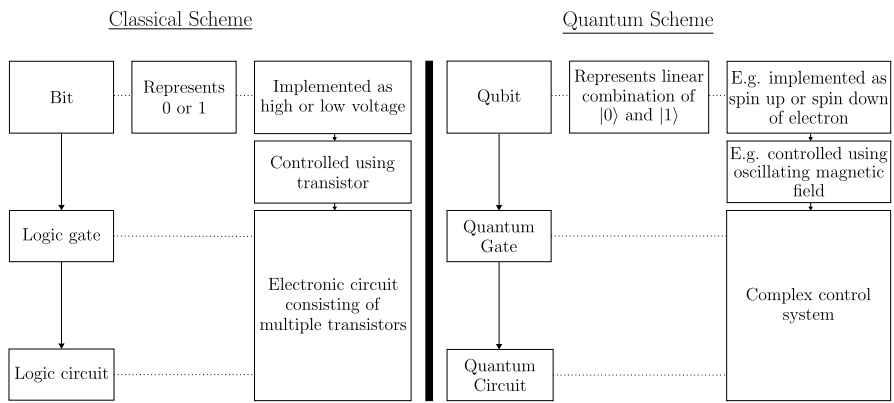}

	\caption[Classical Quantum Analogy]{An analogy between classical computers and quantum computers.}
	\labfig{normal_classical_quantum_computer_analogy}
\end{figure}

The left sides of the two respective schemes in \reffig{normal_classical_quantum_computer_analogy}, classical and quantum, represent the mathematical and information theoretic concepts, whereas the right sides
represent their physical implementation. Notice that it is a simplified description of how those systems actually work. For example, many types of quantum
computers also have a classical part, which handles tasks like error-correction\sidenote{Qubits are prone to \textit{quantum decoherence}, a process
which results in the loss of information due to loose energetic couplings. This is why one very often sees big cryostats cooling qubits, so that
there is less thermal noise and thus less couplings.} \cite{QuantumDecoherence2024}.

\begin{definition}
    A \textit{qubit} is an abstract representation of a single bit of information, represented as a state vector $\ket{\psi}$ in a two-dimensional Hilbert space
    \begin{equation}
        \ket{\psi}=\alpha\ket{0}+\beta\ket{1}, \ \alpha,\beta\in\mathbb{C},
    \end{equation}
    where the vectors $\ket{0}\coloneqq\binom{1}{0}$ and $\ket{1}\coloneqq\binom{0}{1}$ form an orthonormal basis and are thus called \textit{computational basis states}.
\end{definition}

We can now use the third postulate of quantum mechanics to study how a measurement on a qubit may look like. For reasons mentioned later on, the type of measurement we are going
to use is called projective measurement. Let us define our measurement operators:
\begin{align*}
    M_0&=\ket{0}\bra{0}=
    \left(\begin{array}{ll}
        1 & 0 \\
        0 & 0
        \end{array}\right) \\
    M_1&=\ket{1}\bra{1}=
    \left(\begin{array}{ll}
        0 & 0 \\
        0 & 1
        \end{array}\right)
\end{align*}
We can thus compute the measurement probabilities
\begin{align}
    \labeq{prob_qubit_0_eq}
    p(0)&=\braket{\psi|{M_0^\dag M_0}|\psi}=\alpha^* \alpha\braket{0|0}=|\alpha|^2 \\
    \labeq{prob_qubit_1_eq}
    p(1)&=\braket{\psi|M_1^\dag M_1|\psi}=\beta^* \beta\braket{1|1}=|\beta|^2
\end{align}
and the state after the measurement
\begin{align}
    \frac{M_0\ket{\psi}}{\sqrt{p(0)}}=\frac{\alpha}{|\alpha|}\ket{0} \\
    \frac{M_1\ket{\psi}}{\sqrt{p(1)}}=\frac{\beta}{|\beta|}\ket{1}.
\end{align}

From \refeq{prob_qubit_0_eq} and \refeq{prob_qubit_1_eq} we can deduce the \textit{Born rule}
\begin{equation}\labeq{born_rule_eq}
    |\alpha|^2+|\beta|^2=1.
\end{equation}

The absolute values of $\alpha,\beta\in\mathbb{C}$ can be interpreted as real coordinates of points on the unit circle:
\begin{align*}
    |\alpha|=\cos(\omega), \quad |\beta|=\sin(\omega)
\end{align*}
for $\omega\in[0,2\pi)$. Therefore, $\alpha$ and $\beta$ can be written down in polar coordinates
\begin{align*}
    \alpha=e^{i\phi_1}\cos(\omega), \quad \beta=e^{i\phi_2}\sin(\omega)
\end{align*}
for $\phi_1,\phi_2\in[0,2\pi)$. Notice that \refeq{born_rule_eq} is preserved, since $|e^{i\phi_1}|=|e^{i\phi_2}|=1$. We now have
\begin{align*}
    \ket{\psi}&=\alpha\ket{0}+\beta\ket{1} \\
    &=e^{i\phi_1}\cos(\omega)\ket{0}+e^{i\phi_2}\sin(\omega)\ket{1} \\
    &=e^{i\phi_1}(\cos(\omega)\ket{0}+e^{i(\phi_2-\phi_1)}\sin(\omega)\ket{1}),
\end{align*}
where $e^{i\phi_1}$ is the \textit{global phase}, which can be omitted since it influences the probabilities of $\ket{0}$ and $\ket{1}$ the same way, and thus cannot
be distinguished by measurements.
For reasons that will become clear in a bit, we substitute $\omega\coloneqq\frac{\theta}{2}$ and $\varphi\coloneqq\phi_2-\phi_1$. We thus obtain our final equation
\begin{equation}\labeq{bloch_sphere_qubit_def_eq}
    \ket{\psi}=\cos\left(\frac{\theta}{2}\right)\ket{0}+e^{i\varphi}\sin\left(\frac{\theta}{2}\right)\ket{1},
\end{equation}
for $\theta\in [0,\pi], \varphi\in [0,2\pi)$. We call $\varphi$ the \textit{phase of the qubit}. This result can be visualized
using the \textit{Bloch sphere} (\cf \reffig{margin_bloch_sphere}).

One can now see that plugging in $\theta=\pi$ produces a vector pointing straight down to $\ket{1}$ according to the visualization and
$\ket{1}$ according to our formula too. This is why we substituted $\omega=\frac{\theta}{2}$.

Measuring the qubit state $\ket{\psi}$ \textit{projects} it onto the Z-axis (either to $\ket{0}$ or $\ket{1}$). This is why the procedure is called a projective measurement.

\begin{marginfigure}[*-12]
    \includegraphics{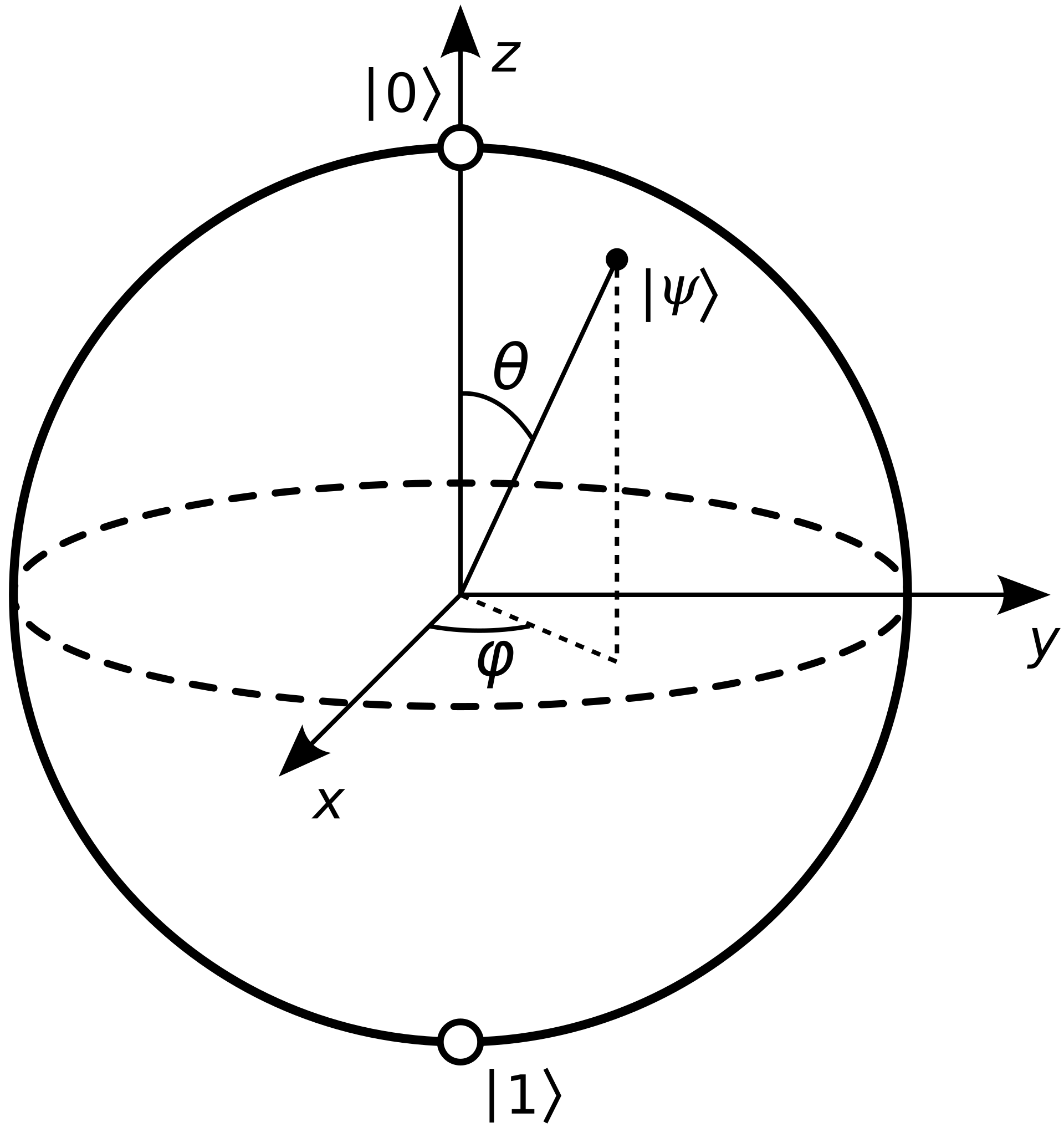}
    \caption[Bloch Sphere]{Bloch sphere \cite{BlochSphere2024}.}
    \labfig{margin_bloch_sphere}
\end{marginfigure}

We will demonstrate how Postulate 4 applies by considering a two-qubit systems:
\begin{equation*}
    \ket{\psi}=\alpha_{00}\ket{00}+\alpha_{01}\ket{01}+\alpha_{10}\ket{10}+\alpha_{11}\ket{11}.
\end{equation*}
Intuitively, there must be a normalization condition
\begin{equation}
    \sum_{x\in\{0,1\}^2}|\alpha_x|^2=1,
\end{equation}
where the notation $\{0,1\}^2$ means "the set of strings of length two with each letter being either zero or one" \cite{nielsenQuantumComputationQuantum2012}.
When measuring the first qubit alone, this will yield 0 with probability $p(m)$ and leave the post-measurement state
\begin{equation*}
    \ket{\psi^\prime}=\frac{\alpha_{00}\ket{00}+\alpha_{01}\ket{01}}{\sqrt{p(m)}}=\frac{\alpha_{00}\ket{00}+\alpha_{01}\ket{01}}{\sqrt{|\alpha_{00}|^2+|\alpha_{01}|^2}}.
\end{equation*}

Something very interesting happens when our two-qubit system is a so-called \textit{Bell state} or \textit{EPR Pair}, which is given by
\begin{equation}
    \frac{\ket{00}+\ket{11}}{\sqrt{2}}.
\end{equation}
Measuring the first qubit yields $0$ with probability $0.5$, leaving the post-measurement state $\ket{00}$. The probability for $1$ is $0.5$ and
the post-measurement state is $\ket{11}$. However, notice what happens when measuring the second qubit of the post-measurement state: One will get the same value as in the first
measurement. We thus say they are \textit{correlated}. This phenomenon  is called \textit{quantum entanglement} \cite{nielsenQuantumComputationQuantum2012}.

In order to manipulate the state of our qubit(s), we can use \textit{quantum gates}. There are \textit{single-qubit gates} and \textit{n-qubit gates}.
Both are \textit{unitary transformations}.

One of the simplest single-qubit gates is the \textit{NOT gate}, \textit{Pauli-X gate} or simply \textit{X gate}, whose matrix
representation\sidenote{If not mentioned otherwise, we always represent them in the computational basis.} is
\begin{equation}
    X=\left(\begin{array}{ll}
        0 & 1 \\
        1 & 0
        \end{array}\right).
\end{equation}
\begin{marginfigure}[*2]
    \begin{equation*}
        \tikzfig{1_Quantum_Computation/X_gate}
    \end{equation*}
    \caption[NOT gate]{Quantum circuit notation for NOT gate \cite{nielsenQuantumComputationQuantum2012}.}
    \labfig{margin_not_gate}
\end{marginfigure}
The corresponding \textit{quantum circuit notation} is depicted in \reffig{margin_not_gate}.

Like the logical NOT gate, it "flips" the state:
\begin{equation*}
    X\ket{0}=\ket{1} \quad \text{ and } \quad X\ket{1}=\ket{0}.
\end{equation*}
Another very important single-qubit gate is the \textit{Hadamard gate}
\begin{equation}
    H=\frac{1}{\sqrt{2}}\left(\begin{array}{ll}
        1 & 1 \\
        1 & -1
        \end{array}\right),
\end{equation}
\begin{marginfigure}[*2]
    \centering
    \begin{equation*}
        \tikzfig{1_Quantum_Computation/Hadamard_gate}
    \end{equation*}
    \caption[Hadamard gate]{Quantum circuit notation for Hadamard gate \cite{nielsenQuantumComputationQuantum2012}.}
    \labfig{margin_hadamard_gate}
\end{marginfigure}
which is depicted in \reffig{margin_hadamard_gate}. One can easily see how useful it is to prepare states in a superposition:
\begin{equation*}
    H\ket{0}=\frac{\ket{0}+\ket{1}}{\sqrt{2}}\eqqcolon\ket{+} \quad \text{ and } \quad H\ket{1}=\frac{\ket{0}-\ket{1}}{\sqrt{2}}\eqqcolon\ket{-}.
\end{equation*}

In fact, it is possible to visualize all single-qubit gate transformations in the Bloch sphere according to \refthm{sing_qubit_rot_decomp_thm}.
\begin{theorem}\labthm{sing_qubit_rot_decomp_thm}
    Any $2\times2$ unitary matrix can be decomposed as
    \begin{equation}
        U=e^{i \alpha}\left[\begin{array}{cc}
            e^{-i \beta / 2} & 0 \\
            0 & e^{i \beta / 2}
            \end{array}\right]\left[\begin{array}{cc}
            \cos(\frac{\gamma}{2}) & -\sin(\frac{\gamma}{2}) \\
            \sin(\frac{\gamma}{2}) & \cos(\frac{\gamma}{2})
            \end{array}\right]\left[\begin{array}{cc}
            e^{-i \delta / 2} & 0 \\
            0 & e^{i \delta / 2}
            \end{array}\right],
    \end{equation}
    where $\alpha$, $\beta$, $\gamma$ and $\delta$ are real-valued. The second matrix is an ordinary rotation, and the other two matrices are
    rotations in different planes.
    \cite{nielsenQuantumComputationQuantum2012}
\end{theorem}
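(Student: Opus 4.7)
The plan is to reduce to $SU(2)$ first, and then show that the three matrices on the right hand side parametrize all of $SU(2)$ as the Euler angles $\beta,\gamma,\delta$ range over $\mathbb{R}$. The global phase $e^{i\alpha}$ is there precisely to bridge $U(2)$ and $SU(2)$: since $\det U\in U(1)$, we can write $\det U=e^{i\psi}$ for some $\psi\in\mathbb{R}$, and then setting $\alpha=\psi/2$ makes $V\coloneqq e^{-i\alpha}U$ satisfy $\det V=1$. Hence $V\in SU(2)$, and it suffices to prove the decomposition in the form $V=R_z(\beta)R_y(\gamma)R_z(\delta)$, where $R_z$ and $R_y$ denote the three matrices appearing in the statement (with the $e^{i\alpha}$ factor stripped).

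Next I would carry out the product on the right explicitly. Multiplying the two outer diagonal matrices through the middle rotation gives
\begin{equation*}
R_z(\beta)R_y(\gamma)R_z(\delta)=\begin{pmatrix} e^{-i(\beta+\delta)/2}\cos(\gamma/2) & -e^{-i(\beta-\delta)/2}\sin(\gamma/2) \\ e^{i(\beta-\delta)/2}\sin(\gamma/2) & e^{i(\beta+\delta)/2}\cos(\gamma/2) \end{pmatrix},
\end{equation*}
which is manifestly of the standard $SU(2)$ form $\bigl(\begin{smallmatrix} a & -\overline{b} \\ b & \overline{a} \end{smallmatrix}\bigr)$ with $|a|^2+|b|^2=\cos^2(\gamma/2)+\sin^2(\gamma/2)=1$. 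So every choice of $(\beta,\gamma,\delta)$ yields an element of $SU(2)$; what remains is surjectivity.

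For surjectivity, I would take an arbitrary $V\in SU(2)$ with top row $(a,-\overline{b})$ and bottom row $(b,\overline{a})$, where $|a|^2+|b|^2=1$. Writing $|a|=\cos(\gamma/2)$ and $|b|=\sin(\gamma/2)$ with $\gamma\in[0,\pi]$, and setting $\phi_a,\phi_b$ to be the arguments of $a,b$, matching with the expression above forces $\beta=\phi_b-\phi_a$ and $\delta=-(\phi_a+\phi_b)$. Hence every $V\in SU(2)$ can be written in this form. Combined with the reduction in the first step, this yields the claimed decomposition $U=e^{i\alpha}R_z(\beta)R_y(\gamma)R_z(\delta)$.

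The one subtlety worth flagging is the degenerate cases $\gamma=0$ or $\gamma=\pi$, where one of $a,b$ vanishes and its argument is undefined, so only the sum or the difference $\beta\pm\delta$ is determined. This is the usual gimbal-lock phenomenon of Euler angles and does not affect existence, only uniqueness of the parameters, so it is not an obstacle to the theorem as stated. Thus the main work is essentially the matrix multiplication above plus the polar-form parametrization of the unit 3-sphere of $SU(2)$; no deeper obstacle arises.
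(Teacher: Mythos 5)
Your proof is correct and complete. Note that the paper itself does not actually prove this theorem: it states it with a citation to Nielsen and Chuang and follows it only with an informal Bloch-sphere plausibility remark ("any allowed transformation creates such a vector, so they can be transformed into each other using rotations"), which is intuition rather than an argument. So there is no in-paper proof to compare against; your write-up supplies exactly what is omitted, and it is essentially the standard argument from the cited reference. The three steps all check out: factoring out $e^{i\alpha}$ with $\alpha=\tfrac{1}{2}\arg\det U$ to land in $SU(2)$; the explicit product
\begin{equation*}
R_z(\beta)R_y(\gamma)R_z(\delta)=\begin{pmatrix} e^{-i(\beta+\delta)/2}\cos(\gamma/2) & -e^{-i(\beta-\delta)/2}\sin(\gamma/2) \\ e^{i(\beta-\delta)/2}\sin(\gamma/2) & e^{i(\beta+\delta)/2}\cos(\gamma/2) \end{pmatrix},
\end{equation*}
which I have verified; and the surjectivity step, where your choices $\beta=\phi_b-\phi_a$, $\delta=-(\phi_a+\phi_b)$ do solve the matching equations $\beta+\delta=-2\phi_a$, $\beta-\delta=2\phi_b$. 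The only implicit ingredient is the standard fact that every $V\in SU(2)$ has the form $\bigl(\begin{smallmatrix} a & -\overline{b} \\ b & \overline{a}\end{smallmatrix}\bigr)$ with $|a|^2+|b|^2=1$, which follows in one line from $V^{-1}=V^{\dagger}$ together with $\det V=1$; it would be worth stating. Your remark on the degenerate cases $\gamma\in\{0,\pi\}$ is also apt: existence is unaffected, only uniqueness of the angles fails.
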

This should make sense intuitively, since any state\sidenote{To be more precise, any \textit{pure} state. There are also \textit{mixed} states, but they
are of no importance to us \cite{nielsenQuantumComputationQuantum2012}.} can be represented by a vector pointing onto the surface of the Bloch sphere, and
because any allowed transformation again creates such a vector, we know that they can be transformed into each other using rotations according to basic geometry.
\begin{marginfigure}
    \begin{equation*}
        \tikzfig{1_Quantum_Computation/CNOT_gate}
    \end{equation*}
    \caption[CNOT gate]{Quantum circuit notation for CNOT gate, where the filled black circle represents the control bit and the bottom one represents
    the target bit \cite{nielsenQuantumComputationQuantum2012}.}
    \labfig{margin_cnot_gate}
\end{marginfigure}
One of the simplest n-qubit gates is the \textit{CNOT gate} or \textit{controlled NOT gate} (\cf \reffig{margin_cnot_gate}). As the name suggests, there
is a \textit{control bit} and a \textit{target bit}. If the state for the control bit is $\ket{0}$, nothing happens.
If it is $\ket{1}$, then a NOT gate is applied on the target bit. The corresponding matrix representation is
\begin{equation}
    \ket{0}\bra{0}\otimes I+\ket{1}\bra{1}\otimes X=\left[\begin{array}{llll}
        1 & 0 & 0 & 0 \\
        0 & 1 & 0 & 0 \\
        0 & 0 & 0 & 1 \\
        0 & 0 & 1 & 0
        \end{array}\right].
\end{equation}
Notice that we could also let the control bit be below the target bit, which would result in
\begin{equation*}
    \left[\begin{array}{llll}
        1 & 0 & 0 & 0 \\
        0 & 0 & 0 & 1 \\
        0 & 0 & 1 & 0 \\
        0 & 1 & 0 & 0
        \end{array}\right].
\end{equation*}
As we will see later, there could also be multiple qubits in-between the control and the target bit.

Another very important gate is the \textit{Toffoli gate} or \textit{CCNOT gate} (\cf \reffig{margin_toffoli_gate}). It is a subcase of the \textit{multi-control Toffoli gate}, where the
number of control bits was set to two. The matrix representation is
\begin{marginfigure}
    \centering
    \begin{equation*}
        \tikzfig{1_Quantum_Computation/Toffoli_gate}
    \end{equation*}
    \caption[Toffoli gate]{Quantum circuit notation for Toffoli gate \cite{nielsenQuantumComputationQuantum2012}.}
    \labfig{margin_toffoli_gate}
\end{marginfigure}
\begin{equation}
    \left[\begin{array}{llllllll}
        1 & 0 & 0 & 0 & 0 & 0 & 0 & 0 \\
        0 & 1 & 0 & 0 & 0 & 0 & 0 & 0 \\
        0 & 0 & 1 & 0 & 0 & 0 & 0 & 0 \\
        0 & 0 & 0 & 1 & 0 & 0 & 0 & 0 \\
        0 & 0 & 0 & 0 & 1 & 0 & 0 & 0 \\
        0 & 0 & 0 & 0 & 0 & 1 & 0 & 0 \\
        0 & 0 & 0 & 0 & 0 & 0 & 0 & 1 \\
        0 & 0 & 0 & 0 & 0 & 0 & 1 & 0
        \end{array}\right].
\end{equation}

For the sake of completeness, we will now briefly list the rest of the most important quantum gates:
\begin{kaobox}[frametitle={Common quantum gates}]
    \begin{equation*}
        \begin{array}{ccc}
            \text{Pauli-Y} & \tikzfig{1_Quantum_Computation/Pauli_Y_gate} & 
            \left[\begin{array}{cc}
                0 & -i \\
                i & 0
            \end{array}\right] \\
            \noalign{\vskip 1.3mm}
            \text{Pauli-Z} & \tikzfig{1_Quantum_Computation/Pauli_Z_gate} &
            \left[\begin{array}{cc}
                1 & 0 \\
                0 & -1
            \end{array}\right] \\
            \noalign{\vskip 1.3mm}
            \text{S-gate / phase} & \tikzfig{1_Quantum_Computation/S_gate} &
            \left[\begin{array}{ll}
                1 & 0 \\
                0 & i
            \end{array}\right] \\
            \noalign{\vskip 1.3mm}
            \text{T-gate / }\frac{\pi}{8} & \tikzfig{1_Quantum_Computation/T_gate} &
            \left[\begin{array}{cc}
                1 & 0 \\
                0 & e^{i \pi / 4}
            \end{array}\right] \\
            \noalign{\vskip 1.3mm}
            \text{SWAP} & \tikzfig{1_Quantum_Computation/SWAP_gate} &
            \left[\begin{array}{llll}
                1 & 0 & 0 & 0 \\
                0 & 0 & 1 & 0 \\
                0 & 1 & 0 & 0 \\
                0 & 0 & 0 & 1
            \end{array}\right] \\
            \noalign{\vskip 1.3mm}
            \text{Controlled-Z} & \tikzfig{1_Quantum_Computation/controlled_Z_gate} &
            \left[\begin{array}{cccc}
                1 & 0 & 0 & 0 \\
                0 & 1 & 0 & 0 \\
                0 & 0 & 1 & 0 \\
                0 & 0 & 0 & -1
            \end{array}\right] \\
            \noalign{\vskip 1.3mm}
            \text{Controlled-S} & \tikzfig{1_Quantum_Computation/controlled_S_gate} &
            \left[\begin{array}{llll}
                1 & 0 & 0 & 0 \\
                0 & 1 & 0 & 0 \\
                0 & 0 & 1 & 0 \\
                0 & 0 & 0 & i
            \end{array}\right] \\
            \noalign{\vskip 1.3mm}
            \shortstack{Fredkin / \\Controlled-SWAP} & \tikzfig{1_Quantum_Computation/Fredkin_gate} &
            \left[\begin{array}{llllllll}
                1 & 0 & 0 & 0 & 0 & 0 & 0 & 0 \\
                0 & 1 & 0 & 0 & 0 & 0 & 0 & 0 \\
                0 & 0 & 1 & 0 & 0 & 0 & 0 & 0 \\
                0 & 0 & 0 & 1 & 0 & 0 & 0 & 0 \\
                0 & 0 & 0 & 0 & 1 & 0 & 0 & 0 \\
                0 & 0 & 0 & 0 & 0 & 0 & 1 & 0 \\
                0 & 0 & 0 & 0 & 0 & 1 & 0 & 0 \\
                0 & 0 & 0 & 0 & 0 & 0 & 0 & 1
            \end{array}\right] \\
        \end{array}
    \end{equation*}
\end{kaobox}

As it is the case with classical computers, more sophisticated computations require ensembles of gates. An example of such a \textit{quantum circuit} can be seen in \reffig{circuit_example}.
On the left most side, one can see the initial state, usually\sidenote{
    We use the following notation:
    \begin{equation*}
    M_1 \otimes M_2 \otimes \ldots \otimes M_n \eqqcolon \bigotimes_{k=1}^n M_k
    \end{equation*} and
    \begin{equation*}
        \bigotimes_{k=1}^n M \eqqcolon M^{\otimes n}
    \end{equation*}
} $\ket{0}^{\otimes n}$. The time evolution is from left to right and horizontal lines, so-called \textit{wires}, correspond to one of the $n$ qubits.
Groups of qubits, sometimes denoted by indexed variables such as $q_i$, are termed \textit{quantum registers} (\cf \refsec{application_multiloop_feynman_diagrams}).

\begin{figure}[!ht]
    \begin{equation*}
        \tikzfig{1_Quantum_Computation/circuit_example}
    \end{equation*}
	\caption[Quantum Circuit Example]{An example of a whole quantum circuit.}
	\labfig{circuit_example}
\end{figure}

\begin{marginfigure}[*3]
    \begin{equation*}
        \tikzfig{1_Quantum_Computation/measurement_gate}
    \end{equation*}
    \caption[Measurement Gate]{Measurement symbol in a quantum circuit.}
    \labfig{measurement_symbol}
\end{marginfigure}

Horizontal composition of gates represents the matrix product of the corresponding matrices, so a Hadamard gate on the left of a NOT gate transforms the input state $\ket{\psi}$ according
to $XH\ket{\psi}$, in other words the Hadamard gate is applied first. Vertical composition of gates represents the tensor product of the corresponding matrices, so a Hadamard gate
positioned above a NOT gate transforms the input state $\ket{\psi\otimes\phi}$ according to $H\ket{\psi}\otimes X\ket{\phi}$. Notice that we order $\psi$ and $\phi$ from left to right,
corresponding to top to bottom in the circuit. Finally, on the right-most side of the circuit, there might be symbols like in \reffig{measurement_symbol}. These represent quantum measurements, which result in classical bits, and are hence connected to classical wires, denoted as \textit{double wires}.

\section{Complexity and Classical Simulation}
\labsec{complexity_and_simulation}

In 1994, Peter Shor published his work on an algorithm that is today known as \textit{Shor's algorithm}. It is a \textit{quantum algorithm}, meaning the computation is done on
a quantum computer. The goal is to factor an integer $N$. The general, most-efficient classical algorithm, the \textit{general number field sieve}, has a sub-exponential
time-complexity\sidenote{This means it may grow faster than any polynomial, but is still significantly smaller than an exponential \cite{TimeComplexity2024}.}. Shor's algorithm
on the other hand is polynomial in $\log(N)$. The key part here is that we do not know whether there exists a classical algorithm that could match or beat Shor's algorithm.
There is strong evidence that the quantum algorithm is actually better, however, we do not know for certain \cite{ShorsAlgorithm2024}. Even more so, we \textit{do not know for certain}
whether quantum computers actually pose any advantage \cite{nielsenQuantumComputationQuantum2012}.

The field concerned with the difficulty of classical and quantum computational problems is called \textit{computational complexity theory}. The most basic concept is
the \textit{complexity class}, a set of computational problems that require similar computational resources to be solved. \textbf{P} and \textbf{NP} are the most
important classes. \textbf{P} is the set of all problems that can be solved quickly on a classical computer. \textbf{NP} is the set of all problems whose solutions
can be quickly checked on a classical computer. As an example, consider the problem of factoring an integer $N$. As previously discussed, we think that there exists no algorithm
capable of solving the problem quickly on a classical computer, and we thus think it is not in \textbf{P}. However, given a potential solution consisting of $p_1,\dots,p_n$, we
can simply multiply them to see if the result equals $N$. Thus, it definitely is in \textbf{NP}. It is clear that \textbf{P} is a subset of \textbf{NP}. What remains
one of the greatest unsolved problems in theoretical computer science is whether these two classes are different, in other words
\begin{equation}
    \mathbf{P} \stackrel{?}{\neq} \mathbf{N P}.
\end{equation}
Although we know that all problems in \textbf{P} can be solved efficiently using a quantum computer, we again do not know whether that is true for \textbf{NP} as
well \cite{nielsenQuantumComputationQuantum2012}.

While such grand questions remain unsolved, smaller yet still very important statements have been proven. One of them is
the \textbf{Gottesman-Knill theorem} \cite{gottesmanHeisenbergRepresentationQuantum1998}:
\begin{theorem}
    Any quantum computer performing only: a) Clifford group gates, b) measurements of Pauli group operators, and c) Clifford group operations
    conditioned on classical bits, which may be the results of earlier measurements, can be perfectly simulated in polynomial time on a
    probabilistic classical computer.
\end{theorem}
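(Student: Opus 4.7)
The plan is to prove the theorem via the \emph{stabilizer formalism}, which provides an efficient classical representation for the class of states reachable by Clifford operations from a computational basis state. The key object is the stabilizer group: for an $n$-qubit pure state $\ket{\psi}$, the set $S=\{P\in\mathcal{P}_n : P\ket{\psi}=\ket{\psi}\}$ of Pauli operators fixing $\ket{\psi}$ is an abelian subgroup of the Pauli group $\mathcal{P}_n$. A state of this form is called a stabilizer state, and when $S$ is generated by $n$ independent commuting Pauli operators, it uniquely determines $\ket{\psi}$ up to global phase. First, I would show that any initial computational basis state $\ket{0}^{\otimes n}$ is a stabilizer state, stabilized by $\langle Z_1,\dots,Z_n\rangle$, and that each generator admits an $O(n)$-bit description (two bits per qubit for the Pauli type plus a sign bit). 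Consequently, the entire state can be stored in an $O(n^2)$ classical tableau, which is the polynomial-size data structure we will maintain throughout the simulation.

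Next, I would treat the three operations from the theorem in turn. For Clifford gates, I would invoke the defining property that conjugation by $U\in\mathcal{C}_n$ sends Pauli operators to Pauli operators: $UPU^{\dagger}\in\mathcal{P}_n$. Thus if $\ket{\psi}$ has stabilizer generators $g_1,\dots,g_n$, then $U\ket{\psi}$ has stabilizer generators $Ug_1U^{\dagger},\dots,Ug_nU^{\dagger}$, and updating the tableau amounts to applying the known conjugation rules for a generating set of Clifford gates (say $H$, $S$, and $\mathrm{CNOT}$) to each generator column-by-column. This costs $O(n)$ per gate. For Pauli measurements, I would split into two cases based on whether the measured operator $P$ commutes with every stabilizer generator. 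If it does, $P\in\pm S$ and the outcome is deterministic; identifying the sign reduces to Gaussian elimination over $\mathbb{F}_2$ on the tableau, which is polynomial. If it anticommutes with some generator $g_k$, then both outcomes occur with probability $\tfrac{1}{2}$; I would simulate this by flipping a classical fair coin, replacing $g_k$ with $\pm P$, and multiplying the remaining anticommuting generators by $g_k$ to restore commutativity. The third ingredient, classically conditioned Cliffords, is then trivial: the classical controller simply chooses which of a fixed set of Clifford update rules to apply based on stored measurement bits.

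Finally, I would assemble these pieces: starting from the tableau for $\ket{0}^{\otimes n}$, each of the $T$ operations in the circuit updates the tableau in time $O(n^2)$ (the bottleneck being Gaussian elimination during measurements), giving total runtime $O(Tn^2)$, which is polynomial provided the circuit depth is polynomial. Correctness of the measurement sampling follows because the simulated probabilities $\tfrac{1}{2},\tfrac{1}{2}$ (resp.\ $1,0$) match the Born rule applied to the stabilizer state, and the post-measurement stabilizer group is precisely the centralizer computation just described.

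The main obstacle, in my view, is the measurement step, specifically a clean argument that (i) the Born-rule probability of outcome $\pm 1$ genuinely equals $\tfrac{1}{2}$ whenever $P$ anticommutes with some generator, and (ii) the updated generating set faithfully stabilizes the projected post-measurement state. Both points hinge on the fact that $\tfrac{1}{2}(I\pm P)$ is a rank-$2^{n-1}$ projector commuting appropriately with $S$, and the verification that the new set $\{\pm P,\, g_1',\dots,g_{k-1}',g_{k+1}',\dots,g_n'\}$ (with the $g_i'$ adjusted to commute with $P$) is independent, abelian, and of size $n$. Everything else is essentially bookkeeping on the tableau, but this probability-and-consistency check is where the nontrivial content of the theorem lives.
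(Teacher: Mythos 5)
Your proposal is a correct, standard proof via the stabilizer/tableau formalism, but note that the paper itself does not prove this theorem at all: it states the Gottesman--Knill theorem as a cited result from \cite{gottesmanHeisenbergRepresentationQuantum1998} and later remarks only that a proof can be carried out in the ZX-calculus, following \cite{vandeweteringZXcalculusWorkingQuantum2020}. So the comparison is between your Heisenberg-picture argument --- track the $n$ generators of the stabilizer group $S\subseteq\mathcal{P}_n$ of $\ket{0}^{\otimes n}$ in an $O(n^2)$-bit tableau, conjugate them through $H$, $S$, $\mathrm{CNOT}$, and handle Pauli measurements by the commuting/anticommuting case split --- and the graphical route the paper gestures at, in which one shows that every Clifford ZX-diagram can be reduced to a normal form (a graph state with local Cliffords) using polynomially many applications of the rewrite rules, so that amplitudes of Clifford circuits can be read off efficiently. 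Your approach is more elementary and self-contained and directly yields the stated runtime; the ZX route buys a proof that lives entirely inside the calculus the thesis is about, at the cost of first establishing soundness and a terminating simplification procedure. Your outline has no real gap: the point you flag as the ``main obstacle'' is in fact a one-line computation, since if $P$ anticommutes with a generator $g$ then $\braket{\psi|P|\psi}=\braket{\psi|Pg|\psi}=-\braket{\psi|gP|\psi}=-\braket{\psi|P|\psi}=0$, forcing outcome probabilities $\tfrac12$ each, and the updated generating set is easily checked to be $n$ independent commuting Paulis stabilizing the projected state. Two cosmetic points: you should also record that $-I\notin S$ for $S$ to determine a state, and naive Gaussian elimination in the deterministic-measurement case is $O(n^3)$ rather than $O(n^2)$ unless you carry destabilizers as in Aaronson--Gottesman --- either way the bound is polynomial, which is all the theorem asserts.
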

In the next chapter, after having introduced new tools, we will consider this theorem in more detail. For now, it should suffice to say that there
are \textbf{Clifford circuits} and \textbf{non-Clifford circuits}, where according to the theorem, it is possible to simulate the former efficiently on a classical
computer\sidenote{And we think that the latter cannot be simulated efficiently on a classical computer, but they can be executed efficiently on a quantum computer.}.

In order to see why one might be interested in \textbf{classical simulation}, we need to take a quick detour to experimental physics. Unsurprisingly,
the actual physical implementation of quantum computers poses yet another big challenge. For instance, we do not know which approach\sidenote{Superconductors,
ions, cold atoms, silicon dots, topological materials, photonic crystals, nitrogen vacancies, nuclear magnetic resonance etc.} will "win the race".
Each approach has certain advantages but also disadvantages, which are most often related to the number of qubits, the quality of the qubits and the circuit execution
speed \cite{pfaendlerAdvancementsQuantumComputing2024}. One of the problems that has to be solved is how to actually test whether the results a newly built
quantum computer produces are correct. This can be done by simulating a quantum computer on a classical computer. Such a \textbf{quantum circuit simulator} might
also be used to test or even come up with quantum algorithms \cite{xuHerculeanTaskClassical2023}.

Classical simulation can be split up into two domains \cite{KissingerWetering2024Book}:
\begin{enumerate}
    \item \textbf{Weak simulation} is concerned with sampling from the probability distribution that one would get when actually running the quantum circuit.
    \item \textbf{Strong simulation} is concerned with calculating the actual probability of observing a certain measurement outcome.
\end{enumerate}

Additionally, there exists a variety of different simulation methods based on ideas such as state vectors, density-matrices, matrix product states or tensor
networks. We will focus on another approach based on \textit{stabilizer decompositions}, which we will discuss in \refch{novel_state_decompositions} and
subsequent chapters. In \refsec{application_multiloop_feynman_diagrams} we will use a state vector approach implemented in Qiskit \cite{qiskit2024} in order to
check our own simulation results. Interested readers may refer to \cite{xuHerculeanTaskClassical2023} for a more detailed discussion of the various simulation
methods.
\setchapterpreamble[u]{\margintoc}
\chapter{Graphical Languages}
\labch{graphical_languages}

\begin{quote}
    \small \textit{A diagram to help write down the mathematical expressions.} \\
    \small --- \textup{Richard Feynman}
\end{quote}

\begin{marginfigure}[*2.9]
    \includegraphics{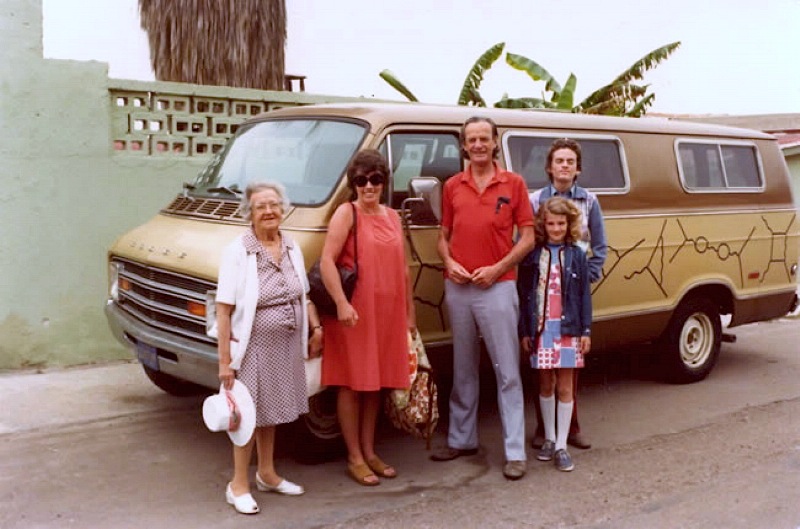}
    \caption[Feynman van]{Iconic Dodge Tradesman Maxivan \cite{FeynmanDsonVanjpgJPEGImage}.}
    \labfig{margin_feynman_van}
\end{marginfigure}

\begin{marginfigure}[*11.1]
    \includegraphics{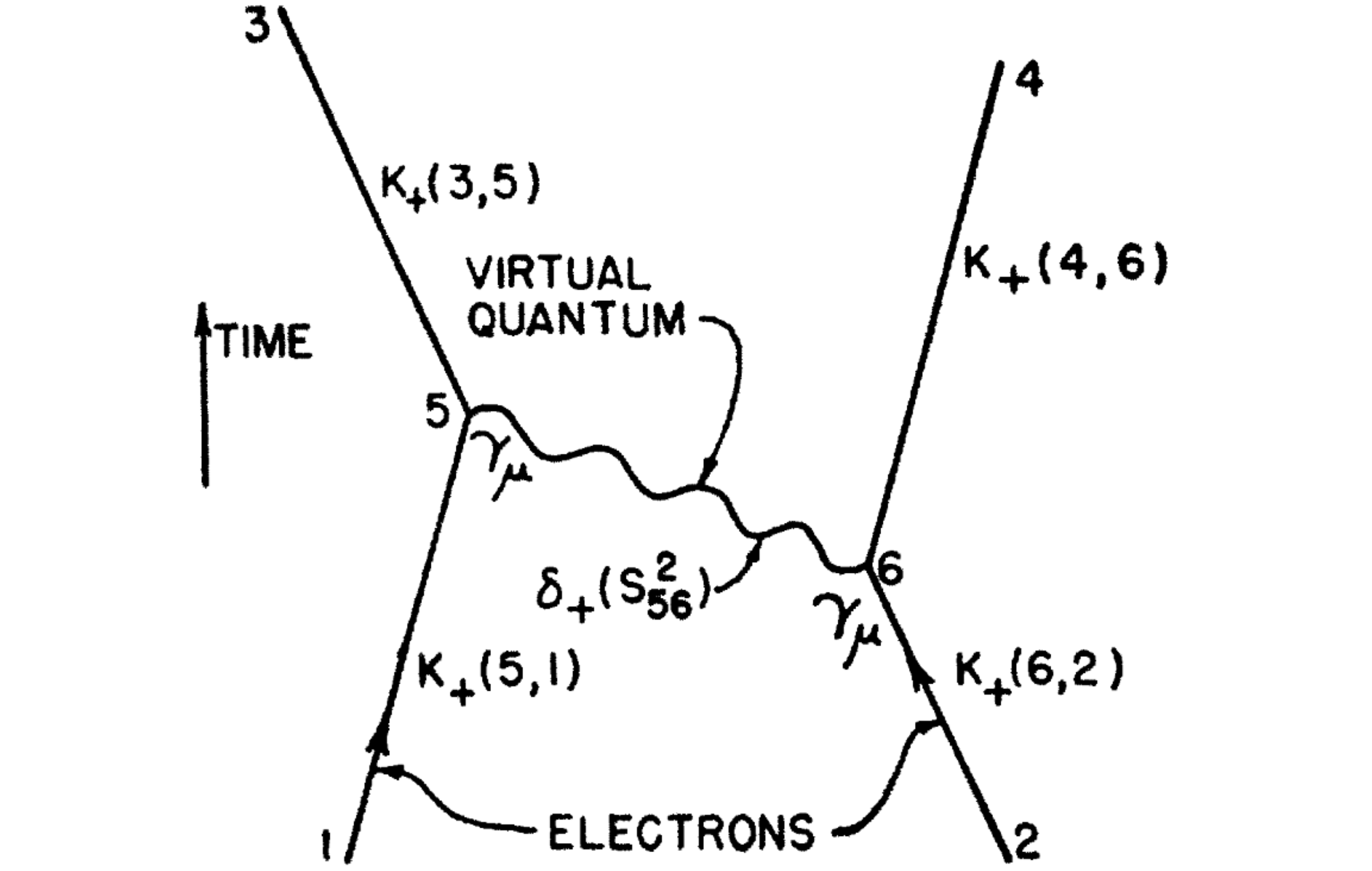}
    \caption[First Feynman diagram]{First published Feynman diagram \cite{feynmanSpaceTimeApproachQuantum1949}.}
    \labfig{margin_first_feynman_diagram}
\end{marginfigure}
In 1975, perphaps one of the most iconic cars, at least for a physicist, was bought (\cf \reffig{margin_feynman_van}). The well-sighted reader might have spotted the rather peculiar
drawings on the side of the van. Unsurprisingly, the owner of the van, Richard Feynman, came up with these types of diagrams himself. The first
\textit{Feynman diagram}, which can be seen in \reffig{margin_first_feynman_diagram}, was published in 1949 in \cite{feynmanSpaceTimeApproachQuantum1949}.

\begin{figure}[!ht]
	\includegraphics[width=\textwidth]{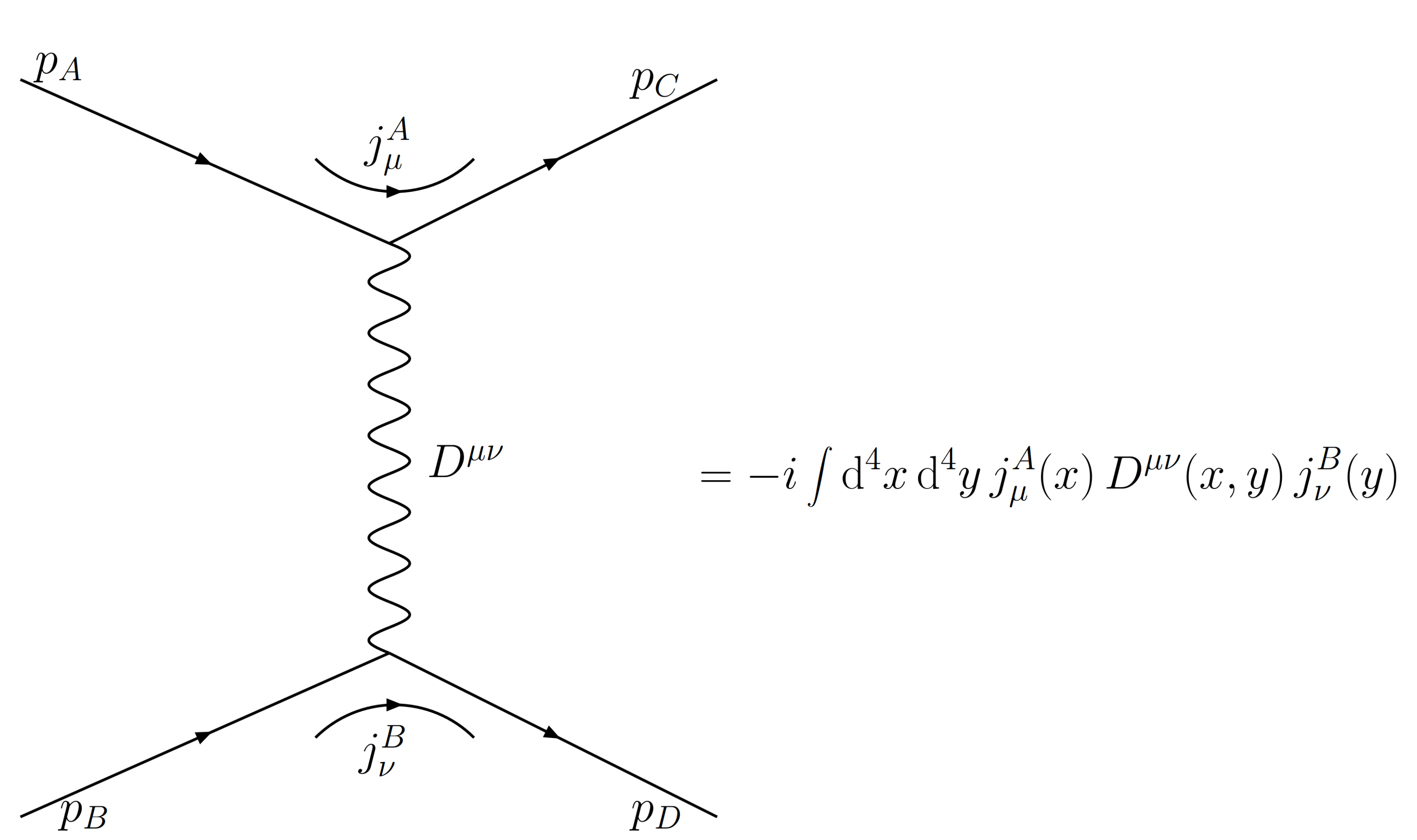}
	\caption[Feynman diagram formula]{Equivalence between Feynman diagram and formula \cite{seymourMeaningFeynmanDiagrams}.}
	\labfig{normal_feynman_diagram_formula}
\end{figure}
\begin{marginfigure}[*2]
    \includegraphics{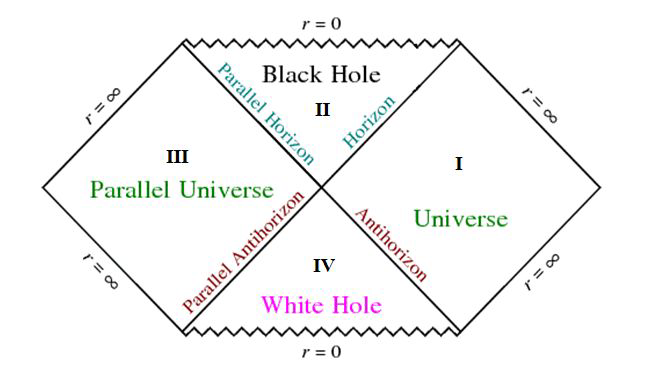}
    \caption[Penrose diagram]{Example of a Penrose diagram \cite{FigureFourDifferent}.}
    \labfig{margin_penrose_diagram}
\end{marginfigure}
\begin{marginfigure}[*10]
    \includegraphics{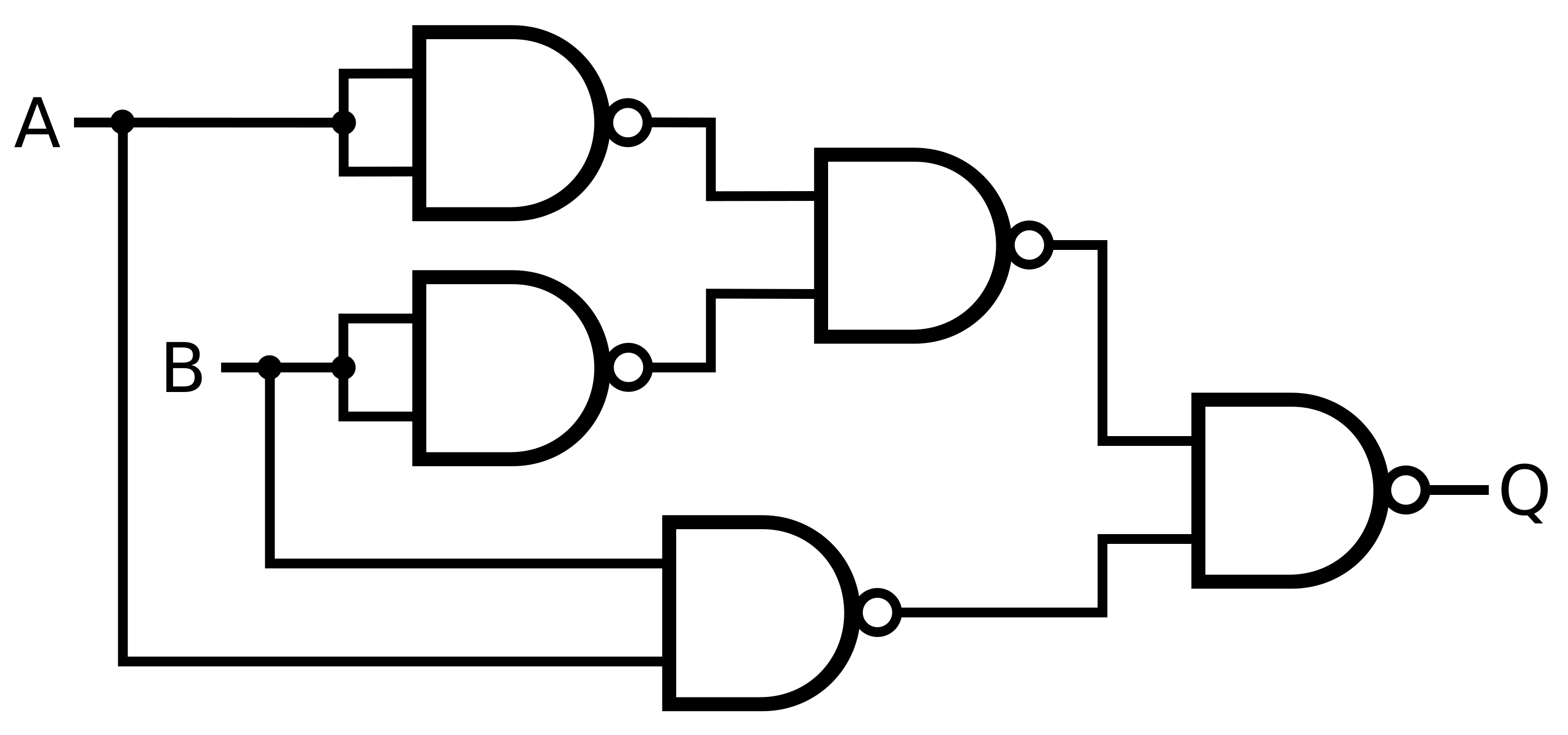}
    \caption[Logic circuit]{Example of a logic circuit \cite{LogicGate2024}.}
    \labfig{margin_logic_circuit}
\end{marginfigure}
\begin{marginfigure}[*17]
    \includegraphics{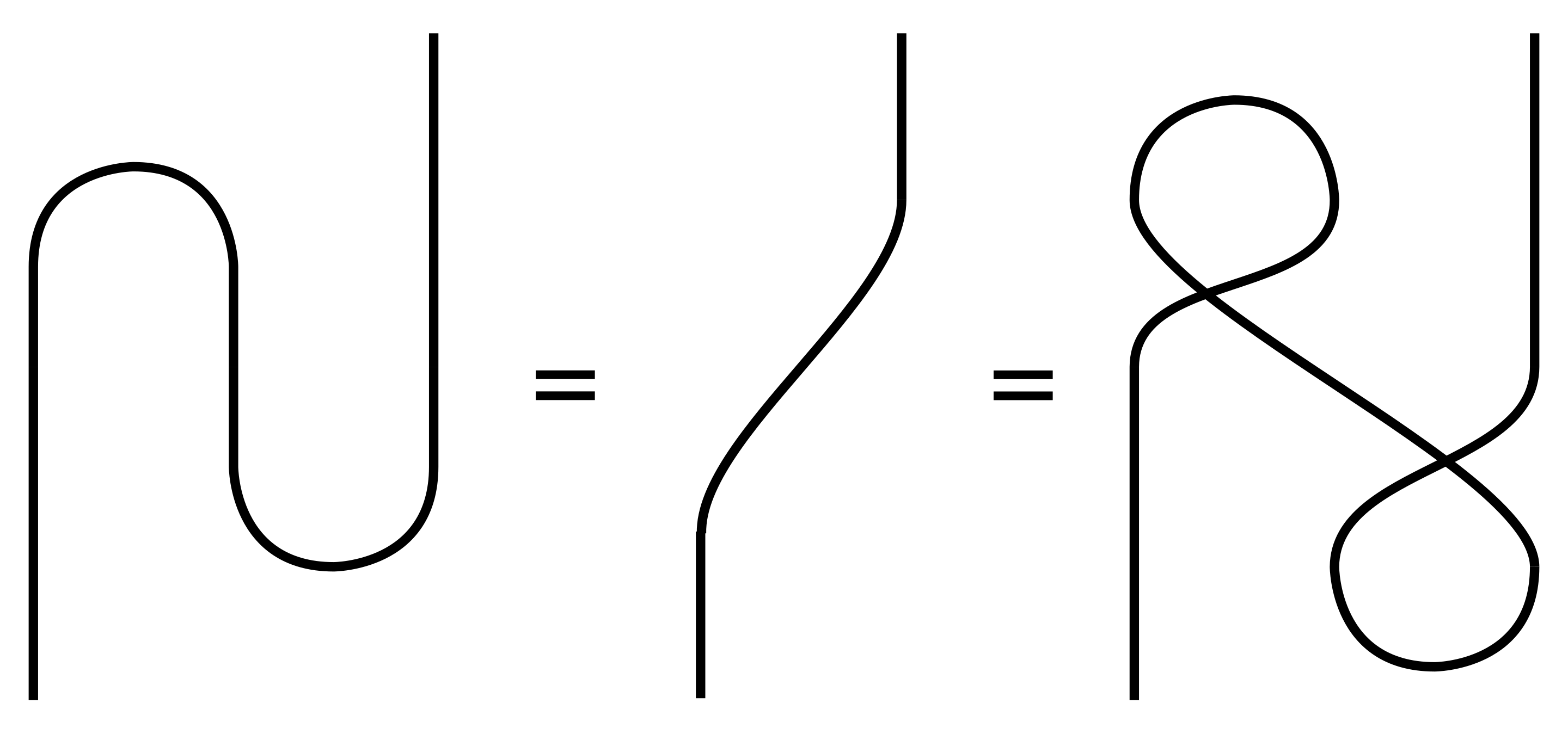}
    \caption[Tensor diagram notation]{Example of tensor diagram notation \cite{PenroseGraphicalNotation2024}.}
    \labfig{margin_tensor_diagram_notation}
\end{marginfigure}

Since then, Feynman diagrams have become an essential part in the toolkit of physicists studying \textit{Quantum Field Theory (QFT)}. In brief, QFT combines
classical field theory, special relativity and quantum mechanics into a powerful theoretical framework which is commonly used in particle physics and condensed matter
physics. The current standard model of particle physics is based on QFT \cite{QuantumFieldTheory2024}. Feynman diagrams cannot only help to visualize formulas, they can also help
to \textit{come up} with formulas. In fact, there are purely graphical rules that define how a diagram may be rewritten, without having to deal with the underlying equations.
\reffig{normal_feynman_diagram_formula} shows a Feynman diagram and the according formula that represents a scattering process, for example electron-muan scattering,
where particle $A$ scatters into particle $B$. More precisely, both the diagram and the formula represent the \textit{scattering amplitudes}.
The details, however, lie outside the scope of this thesis \cite{seymourMeaningFeynmanDiagrams}.

The importance of \textit{graphical languages} is apparent: Feynman diagrams simplify difficult QFT calculations, Penrose diagrams (\cf \reffig{margin_penrose_diagram})
help represent causal relations in spacetime \cite{PenroseDiagram2024}, logic circuits (\cf \reffig{margin_logic_circuit}) give a simple representation of complex compositions of Boolean functions \cite{LogicGate2024} and tensor diagram notation (\cf \reffig{margin_tensor_diagram_notation}) allows for simple graphical manipulations of tensors \cite{PenroseGraphicalNotation2024}.

Nevertheless, it is important to keep in mind that graphical languages are simply a tool, the same way a drill is great for screws, but rather bad for nails. You
could use it to hammer in a nail, but there are better tools like a hammer. This holds true for any tool a mathematician or scientist might use.

\section{ZX-calculus}

\enlargethispage{2\baselineskip}
A \textit{ZX-diagram} is a graphical representation of a linear map between qubits. It is similar to the tensor diagram notation presented in the last
section\sidenote{In fact, a ZX-diagram is a \textit{tensor network} \cite{KissingerWetering2024Book}.}. Moreover, it is linked to the quantum
circuit notation. As will soon be evident, we can convert \textit{any} quantum circuit into a ZX-diagram:
\begin{equation}\labeq{first_example_of_zx_eq}
    \tikzfig{2_Graphical_Languages/first_example_qcirc} \ \leadsto \ \tikzfig{2_Graphical_Languages/first_example_zx}
\end{equation}
The \textit{ZX-calculus} is a graphical language consisting of ZX-diagrams and \textit{rewrite rules} that allow one to manipulate such
diagrams \cite{vandeweteringZXcalculusWorkingQuantum2020}. It was introduced in 2008 by Coecke and Duncan \cite{coeckeInteractingQuantumObservables2008}.
\begin{example}\labexample{ghz_state_example}
    It is possible to diagrammatically prove that the circuit from \refeq{first_example_of_zx_eq} creates a so-called \textit{GHZ state}
    \begin{equation}
        \frac{\ket{000}+\ket{111}}{\sqrt{2}}
    \end{equation}
    using the rewrite rules\sidenote{Although not being introduced yet, the fact that one might already be able to guess certain rules, speaks for the
    intuitive nature of this tool.}:
    \begin{equation*}
        \tikzfig{2_Graphical_Languages/first_example_zx_step1}=\tikzfig{2_Graphical_Languages/first_example_zx_step2}
        =\tikzfig{2_Graphical_Languages/first_example_zx_step3}=\tikzfig{2_Graphical_Languages/first_example_zx_step4}
        =\tikzfig{2_Graphical_Languages/first_example_zx_step5}
    \end{equation*}
\end{example}
Readers interested in more such examples or a more detailed introduction to the ZX-calculus than the one to follow, may refer
to \cite{vandeweteringZXcalculusWorkingQuantum2020}, where \refexample{ghz_state_example} is taken from, or, alternatively, to \cite{KissingerWetering2024Book}.

In the remainder of this section, we will more formally introduce the ZX-calculus\sidenote{If not noted otherwise, the primary source of information for the following
is \cite{KissingerWetering2024Book}.}. \refsec{related_calculi} will introduce certain variations of the
ZX-calculus and finally, \refsec{applications} will explore previous and current research areas by reviewing state-of-the-art applications.

ZX-diagrams are constructed by taking matrix products or tensor products\sidenote{The process is analogous to how we treated the matrices corresponding to
the gates in quantum circuits, in other words, matrix product for horizontal composition and tensor product for vertical composition.} of the following linear maps,
so-called \textit{generators}:
\begin{itemize}
    \item \textit{Z-spider}: \begin{equation}\labeq{z_spider_eq}
        Z_m^n[\alpha] \ = \ \tikzfig{2_Graphical_Languages/z_spider_def} \ = \ \ket{0}^{\otimes n}\bra{0}^{\otimes m}+e^{i\alpha}\ket{1}^{\otimes n}\bra{1}^{\otimes m}
    \end{equation}
    \item \textit{X-spider}: \begin{equation}\labeq{x_spider_eq}
        X_m^n[\alpha] \ = \ \tikzfig{2_Graphical_Languages/x_spider_def} \ = \ \ket{+}^{\otimes n}\bra{+}^{\otimes m}+e^{i\alpha}\ket{-}^{\otimes n}\bra{-}^{\otimes m}
    \end{equation}
\end{itemize}
\begin{remark}
    These generators form two families of linear maps that are given by
    \begin{equation*}
        \left\{ Z_m^n[\alpha] : (\mathbb{C}^2)^{\otimes m} \to (\mathbb{C}^2)^{\otimes n} \ \middle| \ m,n\in\mathbb{N},\alpha\in[0,2\pi) \right\}
    \end{equation*}
    and
    \begin{equation*}
        \left\{ X_m^n[\alpha] : (\mathbb{C}^2)^{\otimes m} \to (\mathbb{C}^2)^{\otimes n} \ \middle| \ m,n\in\mathbb{N},\alpha\in[0,2\pi) \right\}.
    \end{equation*}
\end{remark}

Because of how they look, these generators are called 'spiders'. Hence, the black lines are also referred to as \textit{legs}. Alternatively, we can see ZX-diagrams as graphs,
and thus call the spiders \textit{vertices}/\textit{nodes} and the legs \textit{edges}. In analogy to quantum circuits, the edges might also be called \textit{wires}. Since time
flows from left to right, wires on the left side are \textit{inputs} and wires on the right side are \textit{outputs}. A ZX-diagram with no inputs and no outputs is referred to
as \textit{scalar diagram}. The number $\alpha$ in \refeq{z_spider_eq} and \refeq{x_spider_eq} is the \textit{phase} of the spider. Finally, $n+m$ is
the \textit{arity} or \textit{(vertex) degree} of the spider.

Consider a Z-spider with one input edge, one output edge and a phase of zero:
\begin{equation*}
    \tikzfig{2_Graphical_Languages/plain_wire_z_0}
\end{equation*}
Per convention, a zero-phase is not written at all:
\begin{equation*}
    \tikzfig{2_Graphical_Languages/plain_wire_z}
\end{equation*}
We can calculate the corresponding matrix
\begin{align*}
    \ket{0}^{\otimes 1}\bra{0}^{\otimes 1}+e^{i0}\ket{1}^{\otimes 1}\bra{1}^{\otimes 1} =& \ket{0}\bra{0}+\ket{1}\bra{1} \\
    =&\binom{1}{0}\left(\begin{array}{ll}
        1 & 0
        \end{array}\right)+\binom{0}{1}\left(\begin{array}{ll}
        0 & 1
        \end{array}\right) \\
    =&\left(\begin{array}{ll}
        1 & 0 \\
        0 & 0
        \end{array}\right)+\left(\begin{array}{ll}
        0 & 0 \\
        0 & 1
        \end{array}\right) \\
    =&\left(\begin{array}{ll}
        1 & 0 \\
        0 & 1
        \end{array}\right)
\end{align*}
which happens to be the identity matrix. We have thus derived one of the \textit{nine} rewrite rules\sidenote[][*-1]{The remaining proofs are outside the scope
of this thesis.}
\cite{KissingerWetering2024Book} \cite{sutcliffeProcedurallyOptimisedZXDiagram2024} \cite{eastSpinnetworksZXcalculus2022a} \cite{moyardIntroductionZXcalculus2023a}:
\begin{equation}\labeq{zx_rewrite_rules_eq}
    \tikzfig{2_Graphical_Languages/zx_rewrite_rules}
\end{equation}
\begin{table}[h!]
    \begin{tabular}{ l l l l }
    \SpiderFusion                   & \textbf{sp}ider fusion &
    \ColorChange                    & \textbf{c}olor \textbf{c}hange \\
    \PiCommutation                  & $\bm{\pi}$-commutation rule &
    \StateCopy                      & state \textbf{c}opy \\
    \Bialgebra                      & \textbf{b}ialgebra &
    \HadamardHadamardCancellation   & \textbf{h}adamard-\textbf{h}adamard cancellation \\
    \Hopf                           & \textbf{ho}pf &
    \Identity                       & \textbf{id}entity \\
    \EulerDecomposition             & \textbf{eu}ler decomposition of Hadamard \\
    \end{tabular}
\end{table}

It is worth mentioning that this set of rewrite rules is not minimal. In other words, there exists a smaller set of rewrite rules, with which
one could derive \StateCopy for instance. These 'obsolete' rules are still listed, as we will frequently encounter them throughout this thesis.

Alongside these rewrite rules, there exist two \textit{meta rules}:
\begin{itemize}
    \item All rewrite rules hold true when swapping red and green
    \item Only connectivity matters
\end{itemize}
The former should be clear. As for the latter, it simply means that one is allowed to 'bend' wires in any way.

For practical purposes, a new generator, the \textit{Hadamard box}, might be introduced, even though it can be represented using
more fundamental generators (\cf \EulerDecomposition). Instead of using a yellow box, we sometimes use the equivalent \textit{Hadamard edge}
for notational convenience\sidenote{It is especially used in programmatic implementations.}:
\begin{equation}
    \tikzfig{2_Graphical_Languages/had_box_blue_edge}
\end{equation}

\begin{remark}
    A \textit{scalar} $\xi_i$ in \refeq{zx_rewrite_rules_eq} is multiplied by the matrix representation of the diagram next to it. Since
    many applications in literature are not concerned with the global scalar of a diagram, one usually writes \cite{poorUniqueNormalForm}:
    \begin{equation*}
        \tikzfig{2_Graphical_Languages/approx_zx}
    \end{equation*}
    In the \textit{scalar-free ZX-calculus}\sidenote{This is the first small example of a variation of the ZX-calculus (\cf \refsec{related_calculi}).}, one might even write \cite{pehamEquivalenceCheckingQuantum2022}:
    \begin{equation*}
        \tikzfig{2_Graphical_Languages/scalar_free_zx}
    \end{equation*}
\end{remark}

Looking at \refeq{zx_rewrite_rules_eq}, the curious reader might wonder what values $\alpha$ and $\beta$ typically assume. However, the
answer is dependent on which \textit{fragment} is being used. Listed below are the two most commonly used fragments.
\begin{definition}\labdef{clifford_fragment}
    The \textit{Clifford-fragment}, \textit{$\frac{\pi}{2}$-fragment} or \textit{stabilizer-fragment}\sidenote{The first two terms are
    used in \cite{jeandelCompletenessZXCalculus2020a}. The term 'StabZX' is used in \cite{laakkonenGraphicalStabilizerDecompositions2022}, which for the sake of clarity we shall call \textit{stabilizer-fragment} instead. The definitions refer to the same.} restricts the ZX-calculus
    by requiring the phase of spiders to be $\alpha=k\frac{\pi}{2}, \ k\in\mathbb{Z}$.
\end{definition}
\begin{definition}\labdef{clifford_t_fragment}
    The \textit{Clifford+T-fragment} or \textit{$\frac{\pi}{4}$-fragment} restricts the ZX-calculus
    by requiring the phase of spiders to be $\alpha=k\frac{\pi}{4}, \ k\in\mathbb{Z}$.
\end{definition}
\begin{remark}
    Accordingly, we can talk about \textit{Clifford ZX-diagrams}/\textit{stabilizer ZX-diagrams}
    and \textit{non-Clifford ZX-diagrams}/\textit{non-stabilizer ZX-diagrams} \cite{KissingerWetering2024Book}\cite{laakkonenGraphicalStabilizerDecompositions2022}.
\end{remark}

\refdef{clifford_fragment} is of interest, since it can be shown that the \textit{simplification routines} required to evaluate a Clifford ZX-diagram
constructed from a Clifford circuit can be run in polynomial time \cite{laakkonenGraphicalStabilizerDecompositions2022}. As demonstrated in \cite{vandeweteringZXcalculusWorkingQuantum2020}, one can even prove the Gottesman-Knill theorem using the ZX-calculus.

On the other hand, non-Clifford diagrams resulting from \refdef{clifford_t_fragment} can be used to show \textit{universality} of the ZX-calculus.
In fact, this is one of the three main questions to ask when evaluating the sensibility of such a calculus with respect to its usefulness for reasoning
about quantum systems \cite{muussLinearCombinationsZXdiagramsa}:

\textbf{Is the ZX-calculus universal?} {} \textit{Can ZX-diagrams be used to represent any linear map between finite-dimensional Hilbert spaces?} The
answer is yes. In the following, we will outline a short proof.
\begin{proof}
    As a first step, we need to show that any scalar can be
    represented as a ZX-diagram. The following identities hold true:
    \begin{equation*}
        \tikzfig{2_Graphical_Languages/scalars_as_zx}
    \end{equation*}
    It is possible to show that one can construct any complex number using these scalars. As a next step, we will represent the CNOT, the Hadamard, the S and the
    T-gate as ZX-diagrams:
    \begin{equation*}
        \tikzfig{2_Graphical_Languages/universal_zx_gate_set}
    \end{equation*}
    Since this is a known \textit{universal quantum gate set} \cite{QuantumLogicGate2024}, we can thus, together with the correct scalars, represent any linear map between finite-dimensional Hilbert spaces \cite{muussLinearCombinationsZXdiagramsa}.
\end{proof}

\textbf{Is the ZX-calculus sound?} {} \textit{Is the interpretation of any ZX-diagram invariant under all rewrite rules?} The answer is again yes.
The proof involves showing invariance for each rewrite rule, which is outside the scope of this thesis \cite{muussLinearCombinationsZXdiagramsa}.

\textbf{Is the ZX-calculus complete?} {} \textit{Is it possible to prove any true equation for linear mappings using ZX-diagrams and a set of
rewrite rules of the ZX-calculus?} And again, the answer is yes. However, this result took many years until the most general version could be proven \cite{poorZXcalculusCompleteFiniteDimensional2024}. Simpler versions such as the completeness for the stabilizer-fragment were proven before
that \cite{backensZXcalculusCompleteStabilizer2014}.

\section{Related Calculi}
\labsec{related_calculi}

In the previous section, we have seen that new generators like the Hadamard box can be introduced in order to make diagrams more concise.
We will consider another example of such generators in \refsec{prev_work_stab_decomp}, where we will see that the introduction of \textit{triangles},
and for that matter also \textit{stars}, can make our diagrams much more compact and allow us to make easy-to-write rules.

As it turns out, one can much more drastically change the Z\textbf{X}-calculus, by replacing the fundamental generator \textbf{X} with the \textbf{H}
or \textbf{W} generator.

The first of these two generators is the \textit{H-box}, which is a generalization of the Hadamard box. Formally, it is defined as
\begin{equation}
    \tikzfig{2_Graphical_Languages/h_box_def}=\sum a^{i_1 \ldots i_m j_1 \ldots j_n}\ket{j_1 \ldots j_n}\bra{i_1 \ldots i_m}
\end{equation}
where the sum is over all $i_1, \ldots, i_m, j_1, \ldots, j_n \in\{0,1\}$ and $a\in\mathbb{C}$. It can thus be seen that all entries of the corresponding
matrix are $1$, expect for the bottom right element, which is equal to $a$. For example, we have that
\begin{equation*}
    \tikzfig{2_Graphical_Languages/h_box_one_inp_one_out}=\left(\begin{array}{ll}
        1 & 1 \\
        1 & a
        \end{array}\right).
\end{equation*}
Notice that for $a=-1$, we simply get a rescaled Hadamard box:
\begin{equation}
    \tikzfig{2_Graphical_Languages/h_box_hadamard_box_corr}
\end{equation}
It is standard practice to omit the $-1$ label for H-boxes, which results in the same yellow box that is used for Hadamard boxes. Since this might lead to confusion,
it is important to clarify which convention is being used, if it is not clear from the context.

This generator alongside with the according rewrite-rules\sidenote{We refrain from showing the rewrite-rules, as they will not be used in this thesis.} results in
the \textit{ZH-calculus}, which was introduced in \cite{backensZHCompleteGraphical2019}.

There are many applications of this new calculus, some of which we will see in \refsec{applications}. One common use-case is the representation
of controlled unitaries. For instance, it is possible to show the following \cite{vandeweteringZXcalculusWorkingQuantum2020}:
\begin{equation*}
    \tikzfig{2_Graphical_Languages/zh_tof_qcirc_proof_idea1}
\end{equation*}
We can extract the resulting quantum circuit from the last diagram by using so-called \textit{phase gadgets}, giving us a particularly efficient construction:
\begin{equation*}
    \tikzfig{2_Graphical_Languages/zh_tof_qcirc_proof_idea2}
\end{equation*}

The second generator is the \textit{W-spider}, which is formally defined as
\begin{align}
    \tikzfig{2_Graphical_Languages/w_spider_def}= \ &|10 \ldots 0\rangle\langle 0 \ldots 0|+\ldots+|0 \ldots 01\rangle\langle 0 \ldots 0| \\
    + \ &|0 \ldots 0\rangle\langle 10 \ldots 0|+\ldots+|0 \ldots 0\rangle\langle 0 \ldots 01|.
\end{align}
where the sum is over all $\ket{x_1 \ldots x_n}\bra{y_1 \ldots y_m}$ such that only one of $x_i$ and $y_j$ is 1.
This generator creates the basis for a new calculus, the \textit{ZW-calculus}, which was introduced in \cite{hadzihasanovicDiagrammaticAxiomatisationQubit2015}.

At this point it is worth noting that a translation between the ZX-calculus, the ZH-calculus and the ZW-calculus is in fact possible, as demonstrated
in \cite{caretteRecipeQuantumGraphical2020}. This means that, in theory, all three calculi are capable of the same, however, in practice, one calculus might be
much easier to work with than the others. Furthermore, it is possible to combine aspects of the individual calculi, resulting in the \textit{ZXH-calculus} and
the \textit{ZXW-calculus} \cite{eastSpinnetworksZXcalculus2022a} \cite{shaikhHowSumExponentiate2023}.

As demonstrated in \cite{poorCompletenessArbitraryFinite2023b}, one might even consider a $d$-dimensional extension of the ZXW-calculus, which can be used
to reason about \textit{qudits}. One of the commonly used generators in that calculus is the \textit{W node}:
\begin{equation}\labeq{w_node_def}
    \raisebox{-0.40\height}{\includegraphics[height=1.0cm]{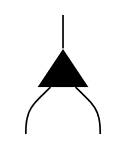}}=|00\rangle\langle 0|+\sum_{i=1}^{d-1}(|0 i\rangle+|i 0\rangle)\langle i|
\end{equation}
By setting $d=2$, we can see that \refeq{w_node_def} becomes equivalent to the following diagram involving W-spiders:
\begin{equation*}
    \tikzfig{2_Graphical_Languages/w_node_w_spider_equiv}
\end{equation*}

\section{Applications}
\labsec{applications}

At the start of this chapter, we have seen many graphical languages used in a wide variety of research fields. In the following, we will provide
an overview of what research directions have been pursued since the introduction of the ZX-calculus in 2008.

One of the first applications of the ZX-calculus was the description of graph states and measurement-based quantum computations \cite{duncanGraphsStatesNecessity2009}
\cite{duncanRewritingMeasurementBasedQuantum2010}. Another early use case was the verification of quantum protocols \cite{hillebrandQuantumProtocolsInvolving2011}.
Verification of oracle-based quantum algorithms was later enabled through the development of the scalable ZX-calculus \cite{caretteQuantumAlgorithmsOracles2021a}.
As it turns out, the ZX-calculus is also a great language to reason about lattice surgery on surface codes \cite{beaudrapZXCalculusLanguage2017}
\cite{beaudrapPauliFusionComputational2019}. It has also proved useful for finding and verifying the correctness of quantum error correcting codes
\cite{chancellorGraphicalStructuresDesign2018} \cite{duncanVerifyingSteaneCode2014} \cite{garvieVerifyingSmallestInteresting2018}. It has even been used to study
more exotic ideas like quantum natural language processing, which aims to study linguistic meanings and structures using quantum hardware
\cite{coeckeFoundationsNearTermQuantum2020}. Arguably one of the biggest research directions is concerned with classical simulation
\cite{kissingerClassicalSimulationQuantum2022} \cite{kissingerSimulatingQuantumCircuits2022} \cite{sutcliffeFastClassicalSimulation2024}, quantum circuit
optimization \cite{cowtanPhaseGadgetSynthesis2020} \cite{debeaudrapTechniquesReducePi2020} \cite{duncanGraphtheoreticSimplificationQuantum2020} and quantum
circuit equality validation \cite{kissingerReducingNumberNonClifford2020}. The ZX-calculus has been successfully applied to the boolean and counting satisfiability problem
\cite{debeaudrapTensorNetworkRewriting2021} \cite{laakkonenGraphicalStabilizerDecompositions2022}. In \refsec{this_work_state_decomps} we will see aspects
of another research direction, concerned with the analysis of barren plateaus \cite{zhaoAnalyzingBarrenPlateau2021} \cite{wangDifferentiatingIntegratingZX2022}.
Finally, the ZX-calculus enables us to study not only problems from computer science, but also many problems from physics. For example, it has been used
in condensed matter physics to describe AKLT-states \cite{eastAKLTStatesZXDiagramsDiagrammatic2022a}. Even links to quantum field theory have been made, by 
creating a categorical description of Feynman diagrams\sidenote{We will briefly talk about quantum field theory and Feynman diagrams in \refch{weighting_algorithms}.}
\cite{shaikhCategoricalSemanticsFeynman2022}, or more recently, by introducing Fock spiders, which allow for continuous-variable quantum computation. This
could enable native simulation of quantum field theories in the future \cite{shaikhFockedupZXCalculus2024}. The ZXH-calculus has also been used to describe
$\text{SU}(2)$ representation theory, more precisely to represent Yutsis diagrams and Penrose spin-networks, which creates a connection to loop quantum gravity
\cite{eastSpinnetworksZXcalculus2022a}. Lastly, \cite{gorardZXCalculusExtendedHypergraph2020} explored the relationship between the ZX-calculus and
the Wolfram model, a fundamental theory of physics that tries to describe the fundamental structure of the universe using hypergraph
rewriting systems \cite{WolframPhysicsProject}.

\pagelayout{wide} % No margins
\addpart{Part II: Decompositions of Non-Stabilizer States}\labpart{part_ii}
\pagelayout{margin} % Restore margins

\setchapterpreamble[u]{\margintoc}
\chapter{Novel State Decompositions}
\labch{novel_state_decompositions}

In \refsec{complexity_and_simulation}, we discussed different (classical) simulation methods. The state vector approach is a widely used method.
However, when applied on $n$ qubits, this approach leads to a computational complexity of $\mathcal{O}(2^n)$. One approach that tends to yield much better
results both memory-wise and time-wise is the concept of \textbf{stabilizer decompositions}. It allows us to do strong simulation.

\section{Previous Work}
\labsec{prev_work_stab_decomp}

Consider any given quantum circuit. Furthermore, without loss of generality, assume the input to be $\ket{0}^{\otimes n}$. The idea of a stabilizer
decomposition is to decompose the output $\ket{\psi}$ into a linear combination of Clifford states\sidenote{A (non-)Clifford state is a state obtained
from a (non-)Clifford circuit (\cf \refsec{complexity_and_simulation}).} $\ket{\phi_1},\dots,\ket{\phi_k}$
\begin{equation}
    \ket{\psi}=\sum_{i=1}^k a_i \ket{\phi_i}
\end{equation}
for complex coefficients $a_i$. The probability of any output $x$ is thus given by
\begin{equation}
    \braket{x|\psi}=\sum_{i=1}^k a_i \braket{x|\phi_i}.
\end{equation}
According to the Gottesman-Knill theorem, each term $\braket{x|\phi_i}$ can be computed in polynomial time, meaning the whole strong simulation has
a complexity of $\mathcal{O}(k)$.
\begin{remark}\labremark{no_conflict_stab_decomp_remark}
    This procedure creates no conflicts with the theoretical complexity of the problem. This can be seen by decomposing a non-Clifford state into
    two Clifford states. If we then want to use this decomposition for three of the original non-Clifford states combined using a tensor product,
    we have to multiply the Clifford states individually, resulting in $2^3=8$ Clifford states. We conclude that we still have exponential scaling.
\end{remark}
Consequently, minimizing the number of terms in such decompositions has become an active area of current research \cite{codsiCuttingEdgeGraphicalStabiliser}.
Such a minimization is limited by the \textit{stabilizer rank}.
\begin{definition}
    Given an arbitrary $n$-qubit state $\psi$, we define the \textit{stabilizer rank} of $\psi$ as the smallest non-negative integer $\chi$, denoted $\chi(\psi)$,
    such that it is possible to write $\ket{\psi}=\sum_{i=1}^{\chi} a_i \ket{\phi_i}$.
\end{definition}
One widely used example of a general stabilizer decomposition is the so-called \textbf{magic state}
$\ket{T}\coloneqq\frac{1}{\sqrt{2}}(\ket{0}+e^{i\frac{\pi}{4}}\ket{1})$, which can be expressed using ZX-diagrams:
\begin{equation}
    \tikzfig{3_Novel_State_Decompositions/magic-state} = \frac1{\sqrt2}\left(\tikzfig{3_Novel_State_Decompositions/ket-0-GL}
    +e^{i\frac\pi4}\tikzfig{3_Novel_State_Decompositions/ket-1-GL}\right)
\end{equation}

\newpage

Notice that such states can be unfused easily, a process similar to the magic state injection protocol used in other fields apart from the ZX-calculus:
\begin{equation}
    \tikzfig{3_Novel_State_Decompositions/T-spider-to-magic-state}
\end{equation}
According to \refremark{no_conflict_stab_decomp_remark}, $\tikzfig{3_Novel_State_Decompositions/magic-state}^{\otimes t}$ would result in $2^t$
terms. A more sophisticated approach for $t=2$ would be to use:
\begin{equation}\labeq{2_T_decomp}
    \tikzfig{3_Novel_State_Decompositions/2-T-decomposition}
\end{equation}
Notice that the scaling for these decompositions is given by $p^{t/r}$, where $p$ is the number of terms and $r$ represents the number of non-Clifford terms
reduced, more concretely by how much the T-count was reduced. We will measure the scaling using the number $\alpha$, defined by
\begin{equation}
    p^{t/r}=2^{\alpha t}\implies \alpha=\frac{\log_2(p)}{r}
\end{equation}
for further comparisons. In \refeq{2_T_decomp}, for example, the number $\alpha$ is equal to $\log_2(2)/2=0.5$. The objective is thus to find stabilizer decompositions with a small $\alpha$ \cite{kissingerClassicalSimulationQuantum2022}.

One such decomposition that turns out to be very useful was first introduced in \cite{bravyiTradingClassicalQuantum2016} by Bravyi, Smith and Smolin, thus often
being referred to as the \textbf{BSS decomposition}. It was later reformulated using ZX-diagrams\sidenote[][*-1]{Note that whenever the diagrams are turned clockwise for space reasons,
time evolution is from top to bottom.} in \cite{kissingerSimulatingQuantumCircuits2022}:
\begin{equation}
    \tikzfig{3_Novel_State_Decompositions/bss}
\end{equation}
Since $\alpha=\log_2(7)/6\approx 0.468$, we call it a "6-to-7" decomposition.

Interestingly, there exist many other decomposition strategies that sometimes yield even better results. For example, \textbf{partial magic state decompositions}
use non-stabilizer decompositions, meaning the terms in the linear combination might only have \textit{less} T-spiders, but not necessarily \textit{zero}:
\begin{equation}
    \tikzfig{3_Novel_State_Decompositions/magic-state-5-decomp}
\end{equation}
So effectively, this is a 4-to-3 strategy, resulting in $\alpha=\log_2(3)/4\approx 0.396$ \cite{kissingerClassicalSimulationQuantum2022}.

Another approach introduced in \cite{kochSpeedyContractionZX2023a} focuses on triangles instead of T-spiders:
\begin{align}
\begin{split}
    \tikzfig{3_Novel_State_Decompositions/tri-T-1} 
	~&=~ \begin{pmatrix} 1 & 1 \\ 0 & 1 \end{pmatrix}
	\\
	\tikzfig{3_Novel_State_Decompositions/tri-1}
	~&=~ \tikzfig{3_Novel_State_Decompositions/triangletransp2}
	~=~ \begin{pmatrix}	1 & 0 \\ 1 & 1 \end{pmatrix}
	\\
	\tikzfig{3_Novel_State_Decompositions/tri-T-inv-1} 
	~&=~ \tikzfig{3_Novel_State_Decompositions/triangleinv} 
	~=~ \begin{pmatrix} 1 & -1 \\ 0 & 1 \end{pmatrix}
\end{split}
\end{align}
However, since these triangles essentially create directed graphs, one can use the symmetric \textbf{star} instead:
\begin{equation}\labeq{star_triang_relation}
    \tikzfig{3_Novel_State_Decompositions/star} 
	~:=~ \tikzfig{3_Novel_State_Decompositions/star_def-1} 
	~=~ \tikzfig{3_Novel_State_Decompositions/star_def-2} 
	~=~ \begin{pmatrix}
			1 & 1 \\
			1 & 0
		\end{pmatrix}
\end{equation}
Similar as for Hadamard edges, we will use an orange dashed line to represent a \textbf{star edge}:
\begin{equation}
    \tikzfig{3_Novel_State_Decompositions/dashed-1} \quad\rightsquigarrow\quad \tikzfig{3_Novel_State_Decompositions/dashed-2}
\end{equation}
Technically, we could write triangles in terms of four T-spiders:
\begin{equation}
    \tikzfig{3_Novel_State_Decompositions/intro_tri-1} ~=~ \textstyle{\frac{1}{2}}~ \tikzfig{3_Novel_State_Decompositions/intro_tri-2}
\end{equation}
This is suboptimal, since given $n$ T-spiders and $m$ triangles, the scaling is $2^{\alpha n + \beta m}$, where $\beta=4\alpha$. This results in a value
greater than $\beta=1$, which would already result from its naive decomposition. What will follow now is this decomposition, along with more efficient ones,
but formulated using star edges instead of triangles\sidenote[][*-1]{We will refer to \refeq{edge_decomp_1}, \refeq{edge_decomp_2} and \refeq{edge_decomp_1} as \textit{star decompositions}.}:

\begin{align}
    \labeq{edge_decomp_1}
	\tikzfig{3_Novel_State_Decompositions/star-1/star} ~=&~ \textstyle{
	\sqrt 2~~ \tikzfig{3_Novel_State_Decompositions/star-1/1} ~~+~~ 
	2~~ \tikzfig{3_Novel_State_Decompositions/star-1/2}}
    \\[5pt]
    \labeq{edge_decomp_2}
	\tikzfig{3_Novel_State_Decompositions/star-2/star} ~=&~ \textstyle{
		\frac{1}{\sqrt 2}~ \tikzfig{3_Novel_State_Decompositions/star-2/1} ~~+~~
		\frac{1}{\sqrt 2}~ \tikzfig{3_Novel_State_Decompositions/star-2/2} ~~+~ ~
		4~~ \tikzfig{3_Novel_State_Decompositions/star-2/3}}
	\\[5pt]
    \labeq{edge_decomp_3}
	\tikzfig{3_Novel_State_Decompositions/star-3/star} ~=&~ \textstyle{
		\frac{1}{2\sqrt 2}~~ \tikzfig{3_Novel_State_Decompositions/star-3/1} ~~+~~
		\frac{1}{2\sqrt 2}~~ \tikzfig{3_Novel_State_Decompositions/star-3/2} ~~+~~
		\frac{1}{2\sqrt 2}~~ \tikzfig{3_Novel_State_Decompositions/star-3/3}}
        \\
        \notag
        & \textstyle{~+~~ \frac{1}{\sqrt 2}~~ \tikzfig{3_Novel_State_Decompositions/star-3/4} ~~+~~
		8~~ \tikzfig{3_Novel_State_Decompositions/star-3/5}}
\end{align}
The corresponding scalings for \refeq{edge_decomp_1}, \refeq{edge_decomp_2} and \refeq{edge_decomp_3} are given by $\beta=\log_2(2)/1=1$,
$\beta=\log_2(3)/2\approx0.792$ and $\beta=\log_2(5)/3\approx0.774$, respectively. This can be further improved by considering special
cases, namely when star edges are connected to Z-spiders\sidenote{We will refer to such states as \textit{star states}.}, creating \textbf{non-stabilizer states}, which can be decomposed into \textbf{stabilizer states}.

\begin{align}
	\tikzfig{3_Novel_State_Decompositions/star-3-state/star-0} ~=&~ \textstyle{
	3~~ \tikzfig{3_Novel_State_Decompositions/star-3-state/1} ~~-~~
	\tikzfig{3_Novel_State_Decompositions/star-3-state/2} ~~+~~
	\frac{3}{\sqrt{2}}~~ \tikzfig{3_Novel_State_Decompositions/star-3-state/3} ~~-~~
	\frac{3}{2\sqrt{2}}~~ \tikzfig{3_Novel_State_Decompositions/star-3-state/4}}
	\\[5pt]
	\tikzfig{3_Novel_State_Decompositions/star-3-state/star-pi2} ~=&~ \textstyle{
	\frac{1 \pm 3i}{2}~~ \tikzfig{3_Novel_State_Decompositions/star-3-state/1} ~~+~~
	\frac{1 \mp i}{2}~~ \tikzfig{3_Novel_State_Decompositions/star-3-state/2} ~~-~~
	\frac{3-i}{2\sqrt{2}}~~ \tikzfig{3_Novel_State_Decompositions/star-3-state/3}}
    \\
    \notag
    & \textstyle{~~+~~ \frac{1 \mp i}{2\sqrt{2}}~~ \tikzfig{3_Novel_State_Decompositions/star-3-state/4}}
\end{align}
The corresponding scaling here is $\beta=\log_2(4)/3\approx0.667$ for both decompositions. Note that we do need to consider a $\pi$-phase, since
the resulting state is in fact Clifford:
\begin{equation}
    \tikzfig{3_Novel_State_Decompositions/simp/state-1} ~=~ \frac{1}{\sqrt 2}~ \tikzfig{3_Novel_State_Decompositions/simp/state-2}
\end{equation}
At this point, it is worth mentioning that although this \textit{star formalism} was introduced in \cite{kochSpeedyContractionZX2023a}, the underlying
linear map is equivalent up to a scalar to the zero-labelled H-box in the ZH-calculus:
\begin{equation}\labeq{star_H_box_equiv}
    \tikzfig{3_Novel_State_Decompositions/star_H_box_equiv}
\end{equation}
In fact, such stabilizer decompositions utilizing the H-box formalism had already been introduced in \cite{laakkonenGraphicalStabilizerDecompositions2022},
most notably they included one decomposition\sidenote{Note that they use a slightly different, but equivalent notation to the more traditional notation
described in \refsec{related_calculi}.} similar to the stabilizer state decompositions from before, but with a better scaling
of $\beta=\log_2(5)/4\approx0.580$, which can be seen in \reffig{sat_4_state_decomp}:
\begin{figure}[!ht]
	\includegraphics[width=\textwidth]{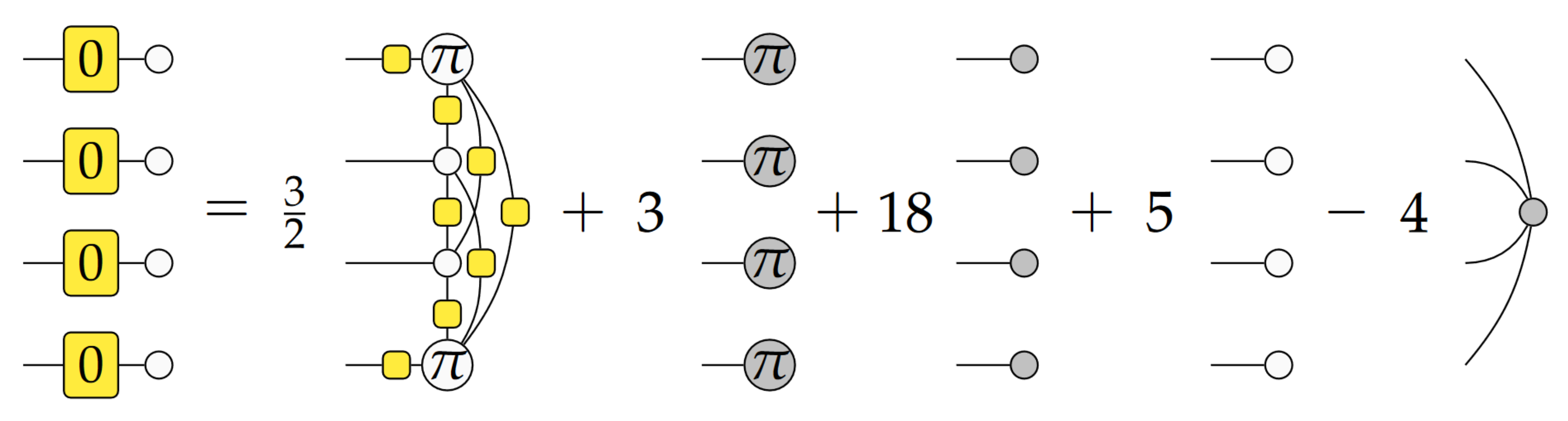}
	\caption[SAT 4 state decomp]{Four states tensored together with their according decomposition taken from \cite{laakkonenGraphicalStabilizerDecompositions2022}.}
	\labfig{sat_4_state_decomp}
\end{figure}

\section{This Work}
\labsec{this_work_state_decomps}

One of the first experiments we conducted after having started learning about the ZX-calculus was an attempt at proving certain theorems required to check
the validity of a quantum query algorithm (\cf \refsec{application_multiloop_feynman_diagrams}) introduced in an undergraduate thesis\sidenote{As it has not been published yet, we cannot cite it here.}. It involved
the calculation of the expectation value $\mathbb{E}_{\theta}(f)$
given by the integral
\begin{equation}
    \mathbb{E}_{\theta}(f)=\left(\frac{1}{2 \pi}\right)^n \int_{[0,2 \pi]^n} f(\theta) d \theta,
\end{equation}
where $n$ is the number of qubits and $f$ is a placeholder for a matrix composed of permutation and rotation matrices, which are dependent on $\theta$.

Thus having found \cite{wangDifferentiatingIntegratingZX2022} proved beneficial, as it included a theorem regarding \textit{diagrammatic integration}.
Later, however, we developed a classical algorithm to solve the same problem as the quantum query algorithm. Even for simple starting configurations,
the classical algorithm heavily outperformed the quantum algorithm. Of course, the theoretical study could still be of interest, but it was
decided to investigate another, more practical aspect of \cite{wangDifferentiatingIntegratingZX2022}, namely the concept of \textit{barren plateau
detection}.

The work in \cite{wangDifferentiatingIntegratingZX2022} introduced a theorem regarding \textit{diagrammatic variance calculation}\sidenote{The variance
can be calculated using the expectation value, which again can be calculated using diagrammatic integration.}. It shows that given a certain type
of Hamiltonian $H$, and therefore also its expectation value $E_H$, the equality in \refeq{diagrammatic_var_calc} holds. Note that since $E_H$ is
not only dependent on the Hamiltonian, but also on the \textit{circuit ansatz} $U(\bm{\theta})$ for the $n$-tuple $\bm{\theta}=(\theta_1,\theta_2,\dots)$, the
resulting ZX-diagram contains spiders with phase $\theta_j$.
\begin{equation}\labeq{diagrammatic_var_calc}
    \mathbf{Var}\left(\frac{ \partial \braket{H}}{\partial\theta_j}\right)= \ \raisebox{-0.45\height}{\includegraphics[height=5.2cm]{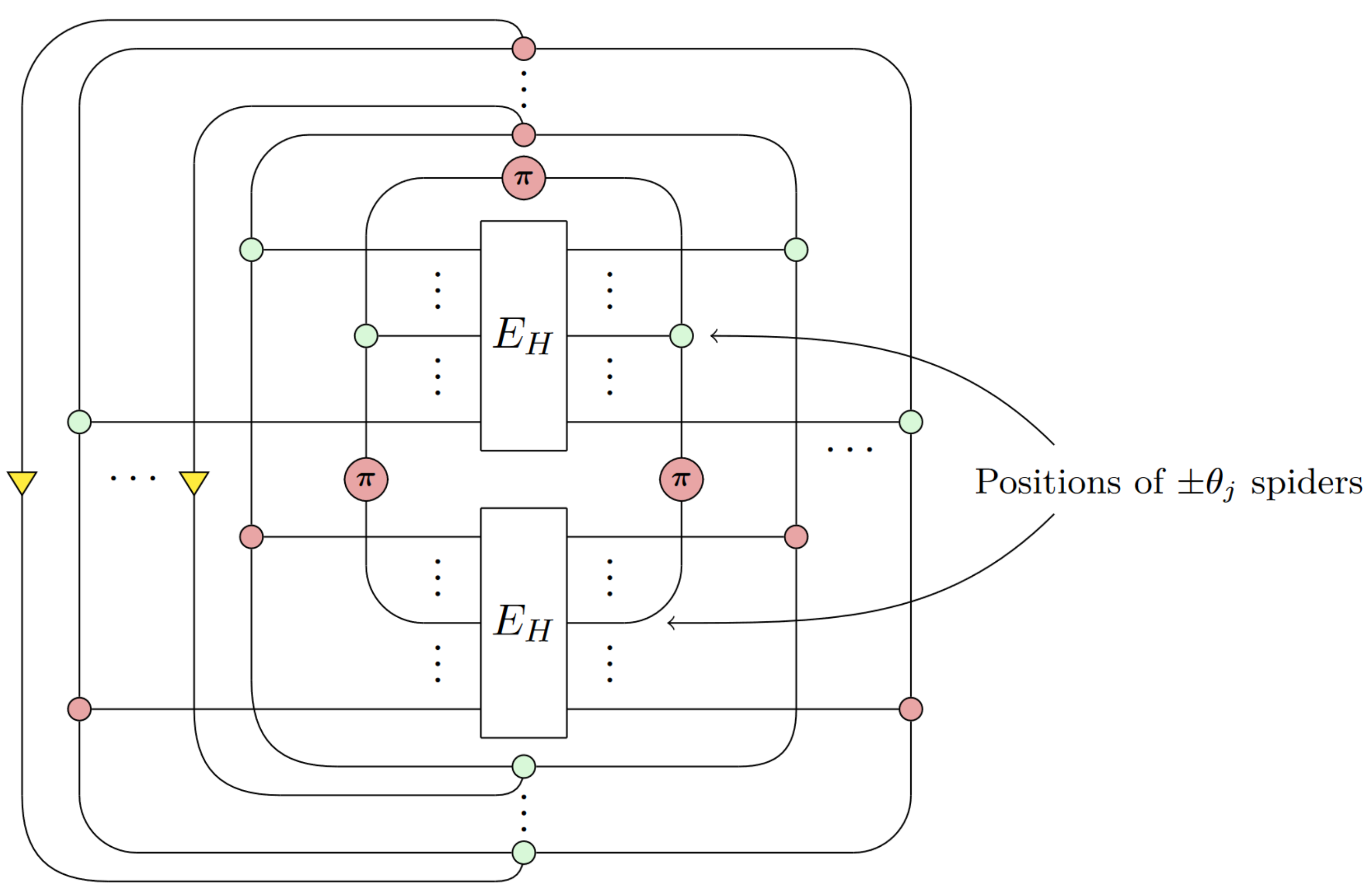}}
\end{equation}
The reason for the appearance of the Hamiltonian is that the type of quantum circuits being studied here is part of a field called \textit{quantum machine learning}.
In \textit{quantum chemistry}, a \textit{circuit ansatz} might be used to find the ground state of a Hamiltonian. Such ansätze are often optimized
using gradient-based techniques, much like how it is done in classical machine learning. However, it has been shown that many ansätze suffer
from an exponentially flat training landscape, making gradient descent impossible. This type of landscape is referred to as \textit{barren plateau}.
One common approach to detect barren plateaus is to check whether $\mathbf{Var}\left(\frac{ \partial \braket{H}}{\partial\theta_j}\right)\approx0$,
in which case a barren plateau is likely to start and thus disturb the learning progression \cite{mccleanBarrenPlateausQuantum2018a}.

For the next project, we attempted to brute force linear combinations resulting in decompositions that could be used to simulate \refeq{diagrammatic_var_calc}.
More precisely, the goal was to decompose the non-Clifford \textit{cycle}:
\begin{equation}
	\tikzfig{3_Novel_State_Decompositions/variance_cycle}
\end{equation}
For this purpose, a big set of potential decomposition terms was generated, hence also obtaining the according matrices. These matrices were
combined in a linear combination in Mathematica\sidenote{The code is very similar to the implementation in
\cite{Tix3DevNovel_star_state_decompositions}.} \cite{Mathematica}. By a process of elimination, a decomposition consisting of two terms was
indeed found\sidenote{In retrospect, it is clear that we had basically found the standard decomposition of triangles/star edges as in \refeq{edge_decomp_1}, only
that it had been embedded in the diagram of the cycle. This was not obvious at the time, as we did not know of these decompositions.}. One could thus compute the variance using a sum of $2^{m-1}$ terms, where $m$ is the number of parameters in the considered ansatz.
At the time this seemed to be a nice improvement over the approach introduced in \cite{zhaoAnalyzingBarrenPlateau2021}, which resulted in a sum
of $3^{m-1}$ terms.

It was only later that we realized that \cite{kochSpeedyContractionZX2023a} had already introduced a better approach utilizing decompositions
of star edges as described in \refsec{prev_work_stab_decomp}, resulting in a scaling of $2^{\beta(m-1)}$.

Since finding better decompositions than the ones introduced in \cite{kochSpeedyContractionZX2023a} and \cite{laakkonenGraphicalStabilizerDecompositions2022}
would most likely require a better approach than our brute force method, we investigated the techniques that were originally used to find
such decompositions. As it turns out, there exists an open-source implementation \cite{laakkonenTuomas56Cliffs2024} of the technique used
in \cite{bravyiTradingClassicalQuantum2016} to find the (asymptotic) 6-to-7 stabilizer decomposition.

The algorithm uses \textit{simulated annealing} to search over the solution space of the real Cliffords\sidenote{It can be proven that the optimal
decomposition of a real state consists only of real Cliffords \cite{laakkonenTuomas56Cliffs2024}.}. Simulated annealing is a probabilistic optimization algorithm that explores
the solution space using a random walk and probabilistically accepting worse solutions in order to escape local optima, whilst gradually
reducing this acceptance probability, aiming to converge on a global optimum \cite{bravyiTradingClassicalQuantum2016} \cite{SimulatedAnnealing2024}.

By configuring the program to search for specific states using different annealing parameters, we were in fact able to find novel decompositions.
It is worth noting that the parameters are crucial, since the solution is easily overlooked when reducing the acceptance probability too
quickly or sometimes even too slowly. As the program did not provide the coefficients of the decomposition, we created a small routine to find
them. This routine, together with a program to check the validity of the decompositions, can be found on GitHub \cite{Tix3DevNovel_star_state_decompositions}.

These efforts led to the following five novel decompositions\sidenote{The coefficients are written separately for space reasons.}:

\begin{align}\labeq{5_state_phase_0}
  \tikzfig{3_Novel_State_Decompositions/my_novel_decomps/5_states_phase_0/star} ~ & =~ \textstyle{
    \xi_1~~ \tikzfig{3_Novel_State_Decompositions/my_novel_decomps/5_states_phase_0/0} ~~+~~
    \xi_2~~ \tikzfig{3_Novel_State_Decompositions/my_novel_decomps/5_states_phase_0/1} ~~+~~
    \xi_3~~ \tikzfig{3_Novel_State_Decompositions/my_novel_decomps/5_states_phase_0/2}}
  \\[15pt]
  \notag
                                                                                          & \textstyle{~~+~~
    \xi_4~~ \tikzfig{3_Novel_State_Decompositions/my_novel_decomps/5_states_phase_0/3} ~~+~~
    \xi_5~~ \tikzfig{3_Novel_State_Decompositions/my_novel_decomps/5_states_phase_0/4} ~~+~~
    \xi_6~~ \tikzfig{3_Novel_State_Decompositions/my_novel_decomps/5_states_phase_0/5}}
\end{align}
\begin{equation*}
	\xi_1=-192,\ \xi_2=\frac{15\sqrt{2}}{8},\ \xi_3=10\sqrt{2},\ \xi_4=20\sqrt{2},\ \xi_5=48\sqrt{2},\ \xi_6=15
\end{equation*}
\refeq{5_state_phase_0} yields $\beta=\frac{\log_2{6}}{5} \approx 0.517$.
\begin{align}\labeq{5_state_phase_pi_ov_2}
  \tikzfig{3_Novel_State_Decompositions/my_novel_decomps/5_states_phase_pi_ov_2_and_minus/star} ~ & =~ \textstyle{
    \xi_1~~ \tikzfig{3_Novel_State_Decompositions/my_novel_decomps/5_states_phase_pi_ov_2_and_minus/0} ~~+~~
    \xi_2~~ \tikzfig{3_Novel_State_Decompositions/my_novel_decomps/5_states_phase_pi_ov_2_and_minus/1} ~~+~~
    \xi_3~~ \tikzfig{3_Novel_State_Decompositions/my_novel_decomps/5_states_phase_pi_ov_2_and_minus/2}}
  \\[15pt]
  \notag
                                                                                                          & \textstyle{~~+~~
    \xi_4~~ \tikzfig{3_Novel_State_Decompositions/my_novel_decomps/5_states_phase_pi_ov_2_and_minus/3} ~~+~~
    \xi_5~~ \tikzfig{3_Novel_State_Decompositions/my_novel_decomps/5_states_phase_pi_ov_2_and_minus/4} ~~+~~
	\xi_6~~ \tikzfig{3_Novel_State_Decompositions/my_novel_decomps/5_states_phase_pi_ov_2_and_minus/5}}
\end{align}
\begin{align*}
    &\xi_1 = -\frac{5i}{4\sqrt{2}}, \ \xi_2 = 56+8i, \ \xi_3 = -5, \ \xi_4 = -(64+32i), \ \xi_5 = -(7\sqrt{2}-\sqrt{2}i), \\
	&\xi_6 = -(16-48i)
\end{align*}
\refeq{5_state_phase_pi_ov_2} yields $\beta=\frac{\log_2{6}}{5} \approx 0.517$.

\newpage

\begin{align}\labeq{5_state_phase_pi_ov_2_minus}
	\tikzfig{3_Novel_State_Decompositions/my_novel_decomps/5_states_phase_pi_ov_2_and_minus/star_minus} ~ & =~ \textstyle{
	  \xi_1~~ \tikzfig{3_Novel_State_Decompositions/my_novel_decomps/5_states_phase_pi_ov_2_and_minus/0} ~~+~~
	  \xi_2~~ \tikzfig{3_Novel_State_Decompositions/my_novel_decomps/5_states_phase_pi_ov_2_and_minus/1} ~~+~~
	  \xi_3~~ \tikzfig{3_Novel_State_Decompositions/my_novel_decomps/5_states_phase_pi_ov_2_and_minus/2}}
	\\[15pt]
	\notag
																												  & \textstyle{~~+~~
	  \xi_4~~ \tikzfig{3_Novel_State_Decompositions/my_novel_decomps/5_states_phase_pi_ov_2_and_minus/3} ~~+~~
	  \xi_5~~ \tikzfig{3_Novel_State_Decompositions/my_novel_decomps/5_states_phase_pi_ov_2_and_minus/4} ~~+~~
	  \xi_6~~ \tikzfig{3_Novel_State_Decompositions/my_novel_decomps/5_states_phase_pi_ov_2_and_minus/5}}
\end{align}
\begin{align*}
    &\xi_1 = \frac{5i}{4\sqrt{2}}, \ \xi_2 = 56-8i, \ \xi_3 = -5, \ \xi_4 = -(64-32i), \ \xi_5 = -(\sqrt{2}+3\sqrt{2}i), \\
	&\xi_6 = -(16+48i)
\end{align*}
\refeq{5_state_phase_pi_ov_2_minus} yields $\beta=\frac{\log_2{6}}{5} \approx 0.517$.

\begin{align}\labeq{4_state_phase_pi_ov_2}
  \tikzfig{3_Novel_State_Decompositions/my_novel_decomps/4_states_phase_pi_ov_2_and_minus/star} ~ & =~ \textstyle{
    \xi_1~~ \tikzfig{3_Novel_State_Decompositions/my_novel_decomps/4_states_phase_pi_ov_2_and_minus/0} ~~+~~
    \xi_2~~ \tikzfig{3_Novel_State_Decompositions/my_novel_decomps/4_states_phase_pi_ov_2_and_minus/1} ~~+~~
    \xi_3~~ \tikzfig{3_Novel_State_Decompositions/my_novel_decomps/4_states_phase_pi_ov_2_and_minus/2}}
  \\[15pt]
  \notag
                                                                                                          & \textstyle{~~+~~
    \xi_4~~ \tikzfig{3_Novel_State_Decompositions/my_novel_decomps/4_states_phase_pi_ov_2_and_minus/3} ~~+~~
    \xi_5~~ \tikzfig{3_Novel_State_Decompositions/my_novel_decomps/4_states_phase_pi_ov_2_and_minus/4}}
\end{align}
\begin{equation*}
    \xi_1 = -6+2i, \ \xi_2 = -\frac{5+5i}{2}, \ \xi_3 = -(3\sqrt{2}-\sqrt{2}i), \ \xi_4 = -(6-18i), \ \xi_5 = \frac{7+9i}{2}
\end{equation*}
\refeq{4_state_phase_pi_ov_2} yields $\beta=\frac{\log_2{5}}{4} \approx 0.580$.
\begin{align}\labeq{4_state_phase_pi_ov_2_minus}
  \tikzfig{3_Novel_State_Decompositions/my_novel_decomps/4_states_phase_pi_ov_2_and_minus/star_minus} ~ & =~ \textstyle{
    \xi_1~~ \tikzfig{3_Novel_State_Decompositions/my_novel_decomps/4_states_phase_pi_ov_2_and_minus/0} ~~+~~
    \xi_2~~ \tikzfig{3_Novel_State_Decompositions/my_novel_decomps/4_states_phase_pi_ov_2_and_minus/1} ~~+~~
    \xi_3~~ \tikzfig{3_Novel_State_Decompositions/my_novel_decomps/4_states_phase_pi_ov_2_and_minus/2}}
  \\[15pt]
  \notag
                                                                                                                & \textstyle{~~+~~
    \xi_4~~ \tikzfig{3_Novel_State_Decompositions/my_novel_decomps/4_states_phase_pi_ov_2_and_minus/3} ~~+~~
    \xi_5~~ \tikzfig{3_Novel_State_Decompositions/my_novel_decomps/4_states_phase_pi_ov_2_and_minus/4}}
\end{align}
\begin{equation*}
    \xi_1 = -6+18i, \ \xi_2 = -\frac{5+5i}{2}, \ \xi_3 = -(3\sqrt{2}+11\sqrt{2}i), \ \xi_4 = -(6-2i), \ \xi_5 = -\frac{1+3i}{2}
\end{equation*}
\refeq{4_state_phase_pi_ov_2_minus} yields $\beta=\frac{\log_2{5}}{4} \approx 0.580$.

As of writing this thesis, we have not come across comparable decompositions.

To test the limitations of this approach, the program was run on an AWS EC2 C6a instance (c6a.32xlarge) \cite{AmazonEC2C6a} to search for
more complex decompositions. Although using parallelization on 128 vCPUs, together with 256 GiB of memory, we did not find any novel
decompositions. It is presumed that one fundamental limitation of this approach is that it either finds a result relatively quickly,
or it never finds one, thus not being able to take advantage of high-performance computers.

The practicality of the novel decompositions alongside with the star edge decompositions listed in \refsec{prev_work_stab_decomp} will
be discussed in \refsec{comparison_with_state_decomps}.

\pagelayout{wide} % No margins
\addpart{Part III: Multi-Control Toffoli Gate Dense Quantum Circuits}\labpart{part_iii}
\pagelayout{margin} % Restore margins

\setchapterpreamble[u]{\margintoc}
\chapter{Dynamic Decompositions}
\labch{dynamic_decompositions}

In \refch{novel_state_decompositions}, we discussed state decompositions. Mathematically, these correspond to decomposing a \textit{vector} into
a sum of other \textit{vectors}. In the following, we will demonstrate that we can also decompose \textit{matrices}.

\section{Overview}

Using the concept of decompositions for quantum computation may seem like a rather exotic idea, however, one of the most
elementary definitions already allows us to obtain a very useful decomposition:
\begin{theorem}\labthm{elementary_decomp}
    \begin{equation}
        \tikzfig{2_Graphical_Languages/z_spider_def} = \frac{1}{\sqrt{2}^{n+m}} \ \tikzfig{4_Dynamic_Decompositions/elementary_decomp/1} +
        \frac{e^{i\alpha}}{\sqrt{2}^{n+m}} \ \tikzfig{4_Dynamic_Decompositions/elementary_decomp/2}
    \end{equation}
\end{theorem}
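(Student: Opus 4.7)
The plan is to unfold the right-hand side into the Dirac-notation definition of a Z-spider and check that both sides agree. The starting point is the defining equation of the Z-spider from \refeq{z_spider_eq}, namely
\begin{equation*}
Z_m^n[\alpha] \;=\; \ket{0}^{\otimes n}\bra{0}^{\otimes m} + e^{i\alpha}\ket{1}^{\otimes n}\bra{1}^{\otimes m}.
\end{equation*}
The two summands on the right of the theorem are (presumably) diagrams consisting of $n+m$ disconnected X-spiders of arity $1$: the first term has all X-spiders with phase $0$, and the second has all X-spiders with phase $\pi$. These correspond respectively to $\ket{0}^{\otimes n}\bra{0}^{\otimes m}$ and $\ket{1}^{\otimes n}\bra{1}^{\otimes m}$ up to scalar, so the job is really just to match the scalars and confirm that claim.

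The first step I would carry out is to establish the single-wire identities
\begin{align*}
\tikzfig{2_Graphical_Languages/x_spider_def}\Big|_{m=0,n=1,\alpha=0} &\;=\; \ket{+} + \ket{-} \;=\; \sqrt{2}\,\ket{0},\\
\tikzfig{2_Graphical_Languages/x_spider_def}\Big|_{m=0,n=1,\alpha=\pi} &\;=\; \ket{+} - \ket{-} \;=\; \sqrt{2}\,\ket{1},
\end{align*}
and their transposes, which follow directly from \refeq{x_spider_eq} together with the definitions $\ket{\pm} = \tfrac{1}{\sqrt{2}}(\ket{0}\pm\ket{1})$. Hence each $\ket{0}$ or $\bra{0}$ equals $\tfrac{1}{\sqrt{2}}$ times a phaseless X-spider, and each $\ket{1}$ or $\bra{1}$ equals $\tfrac{1}{\sqrt{2}}$ times a $\pi$-labelled X-spider.

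Next, I would take the tensor product of $n+m$ such identities. Since the computational basis states factor as $\ket{0}^{\otimes n}\bra{0}^{\otimes m}$ and $\ket{1}^{\otimes n}\bra{1}^{\otimes m}$, and tensor product of diagrams corresponds to stacking vertically (Postulate 4), each of these rank-one maps is represented by $n+m$ disconnected X-spiders with a shared phase, carrying an overall scalar of $(1/\sqrt{2})^{n+m} = 1/\sqrt{2}^{\,n+m}$. Substituting into the definition of $Z_m^n[\alpha]$ and factoring out the scalars on each term yields exactly the claimed decomposition.

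There is no real obstacle here; the entire content is the correspondence between computational basis states and phaseless (resp.\ $\pi$-phased) X-spiders together with careful scalar bookkeeping. The only place one has to be slightly careful is in checking that the powers of $1/\sqrt{2}$ combine to give $1/\sqrt{2}^{\,n+m}$ for \emph{both} terms, which they do symmetrically because $\ket{0}$ and $\ket{1}$ (and their duals) each contribute the same factor $1/\sqrt{2}$ per wire.
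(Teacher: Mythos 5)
Your proposal is correct and follows essentially the same route as the paper's own proof: both express $\ket{0}$ and $\ket{1}$ as $\tfrac{1}{\sqrt{2}}$ times a phaseless (resp.\ $\pi$-phased) one-legged X-spider and substitute into the Dirac-notation definition of the Z-spider, collecting a factor of $1/\sqrt{2}^{\,n+m}$ on each term. No gaps.
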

\begin{proof}
    From \refeq{z_spider_eq} we have
    \begin{equation}\labeq{z_spider_eq_redef}
        \tikzfig{2_Graphical_Languages/z_spider_def} \ = \ \ket{0}^{\otimes n}\bra{0}^{\otimes m}+e^{i\alpha}\ket{1}^{\otimes n}\bra{1}^{\otimes m}.
    \end{equation}
    Furthermore, we know that
    \begin{equation}\labeq{x_state_def}
        \ket{0}=\frac{1}{\sqrt{2}}(\ket{+}+\ket{-})=\frac{1}{\sqrt{2}} \ \tikzfig{4_Dynamic_Decompositions/x_state}
    \end{equation}
    and
    \begin{equation}\labeq{x_pi_state_def}
        \ket{1}=\frac{1}{\sqrt{2}}(\ket{+}-\ket{-})=\frac{1}{\sqrt{2}} \ \tikzfig{4_Dynamic_Decompositions/x_pi_state}.
    \end{equation}
    Therefore, plugging in \refeq{x_state_def} and \refeq{x_pi_state_def} into \refeq{z_spider_eq_redef}, we get:
    \begin{equation*}
        \tikzfig{2_Graphical_Languages/z_spider_def} = \frac{1}{\sqrt{2}^{n+m}} \ \tikzfig{4_Dynamic_Decompositions/elementary_decomp/1} +
        \frac{e^{i\alpha}}{\sqrt{2}^{n+m}} \ \tikzfig{4_Dynamic_Decompositions/elementary_decomp/2} 
    \end{equation*}
\end{proof}
\begin{remark}
    \refthm{elementary_decomp} can be rewritten using the \StateCopy-rule, resulting in
    \begin{equation}
        \tikzfig{2_Graphical_Languages/z_spider_def} = \frac{1}{\sqrt{2}} \ \tikzfig{4_Dynamic_Decompositions/elementary_decomp/1_alt} +
        \frac{e^{i\alpha}}{\sqrt{2}} \ \tikzfig{4_Dynamic_Decompositions/elementary_decomp/2_alt}.
    \end{equation}
    This form is particularly handy for programmatic implementations, as one only has to attach a single state to the center node and multiply
    by a scalar\sidenote[][*-1]{And then also do a simplification that would be necessary in either case.}.
\end{remark}

\newpage

In \cite{kochSpeedyContractionZX2023a}, \refthm{elementary_decomp} was used to derive the following decomposition:
\begin{theorem}\labthm{speedy_elementary_decomp}
	\begin{equation} \textstyle{
	\tikzfig{4_Dynamic_Decompositions/speedy_split/1} 
	~=~~ \frac{1}{\sqrt 2^{n}}~ \tikzfig{4_Dynamic_Decompositions/speedy_split/3-1} ~~+~~
    \frac{e^{i\alpha}}{\sqrt 2^{n+m}}~ \tikzfig{4_Dynamic_Decompositions/speedy_split/3-2}} \labeq{tof_decomp}
	\end{equation}
\end{theorem}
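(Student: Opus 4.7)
The plan is to derive Theorem~\ref{thm:speedy_elementary_decomp} from Theorem~\ref{thm:elementary_decomp} by a branch-wise application of additional ZX rewrite rules. The asymmetry of the prefactors in the statement---$\frac{1}{\sqrt{2}^{n}}$ on the first term versus $\frac{e^{i\alpha}}{\sqrt{2}^{n+m}}$ on the second---already suggests the strategy: only the zero-phase branch of the elementary decomposition admits a further simplification that absorbs precisely $\sqrt{2}^{m}$ of the normalization on the input side, whereas the phase-$\pi$ branch carries no analogous reduction and must keep its full scalar.

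First I would invoke Theorem~\ref{thm:elementary_decomp} directly on the left-hand-side Z-spider, producing a sum of two diagrams each carrying prefactor $\frac{1}{\sqrt{2}^{n+m}}$: one in which every leg of the original spider has been replaced by a single-leg X-spider of phase $0$ (a $\sqrt{2}\bra{0}$ effect on each input and a $\sqrt{2}\ket{0}$ state on each output), and one in which every leg has been replaced by a phase-$\pi$ X-spider, the latter weighted by $e^{i\alpha}$. I would then focus on the zero-phase branch: because the $m$ resulting input X-effects are Z-basis eigenstates, they can be pulled through the surrounding structure via the \StateCopy-rule (together with \SpiderFusion and \Identity as needed), each collapse contributing a factor of $\sqrt{2}$. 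Performing $m$ such contractions reduces the prefactor from $\frac{1}{\sqrt{2}^{n+m}}$ down to $\frac{1}{\sqrt{2}^{n}}$ and produces exactly the compact first diagram on the right-hand side of \refeq{tof_decomp}.

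For the $e^{i\alpha}$-weighted branch, the corresponding input X-spiders carry phase $\pi$. The \PiCommutation-rule allows them to be propagated along the diagram, but this merely pushes $\pi$-phases onto neighbouring legs without absorbing any scalar factors; no further reduction is available. Hence the prefactor remains $\frac{e^{i\alpha}}{\sqrt{2}^{n+m}}$ and the second compact diagram of the statement is obtained directly.

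The main obstacle I expect is scalar bookkeeping. ZX rewrites silently accumulate $\sqrt{2}$-factors through state copy, spider fusion, and identity elimination, and it is easy to be off by one such factor---especially because Theorem~\ref{thm:elementary_decomp} admits the alternative, already-state-copied form given in its Remark, so one must be consistent about which version is the starting point. To guard against this I would verify the final scalars on the small case $n=m=1$ against the matrix definition of $Z_m^n[\alpha]$ given by \refeq{z_spider_eq}, and only then conclude the argument in general.
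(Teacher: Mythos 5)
Your overall skeleton---apply \refthm{elementary_decomp} to the central Z-spider and then simplify each of the two branches while tracking scalars---is the right one, and it is essentially how the paper (following Koch et al.) arrives at \refthm{speedy_elementary_decomp}. But your account of \emph{where} the asymmetric prefactors come from has a genuine gap: you never use the fact that the $m$ distinguished legs are \emph{star} edges, and the rules you invoke cannot produce the asymmetry. The \StateCopy rule treats phase-$0$ and phase-$\pi$ X-states completely symmetrically (both copy through a neighbouring Z-spider with the same scalar), so ``pulling the $m$ X-effects through the surrounding structure by state copy'' would yield the \emph{same} power of $\sqrt 2$ in both branches; likewise \PiCommutation only relocates $\pi$-phases and contributes no scalar. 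If the $m$ legs were ordinary wires into neighbouring spiders, the two branches would carry identical prefactors and the theorem as stated would be false.

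The actual mechanism is the interaction of the two computational basis states with the star map of \refeq{star_triang_relation}, i.e.\ precisely the identities \refeq{star_state_x} and \refeq{star_state_x_pi} that the paper introduces immediately after the theorem. In the $\ket{0\cdots 0}$-branch each star edge absorbs a $\ket{0}$ and becomes $\ket{0}+\ket{1}$, which is $\sqrt 2$ times a phase-free one-legged Z-spider that fuses into its neighbour with no further scalar; this is the source of the $\sqrt{2}^{\,m}$ that cancels part of $1/\sqrt{2}^{\,n+m}$ to leave $1/\sqrt{2}^{\,n}$. In the $e^{i\alpha}\ket{1\cdots 1}$-branch each star edge absorbs a $\ket{1}$ and becomes $\ket{0}$, i.e.\ exactly a phase-free X-state on the neighbour with no extra scalar, so that branch keeps $e^{i\alpha}/\sqrt{2}^{\,n+m}$. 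Note also that the stars are eliminated in \emph{both} branches---your claim that the second diagram is ``obtained directly'' with nothing further to do is not right; if the stars survived, that term could not later be recognised as a stabilizer term. Your proposed $n=m=1$ matrix check against \refeq{z_spider_eq} would have exposed all of this, so do carry it out, but the general argument must route through the star--state identities rather than through \StateCopy and \PiCommutation.
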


This yields a scaling of $\beta=\frac{1}{m}$ for $m$ star edges. Since this decomposition holds for any $m$, we may call this
a \textit{dynamic decomposition}\sidenote{The term does not occur in literature, but we will be referring to this decomposition a lot, so
we have thus introduced it.}.

We can use these theorems to find the stabilizer rank of a multi-control Toffoli gate with $n$ control bits. The first step is to translate the according quantum circuit
into a ZX-diagram \cite{kochSpeedyContractionZX2023a}:
\begin{align*}
    \tikzfig{4_Dynamic_Decompositions/multi-control-toffoli-gate_qcirc} \leadsto& \tikzfig{4_Dynamic_Decompositions/multi-control-toffoli-gate_zx_triang} \\
    =& \tikzfig{4_Dynamic_Decompositions/multi-control-toffoli-gate_zx_star1}
    = \tikzfig{4_Dynamic_Decompositions/multi-control-toffoli-gate_zx_star2} \labeq{multi_ctrl_toff_zx}
\end{align*}
For the sake of simplicity, we will use \refthm{elementary_decomp} instead of \refthm{speedy_elementary_decomp}:
\begin{gather}
    \tikzfig{4_Dynamic_Decompositions/multi-control-toffoli-gate_zx_star2} \notag \\
    = \frac{1}{\sqrt{2}^{n+1}} \tikzfig{4_Dynamic_Decompositions/multi-control-toffoli-gate_zx_star2_decomp1}
    + \frac{1}{\sqrt{2}^{n+1}} \tikzfig{4_Dynamic_Decompositions/multi-control-toffoli-gate_zx_star2_decomp2} \labeq{result_tof_rank}
\end{gather}
At this point, it is worth introducing a few simple to derive equalities:
\begin{align}
    \tikzfig{4_Dynamic_Decompositions/star_state_x_pi_equality} \labeq{star_state_x_pi}\\
    \tikzfig{4_Dynamic_Decompositions/star_state_x_equality} \labeq{star_state_x} \\
    \tikzfig{4_Dynamic_Decompositions/star_state_z_pi_equality} \labeq{star_state_z_pi}
\end{align}
One can therefore observe that both terms in \refeq{result_tof_rank} are stabilizer terms, since all non-trivial leftover states can be brought into the form
of \refeq{star_state_x_pi}, \refeq{star_state_x} or \refeq{star_state_z_pi}. We can conclude that the stabilizer rank of a multi-control Toffoli gates is two.

In \refch{weighting_algorithms}, we will use the elementary decomposition from \refthm{elementary_decomp} to do more sophisticated procedures.

\section{Comparison with State Decompositions}
\labsec{comparison_with_state_decomps}

One of the authors of \cite{kochSpeedyContractionZX2023a}, Mark Koch, was kind enough to share the source code for their algorithm, including
the benchmark for barren plateau detection.

We repeated the benchmark after having slightly modified the code, which allowed us to capture information regarding which decomposition was applied for
a given ansatz. The results\sidenote{The log file can be found in \cite{Novel_star_state_decompositionsZxbarrenprivateResulttxt}.} show that around
two-thirds of the decompositions were star decompositions, and only one-third utilized dynamic decompositions. State decompositions did not occur.

It is evident that decompositions of matrices occur much more often than state decompositions, as they can be used more generally. Since not even the state
decompositions introduced in the original paper occurred, we can conclude that our novel decompositions are of little use for this concrete application.
Nevertheless, it is probable to assume that other applications may involve such star states to a greater extent.

In the next chapter, we will focus on quantum circuits, where, according to our testing, even star decompositions are exceedingly rare or nonexistent, despite
the presence of numerous star edges. Consequently, dynamic decompositions will occur more frequently, which will be beneficial for the concept we will
introduce, namely \textit{weighting algorithms}.
\setchapterpreamble[u]{\margintoc}
\chapter{Weighting Algorithms}
\labch{weighting_algorithms}

Whilst exploring further areas of the ZX-calculus, we encountered an interesting paper \cite{sutcliffeProcedurallyOptimisedZXDiagram2024}. They propose
and demonstrate a new approach for strong simulation. Traditionally, programs would check which decompositions\sidenote[][*5.5]{For example, the BSS decomposition.}
currently apply, and then greedily choose the one with the smallest scaling factor. The novel algorithm, in short, uses a weighting algorithm that
considers more information about the graph than previously done. They implemented a Python Proof-of-Concept (PoC) \cite{mjsutcliffe99Mjsutcliffe99ProcOptCut2024}.
The results can be seen in \reffig{procedurally_results1} and \reffig{procedurally_results2}.
\begin{marginfigure}[*2.5]
    \includegraphics{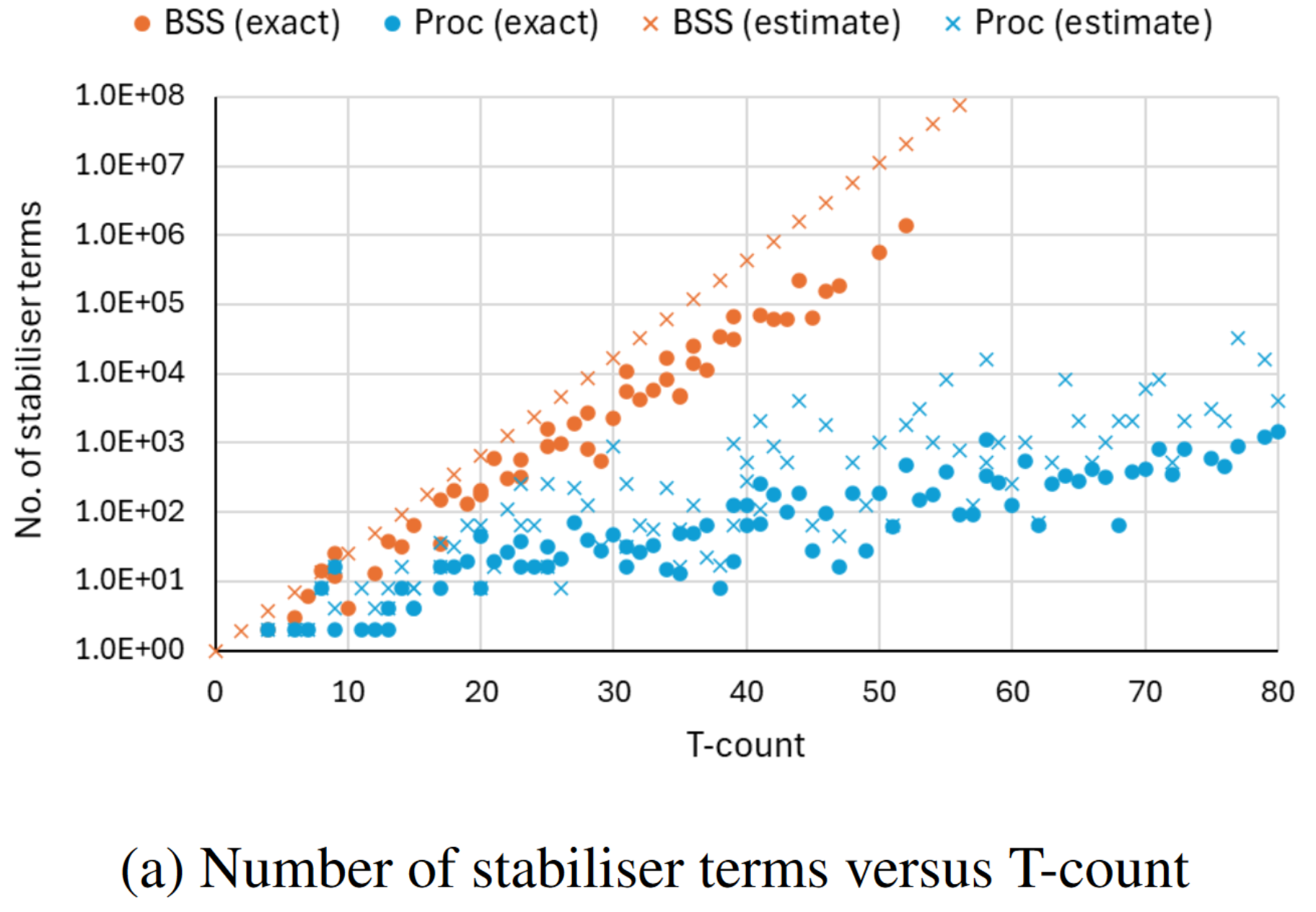}
    \caption[Procedurally Optimised paper results 1]{Results from PoC \cite{sutcliffeProcedurallyOptimisedZXDiagram2024}.}
    \labfig{procedurally_results1}
\end{marginfigure}
\begin{marginfigure}[*11.5]
    \includegraphics{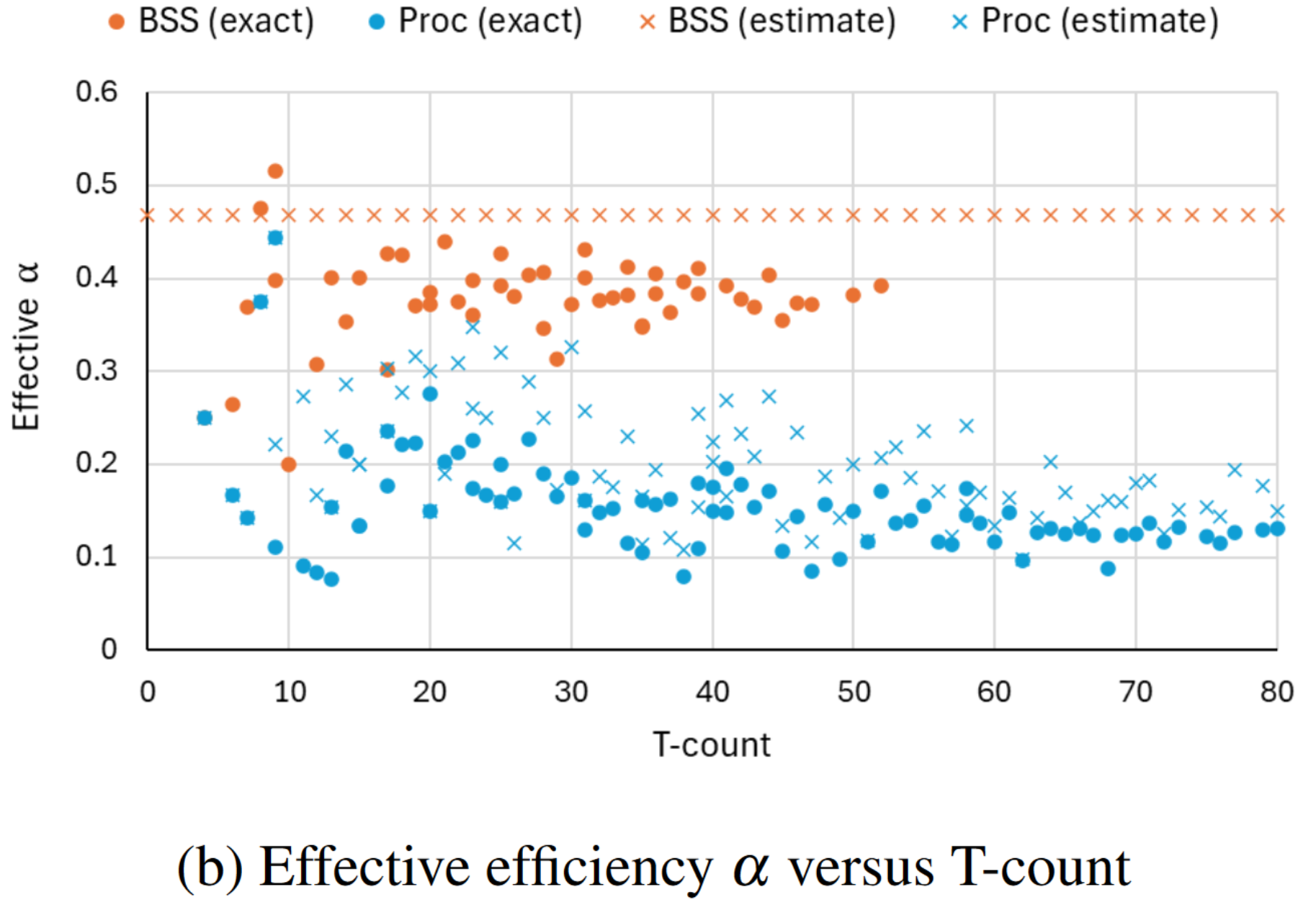}
    \caption[Procedurally Optimised paper results 2]{Results from PoC \cite{sutcliffeProcedurallyOptimisedZXDiagram2024}.}
    \labfig{procedurally_results2}
\end{marginfigure}

In the following, we will briefly explain their method, and then we will show whether these ideas can be translated into a similar problem, namely
the strong simulation of quantum circuits consisting of many multi-control Toffoli gates, rather than many T-gates.

\section{Basic Idea}
\labsec{basic_idea}

Consider the following ZX-diagram taken from \cite{sutcliffeProcedurallyOptimisedZXDiagram2024}:
\begin{equation*}
    \tikzfig{5_Weighting_Algorithms/picom_grouped_start}
\end{equation*}
The usual approach would be to unfuse four $\frac{\pi}{4}$-states and then use decompositions similar to the ones reviewed in \refsec{prev_work_stab_decomp}.
However, one might instead use \refthm{elementary_decomp} in order to \textit{cut} a Z-spider:
\begin{equation*}
    \tikzfig{5_Weighting_Algorithms/picom_grouped}
\end{equation*}
This should be very compelling, since we have just seen that using a decomposition with a worse scaling than traditional decompositions can yield
a better \textit{effective scaling}, i.e. the scaling calculated from the final number of terms, in this case\sidenote{By extending
the diagram to include more separated T-gates whilst maintaining a common Z-spider, one cut would still suffice, thus giving $\alpha \rightarrow 0$.} $\alpha=0.25$.

The above example is part of an idea called \textit{CNOT-grouping}. More generally, in \cite{sutcliffeProcedurallyOptimisedZXDiagram2024} it was realized
that CNOTs can \textit{block} $\pm\frac{\pi}{4}$-spiders from fusing, which would result in a phase of either $0$ or $\frac{\pi}{2}$, hence being reduced
to a Clifford (sub-)diagram. By defining strict rules, one can classify each spider by considering its neighbors and counting how many T-gates could
potentially be fused, up until some depth\sidenote[][*-2]{The actual procedure employed is slightly more sophisticated, utilizing so-called \textit{tiers}.}.

The results in \reffig{procedurally_results1} and \reffig{procedurally_results2} clearly show that the new approach can yield a benefit over
traditional methods. Nevertheless, a few important things need to be mentioned:
\begin{enumerate}
    \item The experiments were conducted using "random" circuits. Here, the challenge is to achieve a natural balance between highly random circuits, which may prevent the algorithm from demonstrating its effectiveness, and highly structured circuits, where the results may be overly biased.
    \item By only implementing a PoC, runtime comparisons are not possible with state-of-the-art implementations like \cite{ZxcalcQuizx2024}. This only leaves the final number of terms as a usable metric for comparison. Nevertheless, they did provide a rough upper-bound for the computational complexity of $\mathcal{O}(V^2)$ for $V$ being the
    number of non-trivial vertices. It is thus argued to be negligible compared to the exponential scaling produced by stabilizer terms.
    \item The algorithm only uses partial simplifications\sidenote{This means that they, for example, do not use \textit{pivoting} or
    \textit{local complementation}.}, as this avoids the risk of losing relevant patterns. Later, however, it is argued that this might have been the reason for not even better results.
\end{enumerate}
Similarly, the following points should be noted for our own algorithm, as described in \refsec{studied_approach} and \refsec{application_multiloop_feynman_diagrams}:
\begin{enumerate}
    \item The multiloop Feynman diagram application might be a good or a bad fit, which is why we will conduct experiments using "more random" circuits too.
    \item We will also only implement a PoC in Python. We will not consider the space-time complexity in more detail, however, some further comments on the runtime will be made.
    \item Our algorithm will also only use a partial simplification strategy to mimic the original algorithm and potentially avoid the same problem
    of losing relevant patterns. This choice will be further discussed in \refsec{importance_of_simp_strat}.
\end{enumerate}

We will now demonstrate an attempt to replicate the idea of decomposing spiders that block further improvements in diagrams consisting of star edges
instead of T-gates. To start with, consider the
following diagram:
\begin{equation}\labeq{CNOT_example_1}
    \tikzfig{5_Weighting_Algorithms/CNOT_star_example/1}
\end{equation}
By fusing the control bits, we get
\begin{equation}\labeq{CNOT_example_2}
    \tikzfig{5_Weighting_Algorithms/CNOT_star_example/2}.
\end{equation}
It is now possible to decompose the resulting spider according to \refthm{elementary_decomp}, which gives us
\begin{equation*}
    \tikzfig{5_Weighting_Algorithms/CNOT_star_example/3_1}+\tikzfig{5_Weighting_Algorithms/CNOT_star_example/3_2}.
\end{equation*}
The left term can be simplified by fusing the four Z-spiders. According to \refthm{speedy_elementary_decomp}, we can decompose the resulting node, which gives
us two terms.

The right term is a bit more difficult due to the new obstructions caused by NOT gates. However, it can be shown that the right term can also be decomposed
into two terms, according to \reflemma{NOT_obstruction}. Therefore, \refeq{CNOT_example_1} can be decomposed into four terms. But, what would be the final
number of terms if a traditional method was used? At first, it might seem that \refeq{edge_decomp_2} is the best choice. Applying it twice,
one obtains nine terms. While this may be true in theory, algorithms that greedily choose the decomposition with the smallest scaling factor will
always simplify the diagram after having applied one decomposition. This is exemplified after the following lemma and proof.

\begin{lemma}
    \lablemma{NOT_obstruction}
    Any (sub-)diagram of the form
    \begin{equation}
        \tikzfig{5_Weighting_Algorithms/NOT_lemma/1}
    \end{equation}
    with $m$ star edges can be rewritten as
    \begin{equation}
        \tikzfig{5_Weighting_Algorithms/NOT_lemma/5}
    \end{equation}
\end{lemma}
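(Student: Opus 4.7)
The plan is to prove the identity by a combination of local rewrites on each star-edge/NOT pair and a subsequent global simplification. The key observation driving the proof is that composing a NOT gate with a star edge drastically simplifies the associated linear map: at the matrix level we have $X \cdot \text{star} = \begin{pmatrix} 1 & 0 \\ 1 & 1 \end{pmatrix}$, so the combined gadget sends $\ket{0}$ to $\ket{0}+\ket{1} = \sqrt{2}\ket{+}$ and $\ket{1}$ to $\ket{1}$. In other words, on the $\ket{1}$-branch of the central spider's output the star edge effectively disappears, while on the $\ket{0}$-branch it produces a uniform $\ket{+}$-state up to a $\sqrt{2}$ factor. This asymmetry is exactly what makes the identity work.

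My approach would be to first translate each star edge using its triangle form from \refeq{star_triang_relation}, then absorb the neighboring NOT gate into this structure by applying the $\pi$-commutation rule \piCommutation together with triangle identities; alternatively, one may pass through the H-box representation via \refeq{star_H_box_equiv} and use the ZH-calculus rule that an H-box with a $\ket{1}$ state on one leg collapses. Either route reduces each star-edge/NOT pair to a much simpler gadget on every one of the $m$ wires. With all $m$ wires locally simplified, I would then use spider fusion \spiderFusion and state copy \stateCopy to collect the structure back into a single spider, yielding the claimed right-hand side. Since the $m$ local rewrites act on disjoint wires emanating from the central node, they commute and may be applied either simultaneously or one at a time by induction on $m$.

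The main obstacle will be bookkeeping: each star-edge simplification contributes a scalar factor of $\sqrt{2}$, and these must compound correctly so that the final diagram carries an overall scalar of $2^{m/2}$ (or whatever prefactor appears on the right-hand side). A secondary subtlety is to verify that the rewrite produces \emph{precisely} the claimed target form rather than an equivalent but differently-shaped diagram; this is cleanest when checked basis-by-basis on the inputs of the central spider, using the action of the star-NOT combination computed above. An induction on $m$ organizes both the structural reduction and the scalar accumulation: the base case $m=1$ follows from the matrix computation together with a single application of spider fusion, and the inductive step simply peels off one star-edge/NOT pair at a time, reattaching the remainder via \spiderFusion.
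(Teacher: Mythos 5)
Your opening algebra is sound, and your first step (rewriting each star edge via its triangle form, \refeq{star_triang_relation}, and then moving the resulting NOTs around with \PiCommutation) does track the paper's first move. But the argument breaks at the collection step. The NOT gates in this diagram sit on the wires \emph{between} the Z-spiders --- that is precisely what makes them ``obstructions'' --- so eliminating one is not a local operation on an isolated star-edge/NOT pair: pushing an $X[\pi]$ through a Z-spider copies it onto every other leg of that spider, so your $m$ rewrites neither act on disjoint wires nor commute, and when the dust settles the diagram does \emph{not} fuse into a single spider. The paper shows instead that it reorganizes into exactly \emph{two} Z-spiders (the ``stacks''), each carrying some of the star edges, separated by a residual obstruction. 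If the diagram really collapsed to one spider, the lemma would be an immediate corollary of \refthm{speedy_elementary_decomp} and would not need a separate proof.

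Because there are two stacks, the paper decomposes \emph{both} with \refthm{speedy_elementary_decomp}, obtaining four terms, and then observes that two of them contain an isolated $X[\pi]$ scalar, i.e.\ the number $1+e^{i\pi}=0$, so they vanish; the two surviving terms, whose shape depends on the parity of $m$, constitute the right-hand side. Your proposal contains no step that produces a linear combination at all --- local rewrites, \SpiderFusion and \StateCopy are deterministic equalities of single diagrams --- yet the target of the lemma is a two-term sum (this is exactly what lets the surrounding text conclude that the right term ``can also be decomposed into two terms''). You gesture at a $\ket{0}$/$\ket{1}$ branching of ``the central spider,'' but there is no single central spider to branch on, and the vanishing of the two cross terms is the one genuinely non-obvious point of the proof, which your plan never reaches. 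Relatedly, your induction would have to track the parity of $m$, since the surviving terms alternate between two different forms; and your uniform $2^{m/2}$ scalar is branch-dependent, as the $\sqrt{2}$ per star edge appears only on the $\ket{0}$ branch.
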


\newpage

\begin{proof}
    The diagram
    \begin{equation*}
        \tikzfig{5_Weighting_Algorithms/NOT_lemma/1}
    \end{equation*}
    with $m$ star edges, can be rewritten by using \refeq{star_triang_relation}, giving us
    \begin{align}
        &\tikzfig{5_Weighting_Algorithms/NOT_lemma/2} \labeq{stacks_repr_use_case} \\[1.0ex]
        = \ &\tikzfig{5_Weighting_Algorithms/NOT_lemma/3} \labeq{stacks_repr}
    \end{align}
    The star edges will always be connected to one of the two Z-spiders. We will sometimes refer to these two sides as \textit{stacks}. By decomposing
    both stacks using \refthm{speedy_elementary_decomp}, one obtains four terms, two of which cancel out as they will include an isolated X-spiders with
    a phase of $\pi$, which is equivalent to a zero scalar. The final two terms (for both $m$ being odd and even) are given by
    \begin{equation}
        \tikzfig{5_Weighting_Algorithms/NOT_lemma/4_full_odd}
    \end{equation}
    and
    \begin{equation}
        \tikzfig{5_Weighting_Algorithms/NOT_lemma/4_full_even}
    \end{equation}
    which can be brought into the desired form of \reflemma{NOT_obstruction}.
\end{proof}
Getting back to the reason why such CNOT blockades cannot be exploited like in the T-gate case, let us decompose the first star edge in \refeq{CNOT_example_2}
using \refeq{edge_decomp_1}:
\begin{equation*}
    \tikzfig{5_Weighting_Algorithms/CNOT_star_anti_example/1_1} + \tikzfig{5_Weighting_Algorithms/CNOT_star_anti_example/1_2}
\end{equation*}
Here, the left term can be further simplified to the following:
\begin{equation*}
    \tikzfig{5_Weighting_Algorithms/CNOT_star_anti_example/2_1}
\end{equation*}

\newpage

Note that this simplification pattern could continue indefinitely, if more CNOTs and star edges were to be attached. It would always end up with one Z-spider
connecting all star edges, thus it is also possible to get two terms for the left term. The right term can be simplified analogously\sidenote{One
has to use the \PiCommutation-rule in order to handle the X-spider with $\pi$-phase.}, also resulting in two terms. We can conclude that using the
CNOT-grouping technique for star edges does not improve upon traditional methods.

It is worth reminding ourselves that this was only one part of the original paper. The other focus was on a weighting algorithm that takes the graph beyond
the immediate neighbors into account. As already mentioned, our own implementation will only use partial simplifications, and therefore we will
see that \reflemma{NOT_obstruction} is still of importance.

\section{Studied Approach}
\labsec{studied_approach}

Let us begin by defining the concept of \textit{multi-control Toffoli gate dense quantum circuits}. In general, these are simply quantum circuits,
where multi-control Toffoli gates have a high probability of occurring across the whole circuit. We will start off by considering the simplest
case\sidenote{In fact, since all the gates in this configuration apply a (invertible) NOT-gate on the same qubit, it is in fact trivial. We will still
show this example for illustrative purposes.}
\begin{equation}
    \tikzfig{5_Weighting_Algorithms/multiple_multi_ctrl_toffs_qcirc}
\end{equation}
for $m$ multi-control Toffoli gates with $n$ control bits each. According to \refeq{multi_ctrl_toff_zx}, we can give the alternative ZX-diagram representation
\begin{equation*}
    \tikzfig{5_Weighting_Algorithms/multiple_multi_ctrl_toffs_zx1}
\end{equation*}
which can be simplified to
\begin{equation}
    \tikzfig{5_Weighting_Algorithms/multiple_multi_ctrl_toffs_zx2} \ .
\end{equation}
Here, there exist two possibilities for what we can decompose\sidenote{In either case, the star edges at the bottom will be simplified automatically.},
highlighted using pink and green boxes. We will refer to these individual spiders as \textit{potential master nodes}. For such simple cases, one
can create a simple heuristic: If $n\leq m$, then we decompose all pink potential master nodes, else, if $n>m$, we decompose all green potential master nodes. 

Of course, such circuits are in practice of little interest due to their low complexity and because they would only occur in rare cases\sidenote{Additionally,
if they were to occur, such a heuristic is of little use as this would automatically be implemented by a greedy algorithm as described before.}.
In \refsec{application_multiloop_feynman_diagrams}, we will look at a concrete example of a class of quantum circuits with a high density of multi-control Toffoli
gates. One pleasant property of this class is that there are no other gates apart from NOT gates\sidenote{Technically, a \textit{diffusion operator} is part
of it, but, as we will see, it can be treated separately.}.

Taking these properties into account, a weighting algorithm was developed. Contrary to traditional algorithms, it takes neighbors of neighbors into account.
Additionally, it considers patterns that could be helpful for further simplifications (see "\textit{extra\_weight}"), something which has not been done
in \cite{sutcliffeProcedurallyOptimisedZXDiagram2024}. As mentioned already, we only utilize a partial simplification strategy, that is, the iterative
application of \SpiderFusion and \StateCopy. Furthermore, we make use of certain preliminary simplifications, such as applying the stack representation,
which we used for the proof of \reflemma{NOT_obstruction}. Therefore, this algorithm is specifically meant for quantum circuits only consisting of multi-control Toffoli
gates and NOT gates.

The main part of the weighting algorithm is the function \textit{get\_master\_v\_weight}, which is part of the PoC\sidenote{The code can be found
in \cite{Tix3DevFeynman_loop_diagram_qsim}.} implemented for the concrete application discussed in \refsec{application_multiloop_feynman_diagrams}:
\begin{lstlisting}[language=Python]
def get_master_v_weight(self, g, v):
    extra_weight = 2
    total_weight = 0
    for l1_nv in g.neighbors(v):
        if g.phase(l1_nv) != star_phase:
            continue
        total_weight += 1
        l2_nv = list(g.neighbors(l1_nv))
        l2_nv.remove(v)
        if len(l2_nv) > 1:
            continue
        l2_nv = l2_nv[0]
        if g.type(l2_nv) != Z:
            continue
        l3_nv = list(g.neighbors(l2_nv))
        l3_nv.remove(l1_nv)
        for cur_v in l3_nv:
            if g.phase(cur_v) == star_phase:
                total_weight += 1
            elif not g.type(cur_v) in [Z,X]:
                continue
            elif g.vertex_degree(cur_v) > 2:
                continue
            else:
                extra_weight_valid = False
                if g.type(cur_v) == X and g.phase(cur_v) == 1:
                    extra_weight_valid = True
                cur_v = list(g.neighbors(cur_v))
                cur_v.remove(l2_nv)
                if len(cur_v) > 1:
                    continue
                cur_v = cur_v[0]
                if g.phase(cur_v) != star_phase:
                    continue
                total_weight += (extra_weight if extra_weight_valid
                                else 1)
    return total_weight
\end{lstlisting}
The algorithm assigns a weight to each potential master node, and will then decompose the one with the highest weight. It performs a depth-first
search. In other words, it starts with a new branch only after having finished the previous branch. Consider the following visualization:

The potential master node $v$ has a branch we want to check\sidenote{The other branches/edges are not shown.}:
\begin{equation*}
    \tikzfig{5_Weighting_Algorithms/algo_explanation/1}
\end{equation*}
We will only continue searching if the immediate neighbor is a star, in which case the weight is increased by one.
\begin{equation*}
    \tikzfig{5_Weighting_Algorithms/algo_explanation/2}
\end{equation*}
The searching will only continue if there is a Z-spider as its only neighbor, that can potentially have multiple neighbors. It only considers
Z-spiders as other patterns were not observed sufficiently enough.
\begin{equation*}
    \tikzfig{5_Weighting_Algorithms/algo_explanation/3}
\end{equation*}
Here, each neighbor is checked. The weight is increased by one if it is a star. Only if the neighbor has a degree of two, the search on that
branch is not terminated. If it is an X-spider with phase $0$, a Z-spider with phase $0$ or $\pi$ followed by a star, the weight is increased
by one. If it is an X-spider with phase $\pi$, then the weight is increased by \textit{extra\_weight}, which was chosen to be two. The reason
for this can be seen in the following illustration, where $v$ is decomposed using \refthm{elementary_decomp}:
\begin{equation*}
    \tikzfig{5_Weighting_Algorithms/algo_explanation/extra_weight_explainer}
\end{equation*}
In the end, it is clear why an extra weight is useful: An X-spider could remove stars that might follow, whereas a Z-spider could not.

\newpage

\section{Application for Multiloop Feynman Diagrams}
\labsec{application_multiloop_feynman_diagrams}

In a quest to find examples of multi-control Toffoli gate dense quantum circuits with concrete applications, we came across a whole class of
such circuits in \cite{ramirez-uribeQuantumQueryingBased2024}. One such example can be seen in \reffig{4eloop_12edges_qcirc}.
\begin{figure*}[!ht]
	\includegraphics{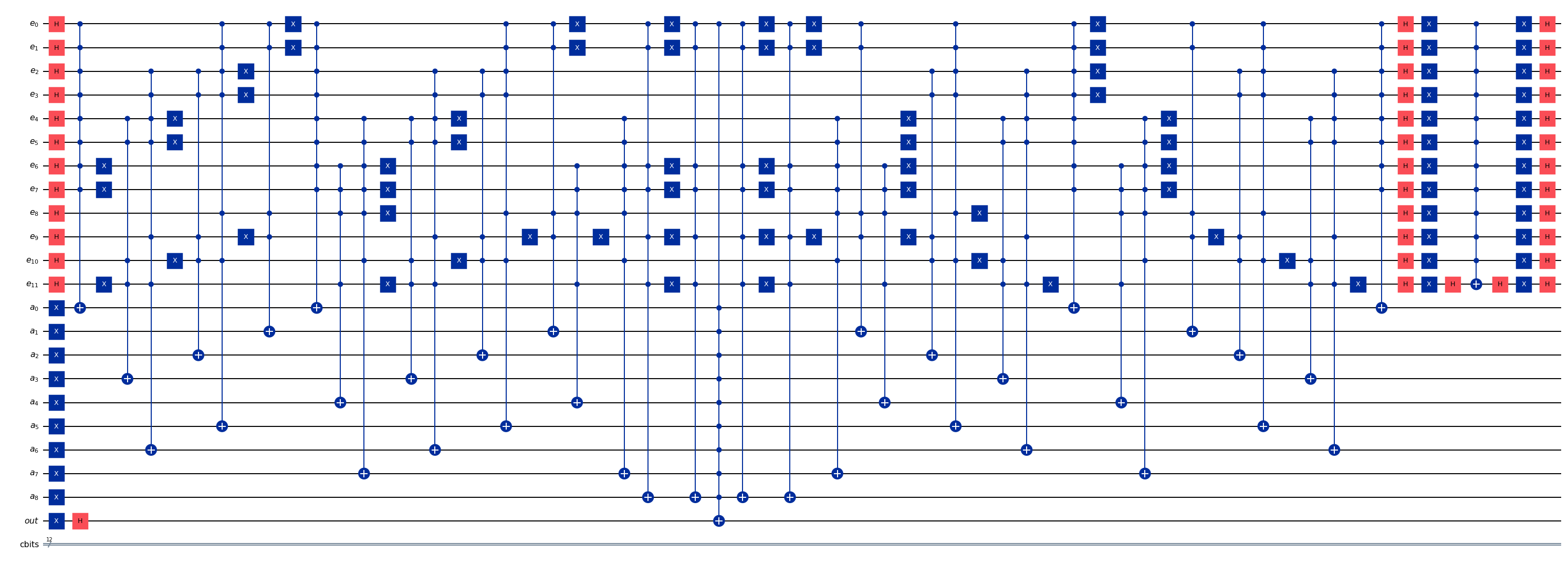}
	\caption[4 eloop 12 edges quantum circuit]{Multi-control Toffoli gate dense quantum circuit found in \cite{ramirez-uribeQuantumQueryingBased2024}.}
	\labfig{4eloop_12edges_qcirc}
\end{figure*}

Such quantum circuits implement a quantum algorithm based on \textit{Grover's algorithm}, a famous quantum search algorithm, which beats the classical
space-time complexity of $\mathcal{O}(N)$, only requiring $\mathcal{O}(\sqrt{N})$, for $N$ being the number of elements searched. More concretely,
it searches for \textit{causal singular configurations} of \textit{multiloop Feynman diagrams} \cite{ramirez-uribeQuantumQueryingBased2024}. The exact details of
the theory behind the algorithm lie beyond the scope of this thesis. However, the following will provide a brief overview.

In \refch{graphical_languages}, we very briefly introduced QFT and scattering processes. Calculating the scattering amplitude $\mathcal{A}$ of
an experiment conducted using a particle accelerator is one of the cornerstones of modern particle physics. One common approach is to use
perturbation theory, which expresses the amplitude as a power series, where each term corresponds to increasingly more complex interactions during
the scattering process. This allows physicists to approximate scattering processes by considering only the most significant contributions, typically
the first few terms in the series \cite{10.1093/acprof:oso/9780199699322.001.0001}. These terms can be represented and computed as Feynman diagrams.
However, such computations get cumbersome for more complex \textit{topologies}, in other words, when the number of \textit{external legs} in the Feynman diagrams
increases, or when the \textit{loop order} increases \cite{torresbobadillaLottyLooptreeDuality2021}.
\begin{example}
    The following illustrates the loop order, increasing from left to right:
    \begin{equation*}
        \tikzfig{5_Weighting_Algorithms/loop_order}
    \end{equation*}
    In this case, the number of external legs (colored red and blue) was kept constant. The number of couplings (colored green) did
    change, however \cite{bobadillaOffshellJacobiCurrents}.
\end{example}

\newpage

It is therefore an active area of research to find more efficient methods for such computations. One proposed alternative is based on the
\textit{loop-tree duality formalism}. The basic idea is to reinterpret complicated loop diagrams in terms of simpler tree-level
diagrams \cite{torresbobadillaLottyLooptreeDuality2021}. More mathematically\sidenote{Of course, all the following definitions are only listed for completeness, and
the reader is not expected to give them more thought. Interested readers should refer to \cite{10.1093/acprof:oso/9780199699322.001.0001},
\cite{ramirez-uribeQuantumQueryingBased2024} and \cite{ramirez-uribeQuantumAlgorithmFeynman2022}.}, one would start with \textit{generic loop integrals} and
\textit{scattering amplitudes}, which are written as
\begin{equation}
    \mathcal{A}_F^{(L)}=\int_{\ell_1 \ldots \ell_L} \mathcal{N}\left(\left\{\ell_s\right\}_L,\left\{p_j\right\}_P\right) \prod_{i=1}^n G_F\left(q_i\right)
\end{equation}
where $n$ is the number of propagators, $P$ is the number of external legs, $L$ is the number of loop momenta, $\ell_s$ with $s\in\{1,\dots,L\}$ are the primitive
loop momenta, $p_j$ with $j\in\{1,\dots,P\}$ are the external momenta, $q_i$ are the momenta flowing through the Feynman propagator $G_F$ and $\mathcal{N}$ is the
numerator given by the specific theory being used (\cf \cite{ListQuantumField2024}).

By applying \textit{Cauchy's residue theorem}, it is then possible to obtain
the \textit{loop-tree duality causal representation} \cite{ramirez-uribeQuantumQueryingBased2024}, which is written as
\begin{equation}\labeq{ltd_causal_representation}
    \mathcal{A}_D^{(L)}=\int_{\vec{\ell}_1 \ldots \vec{\ell}_L} \frac{1}{x_n} \sum_{\sigma \in \Sigma} \frac{\mathcal{N}_{\sigma\left(i_1, \ldots, i_{n-L}\right)}}{\lambda_{\sigma\left(i_1\right)}^{h_{\sigma\left(i_1\right)}} \cdots \lambda_{\sigma\left(i_{n-L}\right)}^{h_{\sigma\left(i_n\right)}}} +\left(\lambda_p^{+} \leftrightarrow \lambda_p^{-}\right),
\end{equation}
where, $x_n$ is a normalization factor, a given $\lambda$ is a causal propagator and a given $\sigma$ corresponds to a specific permutation, which is an element of
the set $\Sigma$, which in turn is determined by the causal configurations. This set is therefore crucial to ensure that only physical contributions are included in
the calculation.

Although the details are very advanced, the essence of Cauchy's residue theorem can be explained using a simple example, taken
from \cite{95CauchyResidue2017}.
\begin{theorem}\labthm{cauchy_residue}
    \cite{95CauchyResidue2017} Suppose $f(z)$ is analytic in the region $A$ except for a set of isolated singularities. Also suppose $C$ is a simple, closed curve in $A$ that
    does not go through any of the singularities of $f$ and is oriented counterclockwise. Then
    \begin{equation}
        \int_C f(z) d z=2 \pi i \sum_{k=1}^n \text { residue of } f \text { at } a_k,
    \end{equation}
    where the $a_k\in\mathbb{C}$ are the poles of the complex function $f$ lying inside $C$.
\end{theorem}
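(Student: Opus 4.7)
The plan is to reduce the statement to Cauchy's theorem on a suitably cut multiply connected region, together with the definition of the residue via a single contour integral around each pole.

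First I would verify that only finitely many singularities $a_1,\dots,a_n$ of $f$ lie inside $C$. Since $C$ is a simple closed curve, its interior is an open bounded set, hence its closure $K$ is compact. If infinitely many singularities lay in $K$, they would accumulate at some point of $K$, contradicting the hypothesis that the singularities are isolated and that $f$ is meromorphic on $A \supseteq K$. This reduces the sum on the right-hand side to a finite one, which is essential for the next steps.

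Next I would choose radii $\varepsilon_k > 0$ small enough that the closed disks $\overline{D(a_k,\varepsilon_k)}$ are pairwise disjoint and all contained in the interior of $C$, and denote by $C_k$ the boundary circle of $D(a_k,\varepsilon_k)$ oriented counterclockwise. The function $f$ is then holomorphic on an open neighbourhood of the compact region bounded outside by $C$ and inside by the circles $C_1,\dots,C_n$. Applying Cauchy's theorem on this multiply connected region, for example by introducing finitely many radial slits connecting $C$ to each $C_k$ and observing that the slit contributions cancel in pairs, yields
\begin{equation*}
\int_C f(z)\,dz \;=\; \sum_{k=1}^n \int_{C_k} f(z)\,dz .
\end{equation*}
The main obstacle here is the rigorous justification of the slit construction, since one must argue that the resulting contour bounds a simply connected region on which $f$ is holomorphic, and that the orientations conspire to give the stated identity. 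A cleaner alternative, if one has it available, is to invoke the homology version of Cauchy's theorem: the cycle $C - \sum_k C_k$ is null-homologous in $A \setminus \{a_1,\dots,a_n\}$, whence its integral against $f$ vanishes.

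Finally I would evaluate each $\int_{C_k} f(z)\,dz$ using the Laurent expansion of $f$ centred at $a_k$, say $f(z) = \sum_{m \in \mathbb{Z}} c_m^{(k)}(z-a_k)^m$, valid on a punctured neighbourhood of $a_k$ containing $C_k$. Termwise integration along $C_k$ (justified by uniform convergence on $C_k$) gives $\int_{C_k}(z-a_k)^m\,dz = 2\pi i$ if $m=-1$ and $0$ otherwise, so $\int_{C_k} f(z)\,dz = 2\pi i\, c_{-1}^{(k)}$, which is by definition $2\pi i$ times the residue of $f$ at $a_k$. Summing over $k$ and combining with the previous identity yields the claimed formula. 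The analytic content is therefore routine; the conceptual hard part is the topological step of relating the integral over $C$ to the sum of the integrals over the small circles $C_k$.
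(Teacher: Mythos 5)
The paper does not prove this theorem at all: it is quoted verbatim from \cite{95CauchyResidue2017} as background material, and the text immediately moves on to a worked example of applying it. So there is no in-paper argument to compare yours against. Your proof is the standard and correct one: finiteness of the singularities inside $C$ by compactness plus isolatedness, deformation of $C$ to small circles $C_k$ about each singularity (via slits or, more cleanly, the homology form of Cauchy's theorem applied to the null-homologous cycle $C - \sum_k C_k$), and termwise integration of the Laurent series to identify $\int_{C_k} f = 2\pi i\, c_{-1}^{(k)}$. Two small remarks. First, the statement as quoted only assumes $C$ lies in $A$; for your compactness argument (and indeed for the theorem to make sense) you also need the interior of $C$ to be contained in $A$, which you implicitly assume by writing $A \supseteq K$ --- worth flagging as a hypothesis you are reading into the statement. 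Second, the hypothesis allows arbitrary isolated singularities while the conclusion indexes the sum by \emph{poles}; your Laurent-series argument actually proves the stronger and more natural statement in which the sum runs over all isolated singularities inside $C$, with the residue defined as the coefficient $c_{-1}$, so no meromorphy assumption is needed and the appeal to it in your finiteness step can be dropped.
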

\begin{marginfigure}[*+7]
    \includegraphics{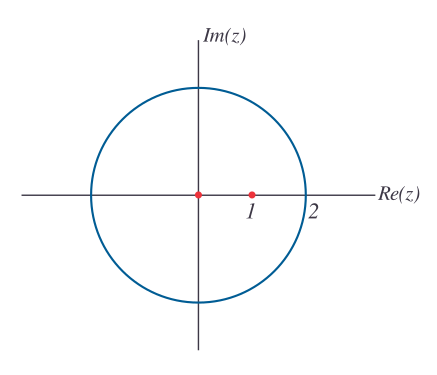}
    \caption[Contour for simple cauchys residue theorem example]{Contour of $f$ taken from \cite{95CauchyResidue2017}. Note that the contour has to be considered in counterclockwise direction.}
    \labfig{contour_for_cauchy_residue_example}
\end{marginfigure}
\begin{example}
    We want to compute the following integral:
    \begin{equation*}
        \int_{|z|=2} \frac{5 z-2}{z(z-1)} d z .
    \end{equation*}
    Let
    \begin{equation*}
        f(z)=\frac{5 z-2}{z(z-1)}.
    \end{equation*}
    In order to use \refthm{cauchy_residue}, we need to find the residue of $f$ at each of its poles. If $f$ has a \textit{simple pole}, then we can use
    a relatively simple formula for the residue \cite{10.1093/acprof:oso/9780199699322.001.0001}, given by
    \begin{equation}
        R(z_0)=\lim _{z \rightarrow z_0}(z-z_0) f(z).
    \end{equation}
    In our case we have two simple poles at $z=0$ and $z=1$, which, when plugged into $f$, would result in undefined behavior.
    \refthm{cauchy_residue} requires us only to use the residue at the poles lying inside the contour, which is true for both, as illustrated
    in \reffig{contour_for_cauchy_residue_example}.

    At $z=0$, we have
    \begin{equation*}
        R(0)=\lim_{z\rightarrow 0} (z-0)\frac{5 z-2}{z(z-1)}=\lim_{z\rightarrow 0} \frac{5 z-2}{z-1}=2.
    \end{equation*}
    And at $z=1$, we have
    \begin{equation*}
        R(0)=\lim_{z\rightarrow 1} (z-1)\frac{5 z-2}{z(z-1)}=\lim_{z\rightarrow 1} \frac{5 z-2}{z}=3.
    \end{equation*}

    We can now apply \refthm{cauchy_residue}:
    \begin{equation*}
        \int_C \frac{5 z-2}{z(z-1)} d z=2 \pi i[R(0)+R(1)]=10 \pi i .
    \end{equation*}
\end{example}

That being said, obtaining the causal representation in \refeq{ltd_causal_representation} requires finding causal configurations. This is exactly
what the algorithm searches for using the proposed quantum circuits.

The paper outlines how to create such a quantum circuit corresponding to a given topology using certain boolean functions. For example, the
two eloop topology with six edges corresponds to the following quantum circuit:
\begin{figure*}[!ht]
	\includegraphics{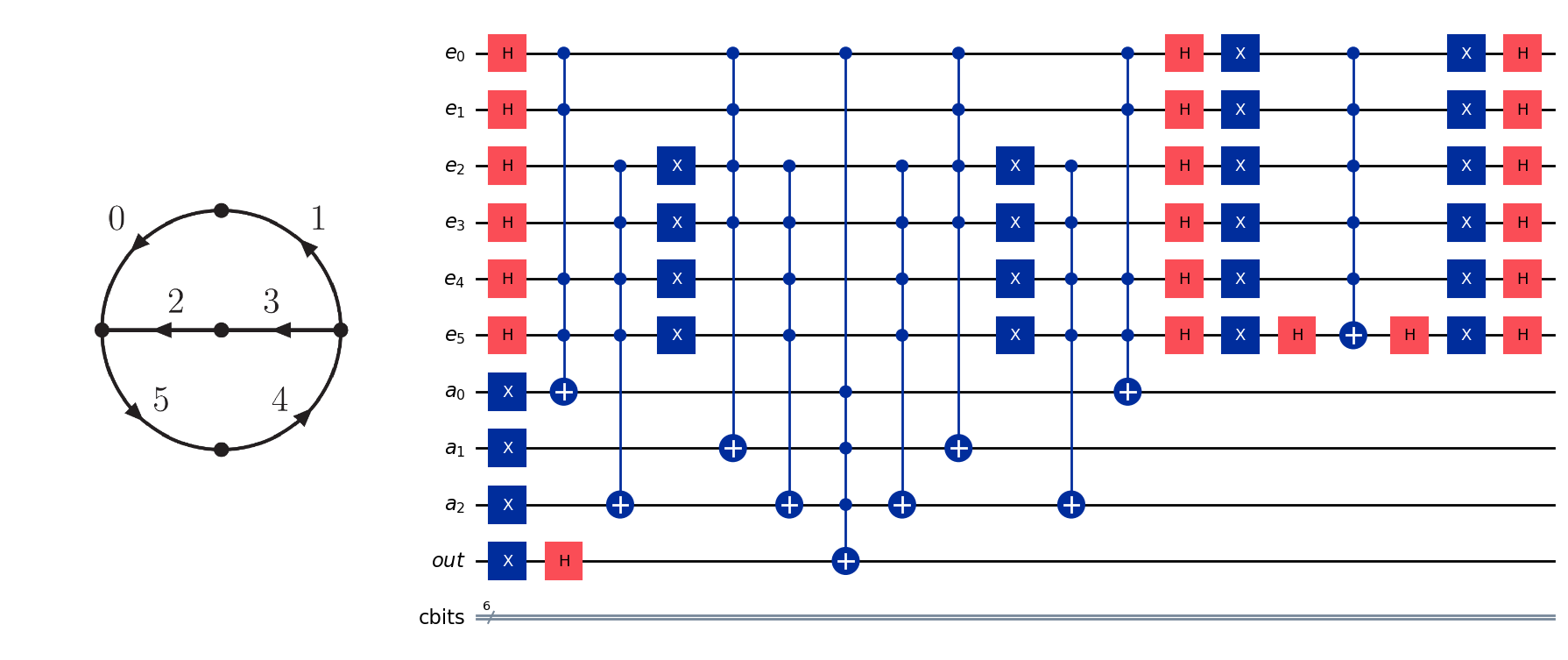}
	\caption[2 eloop 6 edges quantum circuit]{Two eloop topology with six edges (taken from \cite{ramirez-uribeQuantumQueryingBased2024})
    converted to a quantum circuit (generated in \cite{Tix3DevFeynman_loop_diagram_qsim}).}
	\labfig{2eloop_6edges_qcirc}
\end{figure*}

Note that the distinct right part of the quantum circuit\sidenote{We are referring to the right-most multi-control Toffoli gate surrounded by Hadamard
and NOT-gates.} is the \textit{diffusion operator}, a key aspect of Grover's algorithm \cite{GroversAlgorithm2024}.

In brief, Grover's algorithm is a \textit{quantum query algorithm} for unstructured search. This means that the search algorithm is oracle-based: Given
a function satisfying certain properties, we want to recover some information about the function\sidenote{But only with a certain probability.}. This should be done
using a minimal number of queries to the corresponding \textit{quantum oracle}\sidenote{This is similar to how the Oracle at Delphi would never reveal
everything she knows, but only what she is being asked.}.

\newpage

\begin{kaobox}[frametitle={Problem statement for Grover's algorithm}]
    Given a function $f:\{0,1\}^n\to\{0,1\}$, we can query it using the following quantum oracle given by the unitary operator $U_{\omega}$:
    \begin{equation*}
        \begin{cases}U_\omega|x\rangle=-|x\rangle & \text { for } x=\omega, \text { that is, } f(x)=1 \\ U_\omega|x\rangle=|x\rangle & \text { for } x \neq \omega, \text { that is, } f(x)=0\end{cases}
    \end{equation*}
    Alternatively, we can write
    \begin{equation}\labeq{quantum_oracle_def}
        U_\omega\ket{x}=(-1)^{f(x)}\ket{x} .
    \end{equation}
    We assume that only one $x$ satisfies $f(x)=1$.
    The goal is to identify $\omega$ with high probability \cite{GroversAlgorithm2024}.
\end{kaobox}

It can be seen that \refeq{quantum_oracle_def} is analogous to the oracle defined in \cite{ramirez-uribeQuantumQueryingBased2024}, which is given by
\begin{equation*}
    U_\omega\ket{e}\ket{a}\ket{out}=(-1)^{f(x)}\ket{e}\ket{a}\ket{out}
\end{equation*}

Grover's algorithm consists of the quantum oracle and the diffusion operator, which are applied $t$ times on a superposition of all $x$. With high probability,
we can say that the resulting state is $\omega$. The algorithm presented in \cite{ramirez-uribeQuantumQueryingBased2024} exhibits certain nuances. For example,
they demonstrated that the most effective results for most multiloop topologies are obtained when $t=1$. This means that we can simulate the quantum circuits
shown in \reffig{4eloop_12edges_qcirc} and \reffig{2eloop_6edges_qcirc} as they are, without the need to repeat the circuits.

The quantum circuit in \reffig{2eloop_6edges_qcirc} alongside with the quantum circuit in \reffig{4eloop_12edges_qcirc}, which corresponds to a four eloop topology with
twelve edges, were implemented using Qiskit \cite{qiskit2024} in order to verify the results from \cite{ramirez-uribeQuantumQueryingBased2024}.
Indeed, 23 causal configurations were found for the two eloop topology, and 1199 causal configurations for the four eloop topology, matching the
obtained values from the paper. The implementation can be found in \cite{Tix3DevFeynman_loop_diagram_qsim}.

The Proof-of-Concept for our algorithm described in \refsec{studied_approach} was implemented in \cite{Tix3DevFeynman_loop_diagram_qsim} as well.
The program uses a ZX-diagram representation of the two quantum circuits from before. It then applies the algorithm on the main part, that is, excluding
the diffusion operator. When it arrives at a fully simplified Clifford representation of all terms, highlighted in yellow, it will connect it
to the diffusion operator. This will then get simplified and decomposed if necessary. An example can be seen here:
\begin{equation*}
    \tikzfig{5_Weighting_Algorithms/example_diff_op_term_pre_simp} \ \ \leadsto \ \ \tikzfig{5_Weighting_Algorithms/example_diff_op_term_post_simp}
\end{equation*}
In this case, a decomposition would be necessary.

Once we are left with only Clifford terms, we have found our final decomposition. All terms will be converted to statevectors, which will then get
summed up, in order to obtain the final statevector. We can compute the probability for each state as described in \refch{quantum_computation}.
The resulting probability distribution can be seen in \reffig{prob_distr_output_comparison}.
\begin{figure*}[!ht]
	\includegraphics{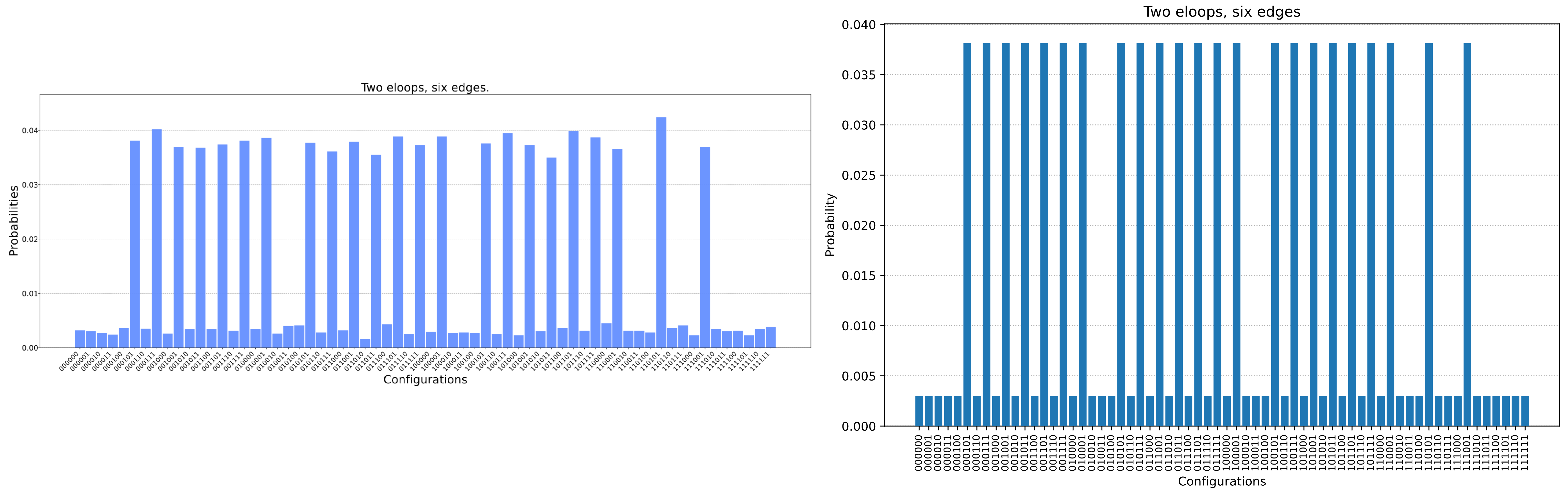}
	\caption[Probability distribution output comparison]{Left: Probability distribution from \cite{ramirez-uribeQuantumQueryingBased2024}.
    Right: Probability distribution generated using \cite{Tix3DevFeynman_loop_diagram_qsim}. Both are for the two eloop topology with six edges.}
	\labfig{prob_distr_output_comparison}
\end{figure*}

By counting the number of peaks\sidenote{This was done computationally by calculating a threshold given by the average of the maximum and the minimum
probability, and then checking which states were above this threshold.}, one can determine the number of causal configurations. Indeed, the plot
matches the probability distribution from the original paper.

Readers interested in the steps taken before applying the weighting algorithm should refer to the Jupyter notebook in \cite{Tix3DevFeynman_loop_diagram_qsim},
where by setting the first parameter of \textit{qalgo.run(show\_diagrams=True, check\_decomposition=True)} to \textit{True}, one can see each major
step of the program graphically. In brief, all the NOT gates in the quantum circuit are pushed to the sides by applying the \StateCopy-rule. Since
this can alter the form of the multi-control Toffoli gates\sidenote{It will create a structure similar to the one found in \refeq{stacks_repr_use_case}.},
we apply the stack representation described in \refeq{stacks_repr}, used for the proof of \reflemma{NOT_obstruction}. This makes the weighting process
easier, as all the edges of interest connected to the potential master nodes are of the same form\sidenote{This will result in two stacks per row.
Since a decomposition applied to one of the two stacks causes the same effect on the other, the two weights are treated
as one weight. Readers interested in more details should refer to the code implementation.}. Finally, the \StateCopy-rule is applied on all
present states as a preliminary simpliflication routine.

It should be noted, however, that the implementation utilizes W nodes (\cf \refsec{related_calculi}) to represent the stars, and placeholder numbers as
phases, which is a programmatic trick to mark potential master nodes. This is deprecated, and was therefore replaced in the subsequent implementation.
The better alternative is to use zero-labelled H-boxes (\cf \refeq{star_H_box_equiv}) and \textit{vdata}. Both features can be easily implemented using
PyZX \cite{kissingerPyZXLargeScale2020}, the main library used for both implementations.

Now, in order to compare our weighting algorithm to the algorithm developed in \cite{kochSpeedyContractionZX2023a}, the quizx version by Mark Koch
was modified to also work with non-scalar diagrams. This made it possible to decompose the quantum circuits for the two and four eloop topologies
using their algorithm. The implementation can also be found in \cite{Tix3DevFeynman_loop_diagram_qsim}.

The measured runtimes compare as follows:
\begin{equation*}
    \begin{array}{|l|l|l|}
        \hline & \text{Modified quizx} & \text{Our implementation} \\
        \hline \text{Two eloop topology} & \SI{0.0111416}{\second} & \approx\SI{1.2}{\second} \text{ (} \SI{0.035086}{\second} \text{ for} \\
        \text{with six edges} & & \text{weighting algorithm)} \\
        \hline \text{Four eloop topology} & \SI{2.0793388}{\second} & \approx\SI{81.9}{\second} \text{ (} \SI{1.807702}{\second} \text{ for} \\
        \text{with twelve edges} & & \text{weighting algorithm)} \\
        \hline
    \end{array}
\end{equation*}
It is clear that in this domain, no improvements could be made. This outcome is anticipated, given that our implementation is written in
Python, utilizing PyZX as the main library, whereas the modified quizx implementation is written in Rust, utilizing quizx. Both Rust and
quizx are known to achieve much faster performance than their Python counterpart \cite{PythonVSRust} \cite{ZxcalcQuizx2024}. For instance,
certain graph manipulations\sidenote{The part that took the longest to execute in the four eloop topology case was the translation of ZX-diagrams
into statevectors and subsequent vector computations.} in quizx might be $5700$ times faster than the corresponding
implementation in PyZX \cite{ZxcalcQuizx2024}. Additionally, our implementation is a very unoptimized Proof-of-Concept.

As mentioned in \cite{sutcliffeProcedurallyOptimisedZXDiagram2024}, it is difficult to give an exact formula for the space-time complexity
of such weighting algorithms, but the rough upper-bound will be similar to the polynomial one from the original paper, which should
be insignificant in comparison to the exponential process of producing stabilizer terms. This also matches the measurements of the time taken
only for the weighting algorithm. We conclude that the obtained results were expected and are thus not unsatisfactory.

We are primarily interested in the final count of stabilizer terms, as it is the only valid metric for concrete comparisons, being independent of
implementation details and thus ensuring an objective evaluation. The obtained results are as follows:
\begin{equation*}
    \begin{array}{|l|l|l|}
        \hline & \text{Modified quizx} & \text{Our implementation} \\
        \hline \text{Two eloop topology} & 52 \text{ terms} & 48 \text{ terms} \\
        \hline \text{Four eloop topology} & 1810 \text{ terms} & 1948 \text{ terms} \\
        \hline
    \end{array}
\end{equation*}
Using our algorithm, it was indeed possible to improve the final number of terms for the simpler topology, although only by a rather
small margin. For the more complex topology, the modified quizx version outperformed our algorithm by a similar relative margin. We suspect that
the main reason for not getting better results is our partial simplification strategy. The modified quizx version utilizes a full simplification strategy.
Since the smaller topology corresponds to a smaller quantum circuit, simplifications will not have a great influence. For bigger topologies and thus also
bigger circuits, simplifications can outweigh the expected improvement of a weighting algorithm. In \refsec{importance_of_simp_strat}, we will see compelling
evidence for this hypothesis by considering randomly generated quantum circuits with similar properties to those studied in this section.

\newpage

\section{Importance of Simplification Strategies}
\labsec{importance_of_simp_strat}

At the beginning of \refsec{basic_idea}, we outlined the reasoning behind our decision to utilize a partial simplification strategy instead of full
simplification approach. The goal was to mimic certain aspects of the original algorithm introduced in \cite{sutcliffeProcedurallyOptimisedZXDiagram2024}.
It was thought that their reasoning would also apply to our case, namely that full simplification strategies could destroy potentially useful patterns.
Our partial decomposition strategy was therefore able to use the idea of stacks, as described in the last section.
In \cite{sutcliffeProcedurallyOptimisedZXDiagram2024}, the authors speculated that more advanced simplification routines could have further contributed
to reducing the final number of terms, but due to a lack of evidence for this speculation, we decided to start with partial simplifications.

Despite this, in the last section we have seen preliminary evidence that our partial simplification strategy hinders us from getting a reliable improvement
over the modified quizx version. To strengthen the case for our hypothesis, we will discuss a benchmark in the subsequent discussion.

The benchmark will apply our algorithm and the modified quizx version on randomly generated multi-control Toffoli gate dense quantum circuits with
variable amounts of NOT gates, CNOT gates and multi-control Toffoli gates\sidenote{The target bits of the multi-control Toffoli gates are distributed over the
bottom qubits, in order to create a similar pattern to those found in the quantum circuits from \refsec{application_multiloop_feynman_diagrams}.}. Additionally,
the number of qubits will be varied. Each configuration will be sampled 50 times with different random seeds. If the execution of either algorithm takes more than
three minutes, it will be interrupted. Although the benchmark used parallelization, the computation took more than 20 hours.
\begin{remark}
    For the two considered topologies from \refsec{application_multiloop_feynman_diagrams} we checked that no star decompositions had affected the results.
    This benchmark, on the other hand, occasionally resulted in star decompositions being used. Since this gives a more realistic comparison between our
    algorithm and the state-of-the-art approach, we still accepted it. Additionally, it should be noted that the benchmark uses scalar diagrams, in order to compare
    our algorithm to the original quizx version by Mark Koch, and not to the modified version.
\end{remark}
Due to the three minute timeout, certain configurations contain less than 50 samples. The statistical relevance can therefore be distinguished according
to the marker used in the following plots: a cube for (relatively) high relevance, a big circle for medium relevance, a small circle for low relevance and a cross for no sample obtained.

Two plots were generated. The first one shows the average performance ratio per datapoint. This represents the ratio of the final number of terms given
by quizx to the final number of terms given by our algorithm. Therefore, a high performance ratio is favorable. The second plot shows the percentage
of improvements relative to the total number of relevant samples collected.

The source code for the benchmark can be found in \cite{Tix3DevRand_multi_ctrl_toff_dense_qcirc_sim}.

\newpage

The results can be seen in \reffig{benchmark_average_performance_ratio} and in \reffig{benchmark_percentage_of_improvements}.

\begin{figure*}[!ht]
	\includegraphics{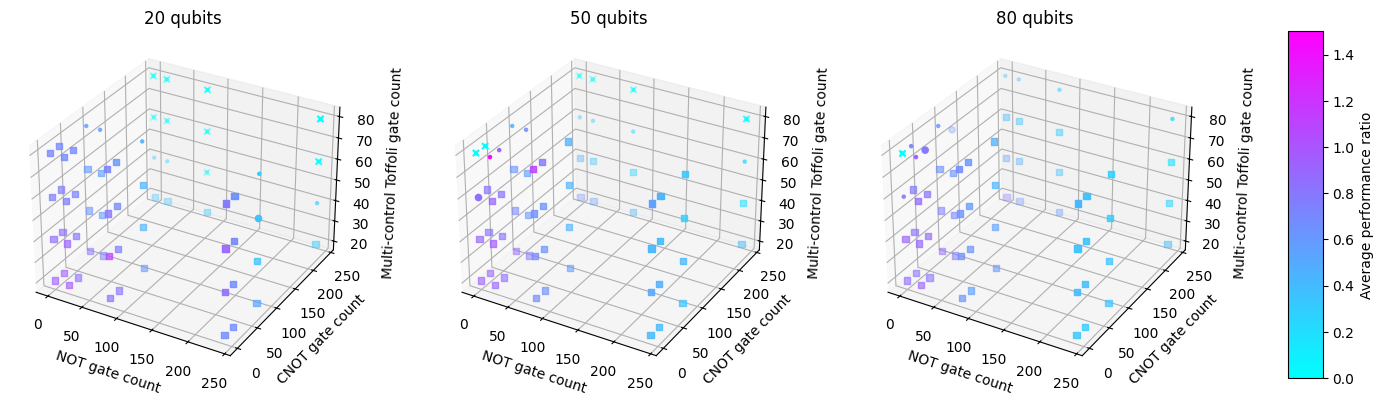}
	\caption[Average performance ratio]{Benchmark results regarding average performance ratio, generated using \cite{Tix3DevRand_multi_ctrl_toff_dense_qcirc_sim}.}
	\labfig{benchmark_average_performance_ratio}
\end{figure*}

\begin{figure*}[!ht]
	\includegraphics{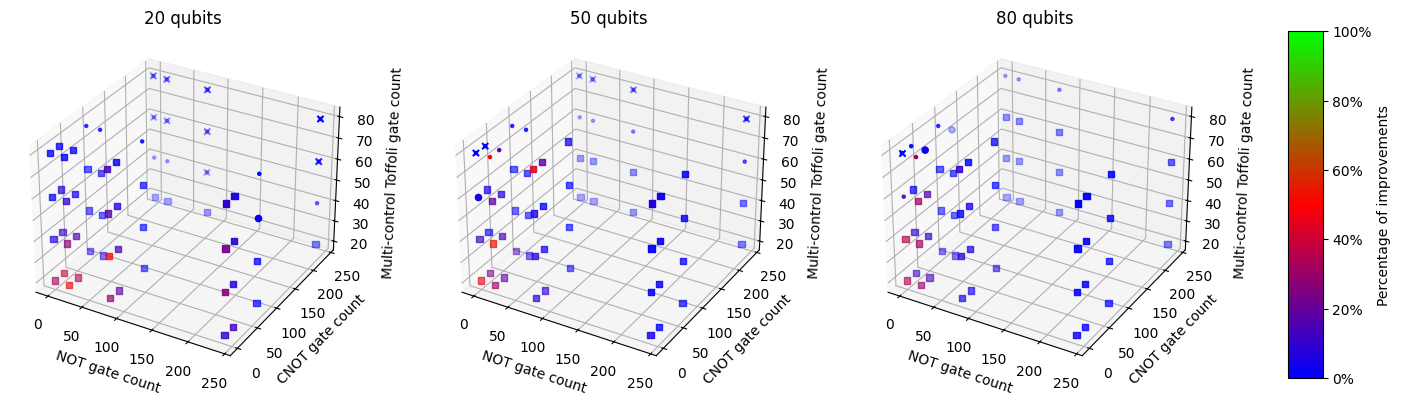}
	\caption[Percentage of improvements]{Benchmark results regarding percentage of improvements, generated using \cite{Tix3DevRand_multi_ctrl_toff_dense_qcirc_sim}.}
	\labfig{benchmark_percentage_of_improvements}
\end{figure*}

Examining \reffig{benchmark_average_performance_ratio}, we can see that overall, there are only a few occurences where the average performance ratio is greater
than one. Nevertheless, we can see that according to \reffig{benchmark_percentage_of_improvements}, our algorithm is \textit{capable} of delivering an advantage
across various configurations, at times yielding improvements for up to $50\%$ of the samples.

By looking at the vertical edges of the graphs\sidenote{We are referring to the configurations that have either $0$ or $240$ NOT gates, and $240$ or $0$ CNOT gates, respectively.},
one can see that a high NOT count combined with a low CNOT count yields on average a better performance than a high CNOT count combined with a low NOT count.
This aligns with our hypothesis, as partial simplifications can handle NOT gates relatively well, whereas only a full simplifcation strategy could reduce the added complexity
of CNOT gates effectively.

Overall, low counts of NOT and CNOT gates tend to yield the best results. Furthermore, increasing the number of multi-control Toffoli gates does not strongly affect
the average performance ratio. This is expected, since to our algorithm, increasing this number simply means increasing the degree of the potential master nodes, assuming
obstructions are handled sufficiently well by the simplification strategy.

Our algorithm was able to yield improvements for particularly many configurations in the 20 qubit case. This can be seen as further support for our hypothesis, as
more qubits allow for more simplifications to apply.

Finally, \reffig{benchmark_percentage_of_improvements} shows a very similar distribution to \reffig{benchmark_average_performance_ratio}, which demonstrates that there
were no significant outliers affecting the average performance ratio.

\newpage

To see what it would mean to apply a full simplification strategy, we first have to clarify what we mean by a full simplification. PyZX offers two main simplification routines,
the first one being \textit{clifford\_simp} and the second one being \textit{full\_reduce}. The latter applies \textit{clifford\_simp} alongside certain routines that make use
of \textit{gadgetization} \cite{kissingerPyZXLargeScale2020}. As the quizx implementation we are comparing against utilizes \textit{clifford\_simp} and furthermore
\textit{full\_reduce} does not work well with H-boxes that are used to represent the star edges, we will use \textit{clifford\_simp} for the following demonstration.

Consider the following randomly generated 17-qubit scalar diagram in \reffig{cliff_simp_demo1}, consisting of 25 NOT and CNOT gates, and six multi-control Toffoli gates.
\begin{figure*}[!ht]
	\includegraphics{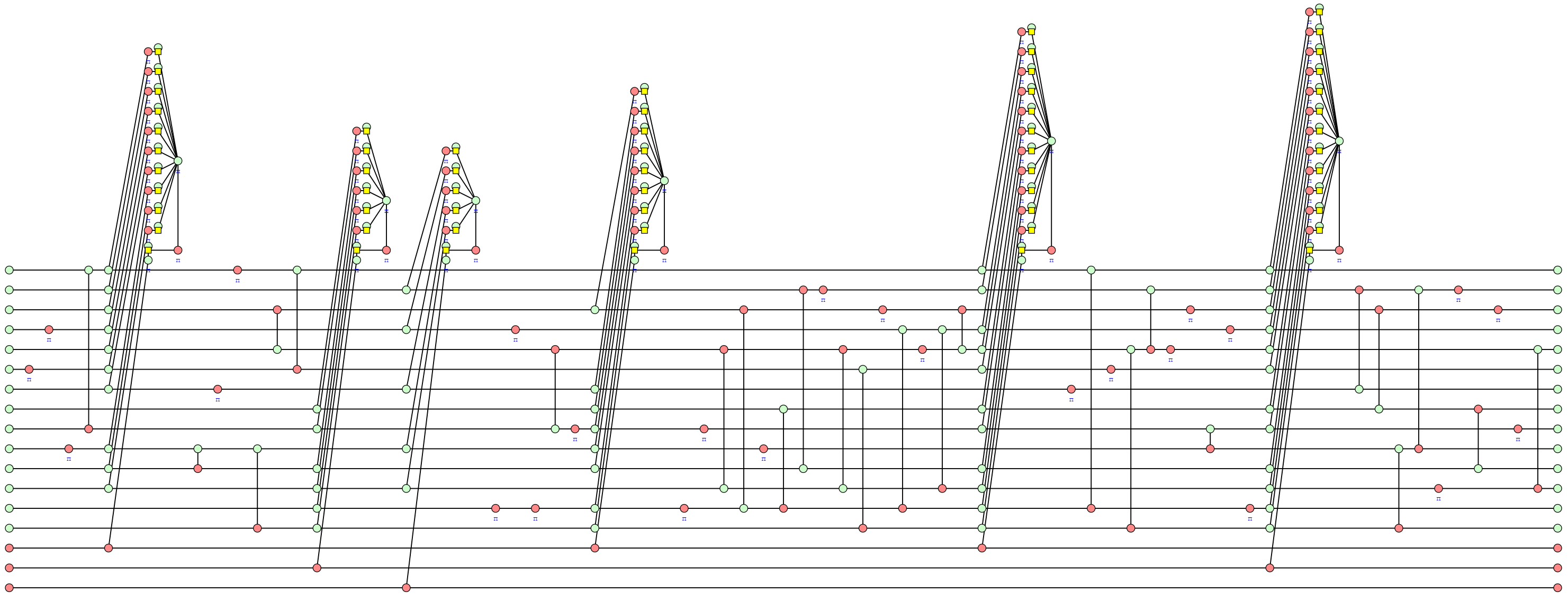}
	\caption[Initial 17-qubit scalar diagram]{The initial 17-qubit scalar diagram before applying any simplification strategy, generated
    in \cite{Tix3DevRand_multi_ctrl_toff_dense_qcirc_sim}.}
	\labfig{cliff_simp_demo1}
\end{figure*}

After applying our partial simplification strategy employed thus far, which includes an unsuccessful attempt at creating two stacks, we get the diagram
in \reffig{cliff_simp_demo2}.
\begin{figure*}[!ht]
	\includegraphics{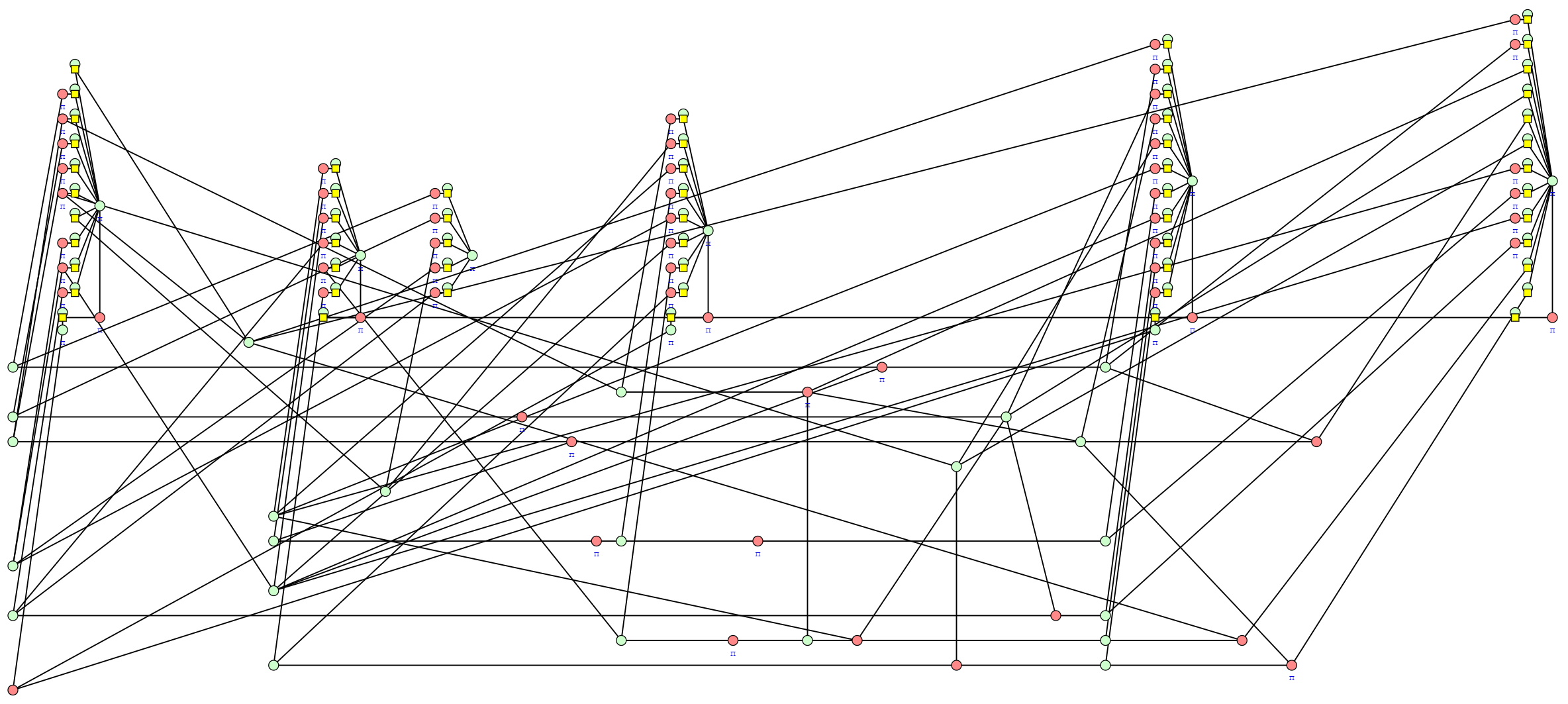}
	\caption[17-qubit scalar diagram after partial simplification]{The 17-qubit scalar diagram after applying our partial simplification strategy, generated
    in \cite{Tix3DevRand_multi_ctrl_toff_dense_qcirc_sim}.}
	\labfig{cliff_simp_demo2}
\end{figure*}

\newpage

Comparing this to the diagram obtained in \reffig{cliff_simp_demo3} after applying \textit{clifford\_simp}, we can observe significant differences.
\begin{figure*}[!ht]
	\includegraphics{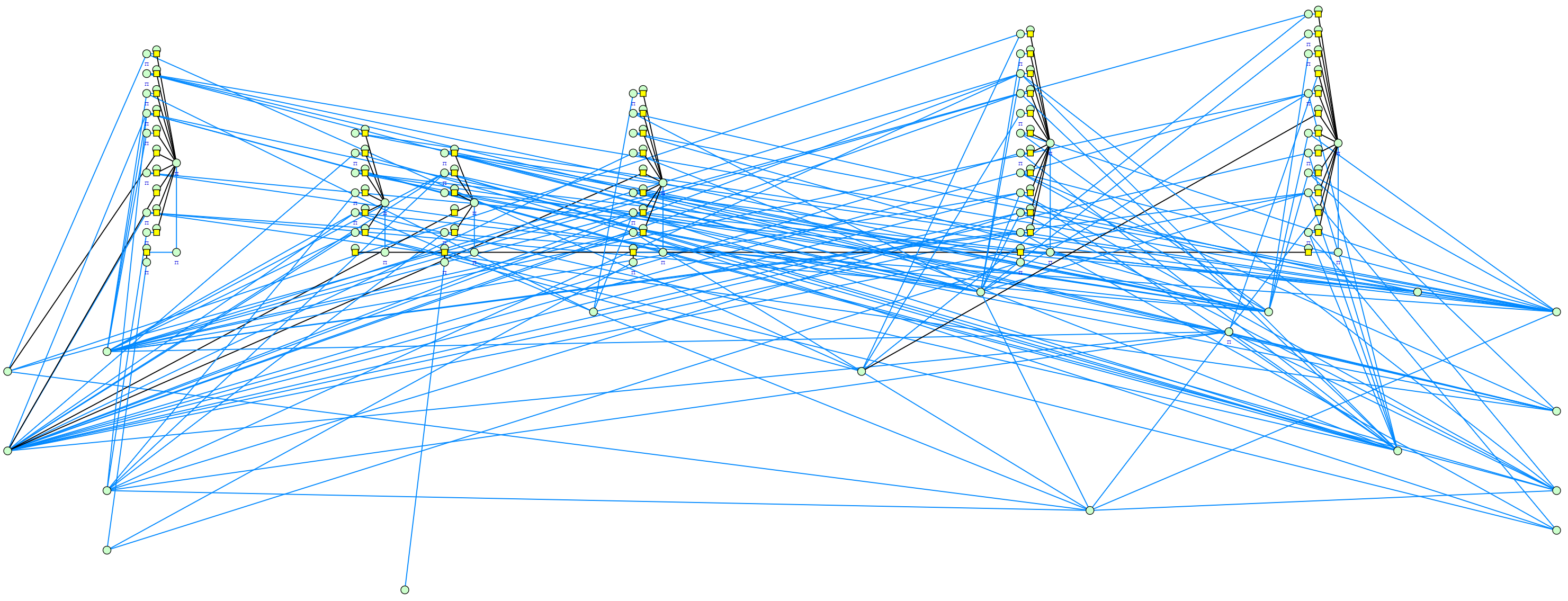}
	\caption[17-qubit scalar diagram after full simplification]{The 17-qubit scalar diagram after applying the full simplification strategy, generated
    in \cite{Tix3DevRand_multi_ctrl_toff_dense_qcirc_sim}.}
	\labfig{cliff_simp_demo3}
\end{figure*}

Most notably, the new diagram contains a lot more Hadamard edges. Additionally, there is a lot more clustering. In other words, this means that there are less spiders,
but higher vertex degrees. This is very important, as it allows a weighting algorithm to make a better prediction due to a bigger coverage of the diagram,
whilst not having to change the search depth. Nevertheless, this might require clarification, as it is easy to think that our partial simplification performs
very poorly. Concretely, our simplification strategy was made for a class of quantum circuits capable of searching for causal configurations, for which we know that
CNOT gates can practically never occur. This is why the diagram in \reffig{cliff_simp_demo2} is only poorly simplified. We added a lot of CNOT gates in order to
artificially increase the complexity of the circuit\sidenote{Instead of seeing the addition of CNOT gates as a tool to change the complexity without having to
change the number of qubits, we could see this as changing the class we initially set out to study, in order for it to now include different gates.}.

Given this evidence, we conclude that changing our algorithm such that it is capable of obtaining reliable improvements in the final number of terms would
require the use of \textit{clifford\_simp}.
This would then require changes to the weighting algorithm, as it was not intended to be used with Hadamard edges. A similar analysis to the one conducted
in \refsec{studied_approach} would need to reveal which patterns (that may include Hadamard edges) can be attributed to which weight. Ideally, this would be
implemented in our modified quizx version, in order to make the obtained speed-ups more practical and easier to compare. Additionally, this would also allow
the embedding of star decompositions and potentially even state decompositions into the weighting algorithm\sidenote{This would again require a detailed analysis of
post-simplification patterns.}. Finally, it is worth mentioning that it is not clear how big the actual improvement would be if we were to use such a refined algorithm.
As of right now, the improvements are significantly less than the ones observed in \cite{sutcliffeProcedurallyOptimisedZXDiagram2024}. This could be a general property
of our algorithm, or it might be because the benchmarks were conducted in a certain regime that is not suitable to demonstrate bigger improvements, similar to how in
\reffig{procedurally_results1} and \reffig{procedurally_results2} the obtained improvements for low T-counts is relatively small. In either case, it would
be interesting to investigate the effects of increasing the search depth. Since in practice the cost of searching is not negligible, it might be an option
to start with a big search depth and then only use more shallow searches afterwards. This might prevent going down a suboptimal branch in the \textit{tree}
of possible vertex decompositions.

\pagelayout{wide} % No margins
\addpart{Part IV: Conclusion}\labpart{part_iv}
\pagelayout{margin} % Restore margins

\setchapterpreamble[u]{\margintoc}
\chapter{Conclusion}
\labch{conclusion}

\section{Summary}

In this thesis, we have explored two primary research directions. The first direction concerns the state decompositions of non-stabilizer states made up
of star edges, a topic previously studied in the literature \cite{kochSpeedyContractionZX2023a} \cite{laakkonenGraphicalStabilizerDecompositions2022}.
Using simulated annealing, we were able to find five novel decompositions: three 5-to-6 decompositions and two 4-to-5 decompositions, with phases
constrained to $0$, $\frac{\pi}{2}$ or $-\frac{\pi}{2}$. We tested the software \cite{laakkonenTuomas56Cliffs2024} on an AWS EC2 C6a instance \cite{AmazonEC2C6a}
and observed that results were either found relatively quickly, or not at all. Therefore, we could not take advantage of the high-performance computer.

The second direction focused on dynamic decompositions and their application to weighting algorithms, drawing inspiration from
\cite{sutcliffeProcedurallyOptimisedZXDiagram2024}.
We demonstrated that their proposed idea of CNOT-grouping for the reduction of T-gates cannot be transferred to the reduction of star edges. Nevertheless, we
described a new weighting algorithm designed for multi-control Toffoli gate dense quantum circuits. This algorithm was tailored to a specific class of quantum
circuits, which is used to find causal configurations of multiloop Feynman diagrams. Thus, the algorithm only needs to take special care of NOT-gates, which
is accomplished by using a "stack" representation, that we introduced whilst studying the CNOT-grouping technique.

Our implementation successfully simulated the two tested topologies. A comparison to the implementation from \cite{kochSpeedyContractionZX2023a} revealed that
our algorithm could only provide an improvement in the final number of terms for the simpler topology. We hypothesized that this is due to our partial simplification
strategy, whereas they used a full simplification strategy.

In order to test our hypothesis, we generated 192 different types of random quantum circuits and sampled each type 50 times. The resulting data seems to
support our hypothesis.

\section{Outlook}

The results from working with a high-performance computer seem to suggest that future work on methods to find novel state decompositions
would require the use of a different approach than simulated annealing.

Similarly, the benchmarks of our weighting algorithm provide convincing evidence that future work should primarily consider using full
simplification strategies. Additionally, it would be interesting to see the effects of increasing the search depth, at least in the beginning of the procedure.

Whilst finishing this project, we made a final discovery: It turns out that the quizx implementation from \cite{kochSpeedyContractionZX2023a} can produce
reliable improvements when not using star decompositions. In other words, using star decompositions makes sense for certain quantum circuits such as
those used for barren plateau detection, but might be disadvantageous for the tested multi-control Toffoli gate dense quantum circuits. However, it might be
possible that a more clever algorithm could embed star decompositions such that no disadvantages are created, and potentially even create advantages by
incorporating it into a weighting algorithm. More information can be found in \cite{Tix3DevUnexpected_improv_of_speedy_algo}.

%----------------------------------------------------------------------------------------

\backmatter % Denotes the end of the main document content
\setchapterstyle{plain} % Output plain chapters from this point onwards

%----------------------------------------------------------------------------------------
%	BIBLIOGRAPHY
%----------------------------------------------------------------------------------------

% The bibliography needs to be compiled with biber using your LaTeX editor, or on the command line with 'biber main' from the template directory

\defbibnote{bibnote}{Here are the references in citation order.\par\bigskip} % Prepend this text to the bibliography
\printbibliography[heading=bibintoc, title=Bibliography, prenote=bibnote] % Add the bibliography heading to the ToC, set the title of the bibliography and output the bibliography note

\end{document}